\g@addto@macro\bfseries{\boldmath}
\g@addto@macro\mdseries{\unboldmath}
\g@addto@macro\normalfont{\unboldmath}
\g@addto@macro\rmfamily{\unboldmath}
\g@addto@macro\upshape{\unboldmath}
\renewcommand*{\multicitedelim}{\addcomma\space}
    \newlength{\temp@x}%
    \newlength{\temp@y}%
    \newlength{\temp@w}%
    \newlength{\temp@h}%
    \def\my@coords#1#2#3#4{%
      \setlength{\temp@x}{#1}%
      \setlength{\temp@y}{#2}%
      \setlength{\temp@w}{#3}%
      \setlength{\temp@h}{#4}%
      \adjustlengths{}%
      \my@pdfliteral{\strip@pt\temp@x\space\strip@pt\temp@y\space\strip@pt\temp@w\space\strip@pt\temp@h\space re}}%
      \def\my@pdfliteral#1{\pdfliteral page{#1}}
      \def\adjustlengths{}%
      \def\my@pdfliteral #1{}
      \def\adjustlengths{\setlength{\temp@h}{-\temp@h}\addtolength{\temp@y}{1in}\addtolength{\temp@x}{-1in}}%
    \def\Hy@colorlink#1{%
      \begingroup
        \ifHy@ocgcolorlinks
          \def\Hy@ocgcolor{#1}%
          \my@pdfliteral{q}%
          \my@pdfliteral{7 Tr}
        \else
          \HyColor@UseColor#1%
        \fi
    }%
    \def\Hy@endcolorlink{%
      \ifHy@ocgcolorlinks%
        \my@pdfliteral{/OC/OCPrint BDC}%
        \my@coords{0pt}{0pt}{\pdfpagewidth}{\pdfpageheight}%
        \my@pdfliteral{F}
        %
        \my@pdfliteral{EMC/OC/OCView BDC}%
        \begingroup%
          \expandafter\HyColor@UseColor\Hy@ocgcolor%
          \my@coords{0pt}{0pt}{\pdfpagewidth}{\pdfpageheight}%
          \my@pdfliteral{F}
        \endgroup%
        \my@pdfliteral{EMC}%
        \my@pdfliteral{0 Tr}
        \my@pdfliteral{Q}%
      \fi
      \endgroup
    }%
\colorlet{DarkRed}{red!50!black}
\colorlet{DarkGreen}{green!50!black}
\colorlet{DarkBlue}{blue!50!black}
\declaretheorem[numberwithin=section]{theorem}
\declaretheorem[numberlike=theorem]{lemma}
\declaretheorem[numberlike=theorem]{corollary}
\declaretheorem[numberlike=theorem]{definition}
\declaretheorem[numberlike=theorem]{claim}
\declaretheorem[numberlike=theorem, style=remark]{remark}
\declaretheorem[numberlike=theorem]{observation}
\crefname{algorithm}{Procedure}{Procedures}
\Crefname{algorithm}{Procedure}{Procedures}
\newcommand{\ignore}[1]{}
\newcommand{\etal}{et al.\xspace}
\newcommand{\del}{\delta}
\newcommand{\eps}{\epsilon}
\newcommand{\floor}[1]{\left\lfloor#1\right\rfloor}
\newcommand{\mmax}[2]{\max \left\{ {#1} , {#2} \right\} }
\newcommand{\PP}{\mathbf{Pr}}
\newcommand{\EX}{{\mathbf{E}}}
\newcommand{\cond}{ \ | \ }
\newcommand{\bad}{B}
\newcommand{\degree}{\operatorname{\normalfont\textsc{degree}}}
\newcommand{\mate}{\operatorname{\normalfont\textsc{mate}}}
\newcommand{\oldmate}{\operatorname{\normalfont\textsc{old-mate}}}
\newcommand{\LL}{{\normalfont\textsc{level}}}
\newcommand{\OO}{\mathcal{O}}
\newcommand{\EE}{\mathcal{E}}
\newcommand{\below}[1]{N_{< #1}}
\newcommand{\equal}[1]{N_{= #1}}
\newcommand{\prise}{p^{\mathrm{rise}}}
\newcommand{\preset}{p^{\mathrm{reset}}}
\newcommand{\resetprob}[1]{1/4^{#1+3}}
\newcommand{\xreset}{X^{\text{reset}}}
\newcommand{\xrise}{X^{\text{rise}}}
\newcommand{\efall}{\EX^{\mathrm{fall}}}
\newcommand{\egenericsettle}{\EX^{\mathrm{settle}}}
\newcommand{\eprocessfree}{\EX^{\mathrm{free}}}
\newcommand{\emax}{\EX^{\mathrm{max}}}
\newcommand{\tstar}{t^{*}}
\newcommand{\xsettle}{X^{\text{settle}}}
\newcommand{\xcgrs}{X^{\text{crs}}}
\newcommand{\spast}{S_{\text{past}}}
\newcommand{\hook}{\operatorname{hook}}
\title{A Deamortization Approach for Dynamic Spanner and Dynamic Maximal Matching\thanks{A preliminary version of this article was presented at the 30th Annual ACM-SIAM Symposium on Discrete Algorithms (SODA 2019).}}
\author{
Aaron Bernstein\thanks{Rutgers University, Department of Computer Science, USA. Work done in part while at TU Berlin and while visiting University of Vienna.} \and
Sebastian Forster\thanks{University of Salzburg, Department of Computer Sciences, Austria. Work done in part while at University of Vienna. This author previously published under the name Sebastian Krinninger.} \and
Monika Henzinger\thanks{University of Vienna, Faculty of Computer Science, Austria.}
}
\date{}
\begin{document}
\maketitle
\begin{abstract}
Many dynamic graph algorithms have an \emph{amortized} update time,
rather than a stronger \emph{worst-case} guarantee.
But amortized data structures are not suitable for real-time systems, where each individual operation has to be executed quickly. 
For this reason, there exist many recent randomized results that aim to provide a guarantee stronger than amortized
expected. 
The strongest possible guarantee for a randomized algorithm is that it is always correct (Las Vegas), and has \emph{high-probability worst-case} update time, which gives a bound on the time for each individual operation that holds with high probability.  

In this paper we present the first polylogarithmic high-probability worst-case time bounds for the dynamic spanner and the dynamic maximal matching problem. 
\begin{enumerate}
\item For dynamic spanner, the only known $o(n)$ worst-case bounds were
$O(n^{3/4})$ high-probability worst-case update time for maintaining a 3-spanner and $O(n^{5/9})$ for maintaining a 5-spanner. We give a $O(1)^k \log^3 (n)$  high-probability worst-case time bound for maintaining a $(2k-1)$-spanner, 
which yields the first worst-case polylog update time for all constant~$k$. (All the results above maintain the optimal tradeoff of stretch $2k-1$ and $\tilde{O}(n^{1+1/k})$ edges.) 
\item For dynamic \emph{maximal} matching, or dynamic $2$-approximate maximum matching, 
no algorithm with $o(n)$ worst-case time bound was known and we present an algorithm with $O(\log^5 (n))$ high-probability worst-case time; similar worst-case bounds existed only for maintaining a matching that was $(2+\eps)$-approximate, and hence not maximal.
\end{enumerate}

Our results are achieved using
a new approach for converting amortized guarantees to worst-case ones for randomized data structures
by going through a third type of guarantee, which is a middle ground between the two above: an algorithm is 
said to have \emph{worst-case expected} update time $\alpha$ if for \emph{every} update $\sigma$, the expected time to process $\sigma$ is at most 
$\alpha$.
Although stronger than amortized expected, the worst-case expected guarantee does
not resolve the fundamental problem of amortization: a worst-case expected update time of $O(1)$ still allows for the possibility that every 
$1/f(n)$ updates requires $\Theta(f(n))$ time to process, for arbitrarily high $ f(n) $.
In this paper we present a \emph{black-box} reduction that converts any data structure with worst-case expected 
update time into one with a high-probability worst-case update time: the query time remains the same, while 
the update time increases by a factor of $O(\log^2(n))$.

Thus we achieve our results in two steps:
(1) First we show how to convert existing dynamic graph algorithms with {\em amortized} expected polylogarithmic running times into algorithms with {\em worst-case expected} polylogarithmic running times.
(2) Then we use our black-box reduction to achieve the polylogarithmic high-probability worst-case time bound.
All our algorithms are Las-Vegas-type algorithms.

\end{abstract}

\section{Introduction}
 
A \emph{dynamic graph algorithm} is a data structure that maintains information in a graph that is being modified
by a sequence of edge insertion and deletion operations. 
For a variety of graph properties there exist dynamic graph algorithms for which amortized expected time bounds are known and the main challenge is to de-amortize and de-randomize these results. Our paper addresses the first challenge: de-amortizing dynamic data structures. 

An amortized algorithm guarantees a small average update time for a ``large enough'' sequence of operations: dividing the total time for $T$ operations by $T$ leads to the \emph{amortized time} per operation.
If the dynamic graph algorithm is randomized, then 
the \emph{expected total time} for a sequence of operations is analyzed,
giving a bound on the \emph{amortized expected time} per operation. 
But in real-time systems, where each individual operation has to be executed quickly, we need a stronger guarantee than amortized expected time for randomized algorithms. The strongest possible guarantee for a randomized algorithm is that it
is always correct (Las Vegas), and has \emph{high-probability worst-case} update time, 
which gives an upper bound on the time for \emph{every} individual
operation that holds with high probability.
(The probability that the time bound is not achieved should be polynomially small in the problem size.)
There are many recent results which provide randomized data structures
with worst-case guarantees 
(see e.g.~\cite{Sankowski04,KapronKM13,GibbKKT15,AbrahamDKKP16,BodwinK16,AbrahamCK17,NanongkaiSW17,CharikarS18,ArarCCSW18}), 
often via a complex ``deamortization'' of previous results.

In this paper we present the first algorithms with worst-case polylog update time for two classical problems in the dynamic setting: dynamic spanner, and dynamic maximal matching. In both cases, polylog \emph{amortized} results were already known, but the best worst-case results required polynomial update time. 

Both results are based on a new de-amortization approach for randomized dynamic graph algorithms. We bring attention to a third possible type of guarantee:  an algorithm is said to have \emph{worst-case expected} update time $\alpha$ if for \emph{every} update $\sigma$, the expected time to process $\sigma$ is at most~$\alpha$. On its own this guarantee does not resolve the fundamental problem of amortization, since a worst-case expected update time of $O(1)$ still allows for the possibility that every 
$1/f(n)$ updates requires $\Theta(f(n))$ time to process, for arbitrarily high $ f(n) $.
But by using some relatively simple probabilistic bootstrapping techniques, we show a \emph{black-box} reduction that converts any algorithm with a worst-case expected update time into one with a high-probability worst-case update time. 

This leads to the following deamortization approach: rather than directly aiming for high-probability worst-case, first aim for the weaker worst-case expected guarantee, and then apply the black-box reduction. Achieving such a worst-case expected bound can involve serious technical challenges, in part because one cannot rely on the standard charging arguments used in amortized analysis. We nonetheless show how to achieve such a guarantee for both dynamic spanner and dynamic maximal matching, which leads to our improved results for both problems.

\paragraph{Details of the New Reduction.}
We show a black-box conversion of an algorithm with worst-case expected update time
into one with worst-case high-probability update time: the worst-case query time remains the same,
while the update time increases by a $\log^2(n)$ factor.
Our reduction is quite general, but with our applications to dynamic graph algorithms in mind, we restrict ourselves to dynamic data structures that support only two types of operations: (1) \emph{update} operations, which manipulate the internal state of the data structure, but do not return any information, and (2) \emph{query} operations, which return information about the internal state of the data structure, but do not manipulate it. We say the data structure has \emph{update time $\alpha$} if the maximum update time of any type of update (e.g.~insertion or deletion) is $\alpha$.

\begin{theorem}
\label{thm:main}
Let $A$ be an algorithm that maintains a dynamic data structure $D$
with worst-case expected update time $\alpha$ for each update operation and
let $n$ be a parameter such that the maximum number of items stored in the data structure at any point in time is polynomial in $n$.
We assume that for any set of elements $S$ such that $|S|$ is polynomial in $n$, a new version of the data
structure $D$ containing exactly 
the elements of $S$ can be constructed in polynomial time. 
If this assumption holds, then there exists an algorithm $A'$ with the following properties:
\begin{enumerate}
\item \label{item:main-worst}
For any sequence of updates $ \sigma_1, \sigma_2, \ldots, $ $A'$ processes each update $\sigma_i$ in $O(\alpha \log^2(n))$ time
with high probability.
The amortized expected update time of $A'$ is $O(\alpha\log(n))$.
\item \label{item:main-correctness}
$A'$ maintains $\Theta(\log(n))$ data structures $D_1, D_2, ..., D_{\Theta(\log(n))}$,
as well as a pointer to some $D_i$ that is guaranteed to be correct at the current time. Query operations are answered with $D_i$.
\end{enumerate}
\end{theorem}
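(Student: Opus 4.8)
\emph{The plan.} I would run $k = \Theta(\log n)$ independent copies $D_1,\ldots,D_k$ of the data structure, each executing $A$ with its own fresh random bits, equip each copy with a private FIFO buffer of not‑yet‑processed updates, and amortize any ``slow'' run of $A$ on a copy over the updates subsequently fed to \emph{that} copy, relying on the other (independent) copies to keep the query pointer current. When $\sigma_i$ arrives I append it to every buffer and then do two rounds of work. First, over the copies that are \emph{caught up} (their buffer was empty before $\sigma_i$), I run a round‑robin giving each such copy up to $\Theta(\alpha)$ units of work on $\sigma_i$ and dropping it from the round‑robin once it finishes $\sigma_i$; a caught‑up copy that completes $\sigma_i$ within this cutoff stays caught up, one that does not becomes \emph{behind}, keeping the unfinished remainder in its buffer. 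Second, I give every behind copy a recovery budget of $R=\Theta(\alpha\log n)$ units of work to drain its buffer; if its buffer empties it is caught up again. In the RAM model, pausing a procedure after a bounded number of steps and resuming it later is unproblematic, so all of this is deterministic and costs $O(\alpha\log n)$ for the caught‑up copies ($\Theta(\alpha)$ each, $O(\log n)$ of them) plus $O(\alpha\log^2 n)$ for the behind copies ($R$ each, $O(\log n)$ of them), i.e.\ $O(\alpha\log^2 n)$ per update. Queries are answered through any caught‑up copy; since each copy runs the Las‑Vegas algorithm $A$, a copy that has processed $\sigma_1,\ldots,\sigma_i$ is a correct data structure at time $i$, which gives the ``$\Theta(\log n)$ structures plus a pointer to a correct one'' conclusion.

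\emph{Correctness and timing.} Because $A$ has cost at most $\alpha$ in expectation \emph{in every state}, Markov's inequality gives that a caught‑up copy stays caught up with probability $\ge 1/2$ per step, and that a copy falls behind with residual backlog more than $R$ only with probability $O(\alpha/R)=O(1/\log n)$; a copy whose backlog is at most $R$ is drained within the same step, so at the end of a step a copy is behind exactly if it is in a ``long recovery'' (backlog still exceeding $R$). The key estimate: for a fixed copy and step $t$, the probability of being in long recovery at $t$ is at most $\sum_{\ell\ge1}\PP[X_{t-\ell}>\ell R]\le\sum_{\ell\ge1}\alpha/(\ell R)=O\bigl((\alpha\log n)/R\bigr)$, where $X_{t-\ell}$ is the cost of running $A$ on the update $\ell$ steps earlier in that copy and I use that the update sequence has polynomial length; choosing the constant in $R=\Theta(\alpha\log n)$ large enough makes this a small constant, and since the copies are probabilistically independent a Chernoff bound caps the number of copies in long recovery by $k/2$ with high probability. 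A union bound over the polynomially many updates then shows that, with high probability, at every step at least $\Omega(\log n)$ copies are caught up (so the query pointer always has a target) and the $O(\alpha\log^2 n)$ budget is never exceeded. For the amortized expected bound I would observe that over the whole run each copy does exactly the work of executing $A$ once on each update, merely time‑sliced, so the total expected work is $O(m\alpha\log n)$ over $m$ updates, i.e.\ $O(\alpha\log n)$ amortized expected per update.

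\emph{Degenerate events.} To keep $A'$ always correct (Las Vegas) I would add a fallback for the low‑probability events that at some step no copy is caught up, or that a copy's backlog has grown super‑polynomially large (possible since a single run of $A$ has no deterministic time bound): there I invoke the hypothesis that a fresh copy of $D$ on a prescribed element set can be built in polynomial time, rebuild the offending copy (or a spare) from scratch for the current element set, and point the query pointer at it. This costs a slow update, but only on events that, by the bounds above, occur with probability below the target failure probability, so the high‑probability worst‑case bound is untouched.

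\emph{Main obstacle.} The delicate point is precisely the control of the behind copies: since $A$'s cost is bounded only in expectation it may be heavy‑tailed, a copy that falls behind can stay behind for many steps, and I must balance the round‑robin cutoff $\Theta(\alpha)$ against the recovery budget $\Theta(\alpha\log n)$ so that the ``few copies in long recovery'' estimate, the $O(\alpha\log^2 n)$ worst‑case‑with‑high‑probability bound, and the $O(\alpha\log n)$ amortized expected bound all hold simultaneously; establishing the harmonic‑sum tail estimate above, and verifying that the copies remain probabilistically independent despite the shared scheduler, is the technical heart of the argument.
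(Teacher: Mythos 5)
Your overall architecture is exactly the paper's: $\Theta(\log n)$ independently seeded buffered copies, a fixed per-step work budget per copy, a per-copy constant lower bound on the probability of being caught up at any fixed time, a Chernoff/independence argument across copies, and a rare expensive fallback when all copies are behind. The amortized-expected bound via time-slicing and the use of the preprocessing assumption for degenerate events are also fine. The problem is the key per-copy tail estimate, which is where all the difficulty of this theorem lives.

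Your claimed implication ``copy is in long recovery at step $t$ $\Rightarrow$ $\exists \ell\ge 1:\ X_{t-\ell}>\ell R$'' is false, so the harmonic-sum bound $\sum_{\ell\ge1}\PP[X_{t-\ell}>\ell R]\le\sum_{\ell\ge1}\alpha/(\ell R)$ does not bound the probability of being behind. The backlog obeys the Lindley-type recursion $B_t=\max(B_{t-1}+X_t-R,0)$, so $B_t>0$ is equivalent to the existence of a window with $\sum_{j=t-\ell+1}^{t}X_j>\ell R$; this is a condition on \emph{window sums}, not on individual updates. Concretely, take $X_{t-m}=2R$ and $X_{t-m+1}=\cdots=X_t=R$ for some $m\ge2$: the backlog stays at $R$ throughout, the copy is behind at step $t$, yet $X_{t-\ell}\le \max(2R,\ell R)\le \ell R$ fails to exceed $\ell R$ for every $\ell\ge 2$ and $X_{t-1}=R\not>R$, so none of your events occurs. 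Repairing this by union-bounding over window sums and applying Markov to each window gives $\sum_{\ell}\ell\alpha/(\ell R)=L\alpha/R$ over a length-$L$ sequence, which is polynomially large and useless. The missing idea is the paper's Lemma~\ref{lem:counter}: group the possible starting points of the busy period into $O(\log L)$ dyadic windows $(t+1-2^{i+1},\,t+1-2^{i}]$, note that a critical moment in window $i$ forces $\sum_{t+1-2^{i+1}\le j\le t}X_j\ge 2^i r$, and apply Markov to each of these $O(\log L)$ window sums, each contributing at most $2\alpha/r\le 1/(2\log L)$ when $r=4\alpha\log L$. This is also exactly why the extra $\log$ factor in the per-step budget is necessary (the paper exhibits a matching lower-bound distribution), so no choice of constants in your $R=\Theta(\alpha\log n)$ can be justified by Markov on single updates alone. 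A secondary, smaller omission: you assume a polynomial-length update sequence throughout; extending to unbounded sequences requires the periodic-rebuilding argument the paper gives in the proof of Theorem~\ref{thm:main}, which is the actual place the polynomial-preprocessing hypothesis is consumed.
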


The theorem applies to any dynamic data structure, but we will apply it to dynamic graph algorithms. 
Due to its generality, however, we expect that the theorem will prove useful in other settings as well.
When applied to a dynamic graph algorithm, 
$n$ denotes the number of vertices, and
at most $n^2$ elements (the edges) are stored at any point in time.
Note that our assumption about polynomial preprocessing time for any polynomial-size set of elements $S$
is satisfied by the vast majority of data structures, and is in particular satisfied by all 
dynamic graph algorithms that we know of.

Observe that a high-probability worst-case update time bound of $O(\alpha \log^2(n))$ allows us to stop the algorithm whenever its update time exceeds the $O(\alpha \log^2(n))$ bound and in this way obtain an algorithm that is correct with high probability.

\begin{remark}
\label{remark:main}
By Item~\ref{item:main-correctness}, the converted algorithm $A'$ stores a slightly different
data structure than the original algorithm $A$, because it maintains $O(\log(n))$ copies $D_i$ of the
data structure in $A$. The data structure in $A'$ is equally powerful to that in $A$
because it can answer all the same queries in the same asymptotic time: $A'$ always has a pointer to some $D_i$ that is guaranteed to be fixed, so it can use $D_i$ to answer queries. The main difference is that the answers produced by $A'$ may have less ``continuity'' than those produced by $A$:
for example, in a dynamic maximal matching algorithm, if each query outputs the entire maximal matching, then a single update may change the pointer in $A'$
from some $D_i$ to some $D_j$, and $A'$ will then output a completely different maximal matching before and after the update. 
However, combining $A'$ with the very recent black-box reduction in \cite{SolomonS18} we can turn $A'$ into a ``continuous'' one at the cost of an extra $(1+\eps)$ factor in the approximation.
By applying this reduction with $ \epsilon' = \epsilon/2 $ we obtain a fully dynamic algorithm for maintaining a matching with an approximation factor of $ 2 (1+\eps/2) = (2+\eps)$ and a high-probability worst-case update time of $O(\log^6(n) + 1/\epsilon)$.
(See the end of Section~\ref{sec:matching} for more explanations.)

Note that this issue does not arise in our dynamic spanner algorithm, as the spanner is formed by the union of the spanners of all copies.
\end{remark}

\paragraph{First Result: Dynamic Spanner Maintenance.} 
Given a graph $G$, a spanner $H$ with stretch $\alpha$ is a subgraph of $G$ such that for any pair of vertices $(u,v)$,
the distance between $u$ and $v$ in $H$ is at most an $\alpha$ factor larger than their distance in $G$.
In the dynamic spanner problem the main goal is to maintain, for any given integer~$ k \geq 2 $, a spanner of stretch $ 2k - 1 $ with $ \tilde O (n^{1 + 1/k}) $ edges; we focus on these particular bounds because
spanners of stretch $ 2k - 1 $ and $ O (n^{1 + 1/k}) $ edges exist for every graph~\cite{Awerbuch85}, and this trade-off is presumed tight under Erd\H{o}s's girth conjecture.
The dynamic spanner problem was introduced by Ausiello, Franciosa, and Italiano~\cite{AusielloFI06}
and has been improved upon by~\cite{Elkin11,BaswanaKS12,BodwinK16}.
There currently exist near-optimal amortized expected bounds: a $ (2k - 1) $-spanners can be maintained with expected amortized update time $ O(1)^k $~\cite{BaswanaKS12} or  
time $ O (k^2 \log^2 (n))$~\cite{GoranciK18}.
The state-of-the-art for high-probability worst-case lags far behind: $O(n^{3/4})$ update time for maintaining a 3-spanner,
and $O(n^{5/9})$ for a 5-spanner \cite{BodwinK16}; no $o(n)$ worst-case update time was known for larger~$k$. All of these algorithms exhibit the stretch/space trade-off mentioned above, up to polylogarithmic factors in the size of the 
spanner\footnote{A standard assumption for the analysis of randomized dynamic graph algorithms is that the ``adversary'' who supplies the sequence of updates
is assumed to have fixed this sequence $\sigma_1, \sigma_2, ...$ \emph{before} the dynamic algorithm starts to operate,
and the random choices of the algorithm then define a distribution on the time to process each $\sigma_i$. 
This is called an \emph{oblivious} adversary.
Our dynamic algorithms for spanners and matching share this assumption, as does all work prior to the conference version of this paper. A randomized dynamic fractional matching algorithm that does not need this assumption was presented in~\cite{wajc2020rounding}}. 

We give the first dynamic spanner algorithm with polylog worst-case update time with high probability for any constant~$k$, which significantly improves upon the
result of \cite{BodwinK16} both in update time and in range of $k$. Our starting point is the earlier result
of Baswana, Khurana, and Sarkar~\cite{BaswanaKS12} that maintains a $(2k-1)$ spanner with $O(n^{1+1/k}\log^2(n))$ edges with update time $O(1)^k$. We show that while their algorithm is amortized expected, it can be modified to yield worst-case expected bounds: this requires a few changes to the algorithm, as well as significant changes to the analysis. We then apply the reduction in Theorem~\ref{thm:main}.

\begin{theorem}
\label{thm:spanner-expected}
There exists a fully dynamic (Las Vegas) algorithm for maintaining a $(2k-1)$ spanner with $O(n^{1+1/k} \log^6(n) \log{\log{(n)}})$ edges that 
has worst-case expected update time $O(1)^k \log(n)$. 
\end{theorem}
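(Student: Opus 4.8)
The plan is to obtain the claimed worst-case expected bound by re-engineering the amortized-expected $(2k-1)$-spanner algorithm of Baswana, Khurana, and Sarkar~\cite{BaswanaKS12}; once that is in hand, Theorem~\ref{thm:main} upgrades it to high-probability worst-case in a separate step. Recall that the BKS data structure maintains a hierarchy of $k+1$ levels: the level-$0$ centers are all of $V$, and the level-$(i{+}1)$ centers are obtained by independently retaining each level-$i$ center with probability $p\approx n^{-1/k}$, so the expected number of level-$i$ centers is $\approx n^{1-i/k}$. Each vertex is attached to a shallow tree rooted at a nearby center at the highest level it reaches, and the spanner is the union of these intra-cluster trees together with, for every vertex $v$, one edge to each cluster adjacent to $v$ at $v$'s top level. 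Edge insertions are cheap; the cost lies in deletions, which can sever a vertex (and the subtree hanging below it) from its center, forcing a re-insertion that may land at a \emph{lower} level and cascade into re-processing an entire neighborhood there. The amortized analysis pays for such a ``fall'' by charging it to the earlier ``rise'' that created the structure being destroyed, and uses the centers' random coins to argue that cascades are short in expectation.

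The difficulty is that this charging scheme explicitly permits a single update to be arbitrarily expensive provided earlier updates ``saved up'' for it --- exactly what a worst-case expected bound forbids --- so I would change the algorithm in two ways. First, bound the damage one deletion can do: keep every cluster shallow and cap the number of vertices a single re-insertion cascade may touch, splitting any cluster that grows too large (a low-probability but nonzero event) at the price of extra spanner edges; this capping, together with running the whole scheme at a slightly boosted sampling rate so that the per-level bounds hold with high probability at every time step, is where the extra polylogarithmic factors in the edge bound come from. Second, replace ``wait until an adversarial deletion destroys you'' by proactive, freshly randomized rebuilds: give each cluster (and each vertex's level assignment) a random offset so that the update on which it is torn down and rebuilt is essentially uniform over a window whose length matches its rebuild cost, and re-draw the relevant randomness on each re-insertion so that for \emph{every} update the probability that it triggers a fall at level $i$ is $O(p)$ instead of possibly $1$. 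Maintaining these staggered schedules and the refreshed randomness is the source of the extra $\log n$ in the update time.

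For the running-time analysis I would fix an arbitrary update $\sigma_t$ of the (oblivious) adversary's sequence and split its cost into three parts: the $O(1)^k$ cost of touching the hierarchy for the changed edge; the cost of any re-clustering it triggers directly; and the cost of the proactively scheduled rebuilds that happen to fall on step $t$. The last part is handled by linearity of expectation over the $\approx n^{1-i/k}$ clusters at each level, each landing on $\sigma_t$ with probability $O(1/(\text{its rebuild cost}))$, giving $O(1)^k$ in total. For the triggered part, the modifications are meant to guarantee that the coins controlling how far the cascade propagates are independent of the (adversary-plus-past-randomness-determined) event that a cascade is happening at step $t$, so the expected number of re-processed vertices is a geometric-type sum $\sum_{j\ge i}(\text{size cap})\cdot p^{\,j-i}=O(1)^k$, each at cost $O(1)^k$, with the cap controlling the tail. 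Summing, every $\sigma_t$ costs $O(1)^k\log n$ in expectation. Correctness is inherited from the static Baswana--Sen stretch argument applied to the \emph{current} hierarchy, observing that the modifications only ever add spanner edges and never leave a vertex unattached, so the maintained subgraph is a $(2k-1)$-spanner at all times (Las Vegas); the edge bound follows from the expected cluster counts $n^{1-i/k}$, the $O(p^{-1})=O(n^{1/k})$ expected neighboring clusters per finished vertex, and the polylogarithmic blow-up from the caps and boosted sampling.

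The step I expect to be the real obstacle is the independence claim inside the running-time analysis: one must make rigorous that the random choices governing the extent of a fall triggered at time $t$ are genuinely fresh with respect to the statement ``a fall is occurring at time $t$'', even though the data-structure state at time $t$ is an intricate function of both the fixed update sequence and all randomness used so far. This is precisely what forces the algorithmic changes above --- in particular, re-drawing a vertex's coins on every re-insertion rather than reusing its original ones, and scheduling rebuilds from random offsets --- and carrying the probabilistic bookkeeping so that no coin is ever conditioned on by the very event whose probability it is meant to bound is the crux of the proof.
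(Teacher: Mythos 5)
Your meta-strategy --- make each expensive restructuring operation fire with probability inversely proportional to its cost, so that its expected contribution to any single update is $O(1)$ --- is indeed the engine of the paper's argument (there it is called the ``probabilistic threshold technique''). But you have misdiagnosed \emph{where} the amortization in Baswana--Khurana--Sarkar actually lives, and as a result your concrete modifications target the wrong component. In their spanner algorithm there is no ``charge a fall to the earlier rise'' argument, no cluster splitting, and no need to re-draw a vertex's coins on re-insertion: the hook-based re-clustering is \emph{already} worst-case expected under an oblivious adversary, \emph{provided} that among the edges from $u$ passed to the next level, a $p$-th fraction go to sampled clusters. Guaranteeing that proviso is the whole difficulty: if $u$'s edges are distributed in a skewed way over neighboring clusters (e.g.\ $\Theta(n)$ edges to one cluster and one edge to each of $\Theta(n^{1/k}\log n)$ others), the deleted edge is the hook with probability far exceeding $1/\deg_i(u)$. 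BKS fix this with an edge-\emph{filtering} step: neighboring clusters are bucketed by $|E_i(u,c)|$, only edges in sufficiently populated buckets are forwarded to level $i+1$, and one compensating edge per withheld cluster is added to the spanner. The soft thresholds used to maintain these buckets are the \emph{sole} source of deterministic amortization in the entire algorithm, and the paper's technical contribution is a new filtering data structure in which a cluster crossing a bucket boundary migrates with probability $\Theta(1/\alpha_j)$ per update, so that the expected number of induced updates per operation is a convergent series $\sum_t 2^t/e^{2^t-2}$ bounded by a constant ($16$).

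Because your proposal never mentions the filtering, the buckets, or the skewed-distribution obstruction, the key independence claim you correctly flag as the crux is left unsupported: without conditions (F1)--(F3) you cannot prove the analogue of the paper's Lemma~\ref{lem:filtering property} ($\PP[(u,v)=\hook(u,i)]\le (1+2\epsilon)/|\mathcal{E}_i(u)|$), and re-randomizing hooks on re-insertion does not repair this, since the skew is a property of the edge multiset, not of stale coins. Relatedly, the extra $\log^6(n)\log\log(n)$ in the edge bound does not come from ``caps and boosted sampling'' but from the compensation sets $Y_i$ (one edge per cluster withheld by the filter, over $O(\log n)$ buckets of $O(\lambda^2\ell)$ clusters each with $\lambda=\Theta(\log n)$ and $\ell=O(n^{1/k}\log^3(n)\log\log n)$), and the $O(\log n)$ in the update time comes from balanced-BST operations in the bucket data structure, not from maintaining staggered rebuild schedules. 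To complete a proof along the paper's lines you would need to (i) isolate the filtering subproblem with its correctness conditions, (ii) give a worst-case expected filtering algorithm, and (iii) invoke the black-box composition (Theorem~\ref{thm:main theorem Baswana}) that turns a constant expected number of induced updates per level into an $O(1)^{k/2}$ bound overall.
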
  

\begin{theorem}
\label{thm:spanner-converted}
There exists a fully dynamic (Las Vegas) algorithm for maintaining a $(2k-1)$ spanner with $O(n^{1+1/k} \log^7(n) \log{\log{(n)}})$ edges that 
has high-probability worst-case update time $O(1)^k \log^3(n)$. 
\end{theorem}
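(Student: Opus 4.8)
The plan is to derive Theorem~\ref{thm:spanner-converted} directly by applying the black-box reduction of Theorem~\ref{thm:main} to the algorithm of Theorem~\ref{thm:spanner-expected}. Let $A$ be the fully dynamic Las Vegas algorithm of Theorem~\ref{thm:spanner-expected}: it maintains a $(2k-1)$-spanner with $O(n^{1+1/k}\log^6(n)\log\log(n))$ edges and has worst-case expected update time $\alpha = O(1)^k\log(n)$. Its data structure stores the edges of the current graph, of which there are at most $n^2$ (polynomial in $n$), and it supports exactly the two operation types required by Theorem~\ref{thm:main}: edge insertions and deletions are the updates, and ``report the current spanner'' is a query. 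So to invoke Theorem~\ref{thm:main} it only remains to check the polynomial-preprocessing hypothesis.

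For that hypothesis, I would note that for any edge set $S$ with $|S|$ polynomial in $n$, a fresh copy of $A$'s data structure holding exactly the edges of $S$ can be built in polynomial time: run the (deterministic) static $(2k-1)$-spanner construction underlying $A$ on the graph $(V,S)$ and sample the auxiliary random state (levels, hash values, etc.) of $A$ to match, or, equivalently, start from an empty copy and perform the $|S|$ insertions of $A$ one by one, capping each at a polynomial threshold and restarting with fresh randomness on the (negligible-probability) event that the threshold is exceeded. Either way the construction takes polynomial time, so Theorem~\ref{thm:main} applies and produces an algorithm $A'$ that processes every update in $O(\alpha\log^2(n)) = O(1)^k\log^3(n)$ time with high probability, while maintaining $\Theta(\log(n))$ copies $D_1,\dots,D_{\Theta(\log(n))}$ of $A$'s structure together with a pointer to one copy $D_i$ that is guaranteed correct at the current time.

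It remains to read off the spanner and its size. As in Remark~\ref{remark:main}, the maintained spanner is the union $H=\bigcup_i H_i$ of the spanners $H_i$ held by the copies $D_i$. Because the reduction keeps each $H_i$ a subgraph of the current graph $G$, and because the pointed-to copy $D_i$ holds a valid $(2k-1)$-spanner $H_i$ of $G$, we have for every pair of vertices $u,v$ that $d_G(u,v)\le d_H(u,v)\le d_{H_i}(u,v)\le (2k-1)\,d_G(u,v)$, so $H$ is a valid $(2k-1)$-spanner of $G$; its number of edges is at most the number of copies times the per-copy bound, namely $\Theta(\log(n))\cdot O(n^{1+1/k}\log^6(n)\log\log(n)) = O(n^{1+1/k}\log^7(n)\log\log(n))$. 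The algorithm is Las Vegas: $A$ is Las Vegas, and by Item~\ref{item:main-correctness} $A'$ always answers from a copy it has certified correct; the $O(1)^k\log^3(n)$ bound is high-probability worst-case, and (using the observation following Theorem~\ref{thm:main}) one may halt whenever an update exceeds it, preserving correctness with high probability.

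The main point requiring care is purely the interface between the two cited theorems, since the theorems themselves are assumed. Specifically, one must confirm that the union $\bigcup_i H_i$ really is a subgraph of the \emph{current} graph $G$ --- which needs the reduction to propagate edge deletions to all copies promptly, not only to the active one --- since the size bound then follows at once; and one must confirm that the preprocessing step can be made to run in \emph{deterministic} polynomial time as Theorem~\ref{thm:main} stipulates, rather than merely in expected polynomial time, for which the static spanner construction supplies exactly what is needed.
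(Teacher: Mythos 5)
Your proposal is correct and follows exactly the paper's route: apply the black-box reduction of Theorem~\ref{thm:main} to the worst-case-expected algorithm of Theorem~\ref{thm:spanner-expected}, then take the union of the $\Theta(\log(n))$ maintained spanners, which costs the extra $\log(n)$ factor in size while the correctness of the pointed-to copy guarantees the stretch. The only addition beyond the paper's (very terse) argument is your explicit verification of the preprocessing hypothesis and your flagging of the subtlety that lagging copies may retain deleted edges --- both reasonable points of care, not deviations in approach.
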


The proof follows directly from Theorem ~\ref{thm:spanner-expected} and Theorem~\ref{thm:main}. In the case of maintaining a spanner, the potential lack of continuity discussed in Remark~\ref{remark:main} does not exist,
as instead of switching between the $O(\log(n))$ spanners maintained by the conversion in Theorem~\ref{thm:main}, we can just let the final spanner be the union of all of them. This incurs an extra $\log(n)$ factor in the size of the spanner.

\paragraph{Second Result: Dynamic Maximal Matching.}

A maximum cardinality matching can be maintained dynamically in $O(n^{1.495})$ amortized expected time per operation~\cite{Sankowski07}.
Due to conditional lower bounds of $\Omega(\sqrt m)$ on the time per operation for this 
problem~\cite{AbboudW14,HenzingerKNS15}, there is a large body of work on the dynamic \emph{approximate} matching problem 
~\cite{OnakR10,BaswanaGS18,NeimanS16,GuptaP13,BhattacharyaHI18,BhattacharyaHN16,Solomon16,BhattacharyaHN17,BhattacharyaCH17,GuptaK0P17,CharikarS18,ArarCCSW18}.
Still the only algorithms with polylogarithmic (amortized or worst-case) time per operation require a 2 or larger approximation ratio. 

A matching is said to be \emph{maximal} if the graph contains no edges between unmatched vertices. 
A maximal matching is guaranteed to be a 2-approximation of the maximum matching, and is also a well-studied object
in its own right (see e.g.\ \cite{HanckowiakKP01,GrandoniKP08,LattanziMSV11,BaswanaGS18,NeimanS16,Solomon16,Fischer17}).
The groundbreaking result of Baswana, Gupta, and Sen~\cite{BaswanaGS18} showed how to maintain a maximal matching (and so 2-approximation) with $O(\log(n))$ expected amortized update time. Solomon \cite{Solomon16} improved the update time to $O(1)$ expected amortized. There has been recent work on either deamortizing or derandomizing this result 
\cite{BhattacharyaHN16,BhattacharyaCH17,BhattacharyaHN17,CharikarS18,ArarCCSW18}. 
Most notably, the two independent results in
~\cite{CharikarS18} and ~\cite{ArarCCSW18} both present algorithms with polylog high-probability worst-case update time
that maintain a $(2+\eps)$-approximate matching. 
Unfortunately, all these results comes at the price of increasing the approximation factor from $2$ to $(2+\eps)$, and in particular no longer ensure that the matching is maximal. 
One of the central questions in this line of work is thus whether it is possible
to maintain a maximal matching without having to use both randomization \emph{and} amortization. 

We present the first affirmative answer to this question by removing the amortization requirement, thus resolving an open question of \cite{ArarCCSW18}. Much like for dynamic spanner, we use an existing amortized algorithm as our starting point: namely, the $O(\log(n))$ amortized algorithm of \cite{BaswanaGS18}. We then show how the algorithm and analysis can be modified to achieve a worst-case expected guarantee, 
and then we apply our reduction. 
\begin{theorem}
\label{thm:matching-expected}
There exists a fully dynamic (Las Vegas) algorithm for maintaining a maximal matching with worst-case expected update time $O(\log^3(n))$.
\end{theorem}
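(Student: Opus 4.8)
The starting point is the dynamic maximal matching algorithm of Baswana, Gupta and Sen~\cite{BaswanaGS18}, which assigns every vertex a level in $\{0,1,\dots,\lceil\log_2 n\rceil\}$, keeps a matched pair at a common level $i$ together with a set of $\Omega(2^i)$ edges that it ``owns'', and re‑matches a newly freed vertex $u$ at level $i$ by choosing a uniformly random mate among its $\Omega(2^i)$ eligible lower‑level neighbours. Against an oblivious adversary this random choice guarantees that, in expectation, $\Omega(2^i)$ of $u$'s owned edges are deleted before its matched edge is hit, which is exactly what makes the $O(2^i)$ cost of settling at level $i$ amortize to $O(1)$ per deleted edge and $O(\log n)$ per update overall. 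The obstacle to a \emph{worst‑case‑expected} bound is that this $O(2^i)$ cost --- whether it comes from a \textsc{Rise} to a higher level or from a long cascade of \textsc{Fall}s triggered by a single deletion --- is paid in full on one update, so the adversary, although unable to predict the random mate, can still arrange for the update it cares about to be the one that triggers the expensive resettlement.

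The first ingredient is to decouple ``deciding what to do'' from ``doing it''. For every vertex $u$ I maintain a potential $\phi(u)$ that tracks the level $u$ ought to occupy given the current size of its ownership set; each edge insertion or deletion updates the relevant $\phi$ values in $O(\log n)$ time via \textsc{Increment-}$\phi$ and \textsc{Decrement-}$\phi$, without moving any vertex. Whenever $\phi(u)$ crosses a level boundary I enqueue a pending \textsc{Rise} or \textsc{Fall} for $u$ on a global work queue, and on every update I run \textsc{ProcessQueue} with a fixed $\operatorname{polylog}(n)$ budget, advancing the pending moves incrementally so that no single update does more than $\operatorname{polylog}(n)$ of this work. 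Maximality ultimately depends only on the level‑$0$ structure, which is cheap to maintain eagerly, so \textsc{FixFreeVertex} restores the ``no two free neighbours'' invariant immediately after each update using the maintained adjacency data; the lazily processed queue governs only the \emph{hierarchy}, i.e.\ efficiency, not correctness.

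The remaining danger is that the adversary concentrates its deletions so that the queue (or the ownership structure feeding it) develops a large backlog before the random mates ``save'' it. To neutralise this I add, for each level $i$, a \textsc{ResetMatching} event (a \textsc{Flush} of the corresponding part of the queue that re‑randomises the affected level‑$i$ vertex's mate, at cost $O(2^i\operatorname{polylog}(n))$) that fires with probability $\Theta(4^{-(i+3)})$ on each relevant update. Because the timing of this $O(2^i)$‑cost event is now chosen by the algorithm's own coins rather than by the adversary, its contribution to the expected cost of any \emph{fixed} update $\sigma_t$ is only $\sum_i \Theta(4^{-(i+3)})\cdot O(2^i\operatorname{polylog}(n)) = O(\operatorname{polylog}(n))$, the geometric decay in $i$ being precisely what makes the sum converge; the same reset rates keep every level‑$i$ epoch short enough in expectation that a bounded‑rate queue keeps up.

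The heart of the proof is then showing that every update $\sigma_t$ costs $O(\log^3 n)$ in expectation: $O(\log n)$ for the deterministic $\phi$‑maintenance, $\operatorname{polylog}(n)$ deterministic for the queue budget, and $O(\operatorname{polylog}(n))$ expected for the reset events as above --- but one must also verify that laziness never forces a larger burst, i.e.\ that with the chosen budget and reset rates the queue is always drained (and every ownership set and every $\phi$ value is always accurate enough) before the next expensive event, and that maximality genuinely holds after every update. This is where the amortized potential argument of~\cite{BaswanaGS18} must be replaced by a purely distributional one: I would define the critical times $\tcritrematch$ and $\tcritlevel$ at which a pending move becomes urgent or an epoch becomes dangerously long, bound the probabilities of the bad events $\eventmatching$ and $\eventgap$ by coupling the adversary's deletion sequence against the random mates and the random reset schedule, and conclude that the expected work charged to $\sigma_t$ in each of the categories \textsc{Rise}, \textsc{Fall}, \textsc{GenericRandomSettle}, \textsc{ResetMatching} and \textsc{FixFreeVertex} is $O(\operatorname{polylog}(n))$, with the two extra logarithms over~\cite{BaswanaGS18} arising from the queue's $O(\log n)$ processing‑rate overhead and from the slack the distributional bounds require over tight amortized charging. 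I expect the main obstacle to be exactly this last point --- proving that laziness plus randomized resetting never lets the structure drift far enough from the ``ideal'' BGS state to break maximality or to make a single update's expected cost super‑polylogarithmic --- since it has to be done without any of the forward or backward charging that amortized analysis would normally permit.
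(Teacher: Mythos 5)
Your proposal correctly identifies one of the two modifications the paper actually makes: a randomized \textsc{ResetMatching} that fires with probability $\Theta(4^{-(i+3)})$ per relevant update at level $i$, so that the timing of the expensive re-randomization is controlled by the algorithm's coins and contributes only polylogarithmic expected cost to any fixed update. However, your first ingredient --- deferring \textsc{Rise}/\textsc{Fall} onto a work queue and draining it with a fixed $\operatorname{polylog}(n)$ budget per update --- is not what the paper does, and it has a genuine gap. Maximality is not ``a level-$0$ property'': it rests on the invariant that every vertex at level $\geq 0$ is matched, and restoring it after the deletion of a matching edge at level $i$ requires \textsc{GenericRandomSettle}, which must enumerate $N_{<i}(v)$ and transfer ownership of $\Theta(4^i)$ edges. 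That work cannot be postponed or split across updates without either losing maximality in the interim or invalidating the ownership/level invariants on which the whole probabilistic analysis (in particular the claim that a random mate is drawn from $\Omega(4^i/\log n)$ candidates) depends. The paper instead processes its queue to completion on every update and achieves the worst-case \emph{expected} bound purely distributionally: the hard threshold rise (the one event the adversary can deterministically force onto a chosen update) is replaced by a \emph{probabilistic rise} with probability $\Theta(\log(n)/4^i)$ per increment of $\phi_v(i)$, after which a Chernoff bound shows the hard threshold is essentially never reached. Your budgeted-queue mechanism is both unnecessary for a worst-case expected bound and, as stated, incompatible with correctness.

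The second, and larger, gap is that the heart of the proof is missing rather than sketched. The paper's key quantitative step (Lemma~\ref{lem:prob-matching}) shows that for every fixed edge $(u,v)$ and level $\ell$, the probability that $(u,v)$ is currently the matching edge chosen by \textsc{GenericRandomSettle}$(v,\ell)$ is $O(\log^2(n)/4^\ell)$; combined with the $O(4^\ell)$ expected total work per procedure call (obtained from the recursion $\EX^{\mathrm{max}}_i \leq O(4^i) + 2\EX^{\mathrm{max}}_{i-1}$, which in turn needs the expected number of resets and rises triggered by a single \textsc{Fall} to be a small constant), this yields the $O(\log^3 n)$ bound. Proving that probability bound is nontrivial precisely because conditioning on a settle being the \emph{last} one before the deletion skews the uniform distribution; the paper handles this by partitioning on the ``gap'' $\gamma$ (number of rematch opportunities since the critical time), showing $\PP[E^{\text{gap}}_\gamma] \leq e^{-\gamma/64}$ via the reset probability, and bounding the number of settles in the interval via a span argument. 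You correctly name this as the main obstacle but do not supply the argument, so the proposal as written does not establish the theorem.
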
  

\begin{theorem}
\label{thm:matching-converted}
There exists a fully dynamic (Las Vegas) algorithm that maintains a maximal matching with high-probability worst-case update time $O(\log^5(n))$.
\end{theorem}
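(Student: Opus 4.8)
The plan is to obtain Theorem~\ref{thm:matching-converted} by feeding the worst-case-expected algorithm of Theorem~\ref{thm:matching-expected} into the black-box reduction of Theorem~\ref{thm:main}. Concretely, I would let $A$ be the fully dynamic maximal matching algorithm with worst-case expected update time $\alpha = O(\log^3(n))$ guaranteed by Theorem~\ref{thm:matching-expected}, and instantiate Theorem~\ref{thm:main} with this $A$ and with the parameter $n$ equal to the number of vertices. Since the matching data structure stores at most $\binom{n}{2}$ edges at any time, the bound on the number of stored items is polynomial in $n$, as required. The conclusion of Theorem~\ref{thm:main} then yields an algorithm $A'$ that is always correct (Las Vegas) and processes each update in $O(\alpha \log^2(n)) = O(\log^5(n))$ time with high probability, which is exactly the claimed bound.

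Before invoking Theorem~\ref{thm:main} I would check its one nontrivial hypothesis: that for any polynomially sized set $S$ of edges, a fresh copy of the data structure containing exactly the elements of $S$ can be built in polynomial time. I would verify this by noting that the data structure of Theorem~\ref{thm:matching-expected} is a variant of the hierarchical structure of Baswana, Gupta, and Sen~\cite{BaswanaGS18}: a maximal matching together with a level assignment of the vertices and some auxiliary adjacency lists. Such a structure can be constructed from scratch on the graph $(V,S)$ in $\operatorname{poly}(n)$ time --- for instance by starting from the empty graph and inserting the edges of $S$ one by one using $A$ itself, or by a direct static construction. This is the step where one has to be slightly careful, but it is routine for this data structure, and it is the only genuine obstacle; all the real difficulty lives in Theorems~\ref{thm:main} and~\ref{thm:matching-expected}, which we are permitted to assume.

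Finally, I would address the form of the output. By Item~\ref{item:main-correctness} of Theorem~\ref{thm:main}, $A'$ maintains $\Theta(\log(n))$ copies $D_1,\dots,D_{\Theta(\log(n))}$ of the data structure and a pointer to one copy $D_i$ that is guaranteed to be correct at the current time; a query for the current maximal matching is answered by reading off the matching stored in $D_i$, which is a genuine maximal matching of the current graph because $D_i$ is a faithfully maintained instance of $A$. As noted in Remark~\ref{remark:main}, the reported matching may change drastically from one update to the next when the pointer switches copies, but this affects neither correctness nor the worst-case time bound, so it is not an obstacle to the theorem as stated. Putting these observations together establishes Theorem~\ref{thm:matching-converted}.
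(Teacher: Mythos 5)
Your proposal is correct and matches the paper's proof, which likewise obtains Theorem~\ref{thm:matching-converted} by plugging the $O(\log^3(n))$ worst-case expected algorithm of Theorem~\ref{thm:matching-expected} into the black-box reduction of Theorem~\ref{thm:main} and noting the resulting $O(\log^5(n))$ bound. Your additional checks of the preprocessing hypothesis and the discussion of the pointer-based output correspond to what the paper covers in Remark~\ref{remark:main} and the surrounding text.
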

The proof follows directly from Theorem~\ref{thm:matching-expected} and Theorem~\ref{thm:main}.
As in Remark~\ref{remark:main} above, we note that our worst-case algorithm in
Theorem~\ref{thm:matching-converted} stores the matching in a different data structure
than the original amortized algorithm of Baswana \etal \cite{BaswanaGS18}: while the latter stores the edges of the maximal matching in a single list $D$, our algorithm stores $O(\log(n))$ lists $D_i$, along with a pointer to some specific list $D_j$ that is guaranteed to contain the edges of 
a maximal matching. In particular, the algorithm always knows which $D_j$ is correct. The pointer to $D_j$ allows our algorithm to answer queries about the maximal matching in optimal time.

\paragraph{Discussion of Our Contribution.}

We present the first dynamic algorithms with worst-case polylog update times for two classical graph problems: dynamic spanner, and dynamic maximal matching. Both results are achieved with a new de-amortization approach, which shows that the concept of worst-case expected time can be a very fruitful way of thinking about dynamic graph algorithms. From a technical perspective, the conversion from worst-case expected to high-probability worst-case (Theorem~\ref{thm:main}) is relatively simple. The main technical challenge lies in showing how the existing amortized algorithms for dynamic spanner and maximal matching can be modified to be worst-case expected. The changes to the algorithms themselves are not too major, but a very different analysis is required, because we can no longer rely on charging arguments and potential functions. Our tools for proving  worst-case expected guarantees can be used to de-amortize other existing dynamic algorithms has already been used for two novel fully dynamic maximal independent set algorithms~\cite{ChechikZ19,BehnezhadDHSS19} and we expect it to find further use.
 For example, the dynamic coloring algorithm of~\cite{BhattacharyaCHN18}, the dynamic spectral sparsifier algorithm of \cite{AbrahamDKKP16}, the dynamic distributed maximal independent set algorithm of~\cite{Censor-HillelHK16}, and the dynamic distributed spanner algorithm of \cite{BaswanaKS12} (all amortized) seem like natural candidates for our approach.

Section~\ref{sec:reduction} provides a proof of the black-box reduction in Theorem~\ref{thm:main}. 
Section~\ref{sec:matching} presents our dynamic matching algorithm, and Section ~\ref{sec:spanner} presents our dynamic spanner algorithm. 

\subsubsection*{Acknowledgments}

	The conference version of this paper \cite{BernsteinFH19} had an error in the analysis of the dynamic matching algorithm. In particular, Lemma 4.5 assumed an independence between adversarial updates to the hierarchy which is in fact true, but which requires a sophisticated proof. We are very grateful to anonymous reviewers for pointing out this mistake in our analysis. The mistake is fixed in Section~\ref{subsec:matching-probability}. Almost the entire fix is a matter of analysis: the only change to the algorithm itself is the introduction of responsible bits in Algorithm~\ref{alg:matching 2}.
	
	The first author would like to thank Mikkel Thorup and Alan Roytman for a very helpful discussion of the proof of
	Theorem \ref{thm:main}.
	
	The research leading to these results has received funding from the European Research Council under the European Union's Seventh Framework Programme (FP/2007-2013) / ERC Grant Agreement no.~340506.
	
	The first author conducted this research while funded by NSF grant 1942010.

\section{Converting Worst-Case Expected to High-Probability Worst-Case}\label{sec:reduction}

In this section we give the proof of Theorem~\ref{thm:main}. To do so, we first prove the following theorem
that restricts the length of the update sequence
and then show how to extend it.

\begin{theorem}
\label{thm:main-extended}
Let $A$ be an algorithm that maintains a dynamic data structure $D$
with worst-case expected update time $\alpha$, let $n$ be a parameter such that the maximum number of items stored in the data structure at any point in time is polynomial in~$n$,
and let $ \ell $ be a parameter for the length of the update sequence to be considered that is polynomial in $n$. 
Then there exists an algorithm $A'$ with the following properties:
\begin{enumerate}
\item \label{item:main-worst-extended}
For any sequence of updates $ \sigma_1, \sigma_2, \ldots, \sigma_{\ell} $ with $\ell$ polynomial in $n$, $A'$ processes each update $\sigma_i$ in $O(\alpha \log^2(n))$ time
with high probability.
The amortized expected update time of $A'$ is $O(\alpha\log(n))$.
\item \label{item:main-correctness-extended}
$A'$ maintains $\Theta(\log(n))$ data structures $D_1, D_2, \ldots, D_{\Theta(\log(n))}$,
as well as a pointer to some $D_i$ that is guaranteed to be correct at the current time. Query operations are answered with $D_i$.
\end{enumerate}
\end{theorem}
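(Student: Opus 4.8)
The plan is to run $k = \Theta(\log n)$ independent copies $D_1,\dots,D_k$ of the data structure in parallel, each ``lagging'' behind the true update sequence by a bounded amount, and to always answer queries from whichever copy is currently fully up to date. Each copy $D_j$ gets its own independent random bits and keeps a FIFO queue of the updates it has not yet finished applying. When $\sigma_t$ arrives, $A'$ appends $\sigma_t$ to all $k$ queues and then, for each $j$, spends a fixed budget of $\Theta(\alpha\log n)$ steps advancing $D_j$ through its queue in order (finishing a partially-applied update before starting the next one). This makes the worst-case per-update running time $\Theta(\alpha\log^2 n)$ deterministically; and since the worst-case expected guarantee means that the time $D_j$ needs for $\sigma_t$ (starting from the correct state after $\sigma_1,\dots,\sigma_{t-1}$) has expectation at most $\alpha$, linearity gives that the expected total work done across all $k$ copies per update is $O(\alpha\log n)$, which is the claimed amortized expected bound. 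Queries are answered through a pointer to a copy whose queue is empty and that has finished $\sigma_t$ --- a ``caught-up'' copy --- which is a correct instance of $D$ because it processes updates in exactly the order $A$ would.

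The crux is to show that at every time $t$ some copy is caught up, with high probability. Fix $t$ and a copy $D_j$; write $T_i$ for the time $D_j$ needs for $\sigma_i$ and $\beta=\Theta(\alpha\log n)$ for the budget, and consider the walk $Y_r=\sum_{i\le r}T_i-\beta r$. A standard Lindley-type identity says the backlog of $D_j$ after step $t$ equals $\max_{0\le s\le t}(Y_t-Y_s)$, so $D_j$ is \emph{not} caught up at $t$ precisely when some recent window $(s,t]$ has $\sum_{i=s+1}^{t}T_i>\beta(t-s)$. I would bound the probability of this by a constant strictly below $1$ (say $\tfrac12$): the walk has strongly negative drift $\alpha-\beta=-\Theta(\alpha\log n)$, so a burst of expensive updates with total excess work $E$ keeps $D_j$ behind for only $O(E/\beta)$ steps, and since $\sum_{i\le t}\EX[T_i]\le t\alpha$ the expected number of steps up to $t$ at which $D_j$ is behind is $O(t\alpha/\beta)=O(t/\log n)$; together with a windowed concentration bound for the truncated increments this yields $\Pr[D_j\text{ behind at }t]\le\tfrac12$ for every $t$. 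Because the copies use independent randomness and the adversary is oblivious, these events are independent across $j$, so the probability that no copy is caught up at $t$ is at most $2^{-k}=n^{-\Omega(1)}$, and choosing $k$ a sufficiently large multiple of $\log n$ makes the exponent as large as we like; a union bound over $t=1,\dots,\ell$ --- which is exactly the point at which $\ell$ must be at most polynomial in $n$ --- shows that with high probability a caught-up copy exists at all times. In the rare event that some step has no caught-up copy, $A'$ rebuilds all $k$ copies from the current element set using the assumed polynomial-time preprocessing; since this happens with probability at most $n^{-c}$ per step and costs only $\mathrm{poly}(n)$, over $\ell=\mathrm{poly}(n)$ updates its total expected cost is $o(1)$, so it harms neither bound, and a pointer to a correct $D_i$ is therefore always available, as claimed in item~\ref{item:main-correctness-extended}.

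The step I expect to be the main obstacle is bounding $\Pr[D_j\text{ behind at }t]$ by a constant. We only know $\EX[T_i]\le\alpha$, with no control over higher moments, so $T_i$ may be heavy-tailed and off-the-shelf Chernoff/Hoeffding bounds do not apply to the windows $\sum_{i=s+1}^{t}T_i$; worse, we cannot simply abort an update after $\Theta(\alpha\log n)$ steps, since $D$ is treated as a black box and need not support rolling back a half-applied update. The way around this is to truncate \emph{only in the analysis}: split $T_i=\min(T_i,\beta)+(T_i-\beta)^+$, handle the bounded part with a windowed Hoeffding bound (which drains ``ordinary'' backlog geometrically fast, thanks to the $\Theta(\log n)$-factor slack between $\beta$ and $\alpha$), and control the overflow part via the conservation inequality $\sum_i(T_i-\beta)^+\le\sum_iT_i$ together with Markov, so that overflows keep $D_j$ behind for only an $O(1/\log n)$ fraction of steps. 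Finally, the polynomial-preprocessing hypothesis is used to initialize the $k$ copies and for the rare rebuilds above, and the extension to unbounded update sequences --- needed to deduce Theorem~\ref{thm:main} from this statement --- is obtained by restarting the whole construction every $\Theta(\ell)$ updates.
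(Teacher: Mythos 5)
Your architecture is exactly the paper's: $\Theta(\log n)$ independent copies with per-update budget $\Theta(\alpha\log n)$ each, FIFO backlogs, a pointer to a caught-up copy, independence across copies to drive the failure probability to $n^{-c}$, and a rebuild/flush fallback whose expected cost is negligible. The gap is in the one step you yourself flag as the obstacle: bounding $\PP[D_j\text{ behind at }t]\le 1/2$ \emph{for every fixed $t$}. Your proposed tools do not close it. First, a windowed Hoeffding bound on the truncated increments $\min(T_i,\beta)$ is unavailable: within a single copy the processing times $T_1,T_2,\ldots$ are dependent (each depends on the copy's internal state, hence on all earlier coin flips), and the worst-case expected guarantee only supplies \emph{unconditional} first moments $\EX[T_i]\le\alpha$, not conditional ones, so neither Hoeffding nor Azuma applies. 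Second, the truncation split collapses: $\sum_{i=s+1}^{t}\min(T_i,\beta)\le\beta(t-s)$ holds deterministically, so the bounded part carries no information, and controlling the overflow $\sum_i(T_i-\beta)^+$ by Markov against $\sum_i\EX[T_i]\le t\alpha$ only bounds the \emph{expected number} of behind-steps by $O(t/\log n)$. That is an on-average statement, and the paper explicitly shows it is not enough: there is a construction (the $X_t=2r(\ell+1-t)$ example) where the counter is zero most of the time yet $\PP[C_\ell=0]<1/2$ unless the budget carries the full $\Omega(\alpha\log\ell)$ slack. So nothing in your argument rules out a single adversarially chosen $t$ at which the behind-probability is close to $1$, which is precisely what must be excluded.

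The paper's fix is elementary and needs no independence at all: starting from your Lindley identity, it conditions on the dyadic scale of the ``critical moment'' $s$ (the first index with $C_j>0$ for all $s\le j\le t$). If the critical moment lies in $(t+1-2^{i+1},\,t+1-2^{i}]$, the counter must have been drained $2^i$ times, forcing $\sum_{j=t+1-2^{i+1}}^{t}X_j\ge r2^i=2\log(\ell)\cdot\alpha\cdot 2^{i+1}$, while this window's sum has expectation only $\alpha 2^{i+1}$; plain Markov gives probability at most $1/(2\log\ell)$ per scale, and summing over the $\le\log\ell$ scales yields $1/2$. The point is that the burst needed to stay behind until time $t$ grows linearly with how far back it started, so Markov at each scale suffices and the $\log\ell$ slack in the budget is consumed exactly by the union bound over scales. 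You should replace your concentration step with this decomposition; the rest of your proposal then goes through.
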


\begin{proof}
Let $q = c\log(n)$ for a sufficiently large constant $c$. 
The algorithm runs $q$ versions of the algorithm $A$, denoted $A_1, \ldots, A_q$,
each with their own independently chosen random bits.
This results in $q$ data structures $D_i$.
Each $D_i$ maintains a possibly empty buffer $L_i$ of uncompleted updates. 
If $L_i$ is empty, $D_i$ is marked as \emph{fixed}, otherwise it is marked as 
\emph{broken}. The algorithm maintains a list of all the fixed data structures,
and a pointer to the $D_i$ of smallest index that is fixed.

Let $r = 4\alpha\log(\ell) = O(\alpha\log(\ell))$.
Given an update $\sigma_j$ the algorithm adds 
$\sigma_j$ to the end of each~$L_i$ and then allows each $A_i$ to run for $r$ steps. Each $A_i$ will work on the uncompleted updates
in $L_i$, continuing where it left off after the last update, and completing the first uncompleted update before starting the
next one in the order in which they appear in $L_i$. 
If within these $r$ steps all uncompleted updates in $L_i$ have been completed, $A_i$ marks itself as fixed;
otherwise it marks itself as broken.
If at the end of update $\sigma_j$ \emph{all} of the $q$ data structures $D_i$ are broken
then the algorithm performs a \Flush, which simply processes all the updates in all the versions $A_i$: 
this could take much more than $r$ work, but our analysis will show that this event happens with extremely small probability. 
The \Flush ensures Property~\ref{item:main-correctness-extended} of Theorem~\ref{thm:main-extended}: at the end of every update, some $D_i$ is fixed. 

By linearity of expectation, the expected amortized update time is  $O(\alpha q) = O(\alpha \log(n))$, 
and the worst-case update time is $rq= O(\alpha \log^2(n))$
unless a \Flush occurs. All we have left to show is that 
after every update the probability of a \Flush is at most $(1/2)^q = 1/n^c$.
We use the following counter analysis:

\begin{definition}
\label{dfn:counter}
We define the \emph{dynamic counter problem with positive integer parameters 
$\alpha$ (for average), $r$ (for reduction), and $\ell$ (for length)} as follows.
Given a finite sequence of possibly dependent random variables $X_1, X_2, \ldots, X_\ell $ such that for each $t$, $E[X_t] \leq \alpha$,
we define a sequence of counters $C_t$ which changes over a finite sequence of time steps. 
Let $C_0 = 0$ and let
$C_t = \max({X_t + C_{t-1} - r},{0})$.
\end{definition}
As we show in Lemma~\ref{lem:counter}  with constant probability $C_t$ is 0. We use this fact as follows:
Let $A_i$ be any version.
Each $D_i$ exactly mimics the dynamic counter of Definition~\ref{dfn:counter}:
$X_j$ corresponds to the time it takes for $A_i$ to process update $\sigma_j$; 
by the assumed properties of $A$, we have $E[X_j] = \alpha$. 
The counter $C_j$ then corresponds to the amount of work that $A_i$ has left to do
after the $j$-th update phase; in particular, $C_j = 0$ corresponds to $D_i$ being fixed after time $j$,
which by Lemma~\ref{lem:counter} occurs with probability at least $1/2$. Since all the $q$ 
versions $A_i$ have independent randomness, 
the probability that \emph{all} the $D_i$ are broken and a \Flush 
occurs is at most 
$(1/2)^{q} = 1/n^c$.
\end{proof}

\begin{lemma}
\label{lem:counter}
Given a dynamic counter problem with parameters $\alpha$, $r$, and $\ell$, if $r \geq 4 \alpha \log (\ell) $ and $ \alpha \geq 1 $
then for every $t$ we have $\PP[C_t = 0] \geq 1/2$.
\end{lemma}

\begin{proof}[Proof of Lemma~\ref{lem:counter}]
Let us focus on some $C_t$, and say that $k$ is the \emph{critical} moment
if it is the smallest index such that $C_j > 0$ for all $k \leq j \leq t$.
Note that there is exactly one critical moment if $C_t > 0$ (possibly $k=t$)
and none otherwise. Define $\bad_i$ ($\bad$ for bad) for $0 \leq i \leq \log(t)$ to be the event that the critical moment occurs in interval $(t + 1 - 2^{i+1}, t+1-2^i]$. Thus, 
\begin{equation}
\label{eq:probability-total}
\PP[C_t > 0] = \PP[\bad_0 \lor \bad_1 \lor \bad_2 \ldots \lor \bad_{\log(t)}]
= \sum_{0 \leq i \leq \log(t)} \PP[\bad_i] \, .
\end{equation}
We now need to bound $\PP[\bad_i]$. Note that if $\bad_i$ occurs, then $C_j > 0$
for $t+1-2^i \leq j \leq t$. Thus the counter reduces by $r$ at least $2^i$ times
between the critical moment and time $t$ ($2^i$ and not $2^i - 1$ because the counter reduces
at time $t$ as well).
Furthermore, the counter is always non-negative.
Thus,
\begin{equation*}
\bad_i \rightarrow \sum_{t+1-2^{i+1} \leq j \leq t} X_j \geq r 2^i \, ,
\end{equation*}
meaning that the event $ B_i $ implies the event $ \sum_{t+1-2^{i+1} \leq j \leq t} X_j \geq r 2^i $.
Plugging in for $r = 4 \alpha \log(\ell)$ and recalling that if event $E_1$ implies $E_2$ then
$\PP[E_1] \leq \PP[E_2]$ we have that
\begin{equation}
\label{eq:bad-event}
\PP[\bad_i] \leq \PP \left[ \sum_{t+1-2^{i+1} \leq j \leq t} X_j \geq 2 \cdot \log(\ell) \cdot \alpha \cdot 2^{i+1} \right] \, .
\end{equation}
Now observe that, by linearity of expectation,
\begin{equation}
\label{eq:expected-r}
E \left[ \sum_{t+1-2^{i+1} \leq j \leq t} X_j \right] = \sum_{t+1-2^{i+1} \leq j \leq t} E [ X_j ] \le \alpha \cdot 2^{i+1} \, .
\end{equation}
Combining the Markov inequality with Equations~\ref{eq:bad-event} and~\ref{eq:expected-r} yields 
$\PP[B_i] \leq 1/(2\log(\ell))$ for any $i$. Plugging that into Equation~\ref{eq:probability-total}, and recalling
that $t \leq \ell$, we get $\PP[C_t > 0] \leq \sum_{0 \leq i \leq \log(t)} 1/(2\log(\ell)) \leq 1/2.$
\end{proof}

Note that the $\log(\ell)$ factor is necessary, even though intuitively
$r = O(\alpha)$ should be enough, since at each step the counter only goes up 
by $\alpha$ (in expectation) and goes down by $r > \alpha$, so we would expect it
to be zero most of the time. And that is in fact true: with $r = 4\alpha$ 
one could show that for any~$\ell$, the probability that $C_t = 0$
for at least half the values of $t \in [0,\ell]$ is at least $1/2$.
But this claim is not strong enough because it still
leaves open the possibility that even if the counter is usually zero,
there is some particular time $t$ at which $\PP[C_t = 0]$ is very small.

To exhibit this bad case, consider the following sequence $X_1, X_2, \ldots X_\ell$,
where each $X_t$ is chosen independently and is set to $2r(\ell+1-t) $ with probability 
$\frac{\alpha}{2r(\ell+1-t)}$ and to $0$ otherwise. 
It is easy to see that for each $t \leq \ell$ we have $E[X_t] = \alpha$.
Now, what is $\PP[C_\ell = 0]$? 
For each $t \leq \ell$ if $X_t \neq 0$, the counter will reduce by $r(\ell+1-t)$ from time
$t$ to time $\ell$, which still leaves us with $C_\ell \geq 2r(\ell+1-t) - r(\ell+1-t) > 0$. 
Let $Y_t$ be the indicator variable for the event that $X_t \neq 0$.
Then, $\PP[C_\ell > 0] = \PP[Y_1 \lor Y_2 \ldots \lor Y_\ell]$.
This probability is hard to bound exactly, but note that
since the~$Y_t$'s are independent random variables between $0$ and $1$ and we can apply the following Chernoff bound.
\begin{lemma} [Chernoff Bound]
\label{lem:chernoff}
Let $Y_1, Y_2, \ldots, Y_k$ be a sequence of independent random variables
such that $ 0 \leq Y_t \leq U $ for all $ t $.
Let $ Y = \sum_{1 \leq t \leq k} Y_t $ and $\mu = E[Y]$.
Then the following two properties hold for all $\del > 0$:
\begin{align}
\PP [ Y \leq (1-\del) \mu ] &\leq e^{- \frac{\del^2 \mu}{2 U}} \\
\PP [ Y \geq (1+\del)\mu ] &\leq e^{- \frac{\del \mu}{3 U}} \, .
\end{align}
\end{lemma}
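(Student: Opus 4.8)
The plan is to prove this by the standard exponential-moment (Chernoff--Bernstein) method. I would first reduce to the case $U = 1$ by rescaling: the variables $Z_t := Y_t/U$ are independent and lie in $[0,1]$, with $Z := \sum_t Z_t = Y/U$ and $E[Z] = \mu/U$, and the two claimed inequalities for $(Y,\mu,U)$ are literally the $U=1$ inequalities applied to $(Z,\mu/U)$ --- the $U$ in the exponents is exactly what the rescaling produces. So it suffices to show, for independent $Z_t \in [0,1]$ with sum $Z$ and mean $\nu = E[Z]$, that $\PP[Z \leq (1-\del)\nu] \leq e^{-\del^2\nu/2}$ and $\PP[Z \geq (1+\del)\nu] \leq e^{-\del\nu/3}$. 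The engine in both cases is the same: Markov's inequality applied to $e^{sZ}$ gives $\PP[Z \geq a] \leq e^{-sa}\,E[e^{sZ}]$ when $s>0$ and $\PP[Z \leq a] \leq e^{-sa}\,E[e^{sZ}]$ when $s<0$; by independence $E[e^{sZ}] = \prod_t E[e^{sZ_t}]$; and since $x \mapsto e^{sx}$ is convex we have the pointwise bound $e^{sZ_t} \leq 1 + (e^s-1)Z_t$ on $[0,1]$, whence $E[e^{sZ_t}] \leq 1 + (e^s-1)E[Z_t] \leq \exp((e^s-1)E[Z_t])$ and therefore $E[e^{sZ}] \leq \exp((e^s-1)\nu)$.

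For the upper tail, take $s > 0$: then $\PP[Z \geq (1+\del)\nu] \leq \exp\big(\nu((e^s-1) - s(1+\del))\big)$, and substituting the minimizer $s = \ln(1+\del)$ gives $\PP[Z \geq (1+\del)\nu] \leq \exp\big(-\nu((1+\del)\ln(1+\del) - \del)\big)$; the claimed form then follows from an elementary estimate on $(1+\del)\ln(1+\del) - \del$ (for instance via $(1+\del)\ln(1+\del) - \del \geq \del^2/(2+\del)$). For the lower tail the computation is the mirror image with $s < 0$: $\PP[Z \leq (1-\del)\nu] \leq \exp\big(\nu((e^s-1) - s(1-\del))\big)$, and minimizing over $s<0$ (optimum $s = \ln(1-\del)$) yields $\exp\big(-\nu(\del + (1-\del)\ln(1-\del))\big)$, from which one concludes using the elementary inequality $\del + (1-\del)\ln(1-\del) \geq \del^2/2$ valid for $0 < \del < 1$.

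I do not expect a serious obstacle here: the method is entirely standard, and the only real work is verifying the two one-variable analytic inequalities at the very end and keeping the $U$-rescaling bookkeeping straight so that the exponents come out with the correct $U$ dependence. (Alternatively one could simply invoke a textbook statement rather than reproving it, but the self-contained argument above is only a few lines.)
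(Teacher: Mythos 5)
Your lower-tail argument is correct and complete: the rescaling to $U=1$, the convexity bound $e^{sZ_t}\le 1+(e^s-1)Z_t$ (valid for $s<0$ as well, since $e^{sx}$ is convex in $x$ regardless of the sign of $s$), the optimization at $s=\ln(1-\delta)$, and the inequality $\delta+(1-\delta)\ln(1-\delta)\ge\delta^2/2$ all check out, giving exactly the first bound. The paper itself offers no proof of this lemma (it is invoked as a textbook fact), so there is nothing to compare against; your argument is the standard one.

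The upper-tail half, however, does not close. You end with $(1+\delta)\ln(1+\delta)-\delta\ge\delta^2/(2+\delta)$ and assert that ``the claimed form then follows,'' but $\delta^2/(2+\delta)\ge\delta/3$ holds only when $\delta\ge 1$; for $0<\delta<1$ your chain of inequalities yields the weaker $e^{-\delta^2\mu/((2+\delta)U)}$, not $e^{-\delta\mu/(3U)}$. This gap is not repairable, because inequality (2) as stated for \emph{all} $\delta>0$ is false: take $U=1$ and $Y_t$ i.i.d.\ Bernoulli$(1/2)$ with $\delta=0.1$; then $\PP[Y\ge 0.55k]=e^{-(D+o(1))k}$ with $D=0.55\ln(1.1)+0.45\ln(0.9)\approx 0.005$, whereas the claimed bound is $e^{-k/60}\approx e^{-0.0167k}$, so the claim fails for large $k$. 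The linear-in-$\delta$ exponent is the standard form only under the restriction $\delta\ge 1$ (equivalently, one should state $e^{-\delta^2\mu/((2+\delta)U)}$ or $e^{-\min(\delta,\delta^2)\mu/(3U)}$ for general $\delta$). This is a defect of the lemma as stated rather than of your method, and it is harmless for the paper, which only ever invokes the first formulation (with $\delta=0.74$); but your proof should either restrict (2) to $\delta\ge1$ or replace its right-hand side by the $\delta^2/(2+\delta)$ form.
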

Formulation~1 with $\del = .74$ yields that if $ \sum_{1 \leq t \leq \ell} E[Y_t] \geq 4 $, then
\begin{equation*}
\PP[C_\ell = 0] = \PP \left[ \sum_{1 \leq t \leq \ell} Y_t < 1 \right] < .34 < 1/2 \, .
\end{equation*}
Thus, to have $\PP[C_\ell = 0] \geq 1/2$ we certainly need $\sum_{1 \leq t \leq \ell} E[Y_t] < 4.$
Now observe that
\begin{equation*}
\label{eq:lower-bound}
\sum_{1 \leq t \leq \ell} E[Y_t] 
= \frac{\alpha}{2r} \sum_{1 \leq t \leq \ell} \frac{1}{\ell+1-t} = \frac{\alpha \cdot \Omega (\log{\ell})}{r}
\end{equation*}
Thus, to have $\sum_{1 \leq t \leq \ell} E[Y_t] \leq 4$, we  indeed need $r = \Omega(\alpha\log(\ell))$.

This now completes the proof of all parts of Theorem~\ref{thm:main-extended}.

Finally, we observe that the restriction to an update sequence of finite length is mainly a technical constraint and we show next how to remove it. The basic idea is to periodically rebuild a new copy of the data structure ``in the background'' by spreading this computation over the time period.

\begin{proof}[Proof of Theorem~\ref{thm:main}]
Note that if the data structure does not allow any updates then Theorem~\ref{thm:main-extended} gives the desired bound.
Otherwise the data structure allows either insertions or deletions or both.
In this case we use a standard technique to enhance the algorithm $ A' $ from Theorem~\ref{thm:main-extended} providing worst-case high probability update time for a \emph{finite} number of updates to an algorithm $ A''$ providing worst-case high probability update time for an \emph{infinite} number of updates.
Recall that we assume that the maximum number of items that are stored in the data structure at any point in
time as well as the preprocessing time  to build the data structure for any set $S$ of size polynomial in~$n$ is polynomial in~$n$. Let this polynomial be upper bounded by $n^c$ for some constant c.
We break the infinite sequence of updates into non-overlapping \emph{phases}, such that phase $i$ consists of all updates
between update $i \times n^c$ to update $(i+1) \times n^c -1$. 

During each phase the algorithm uses two instances of algorithm $ A' $, one of them being called \emph{active} and
one being called \emph{inactive.} For each instance the algorithm has a pointer that points to the corresponding data structure.
Our new algorithm $ A'' $ always points to the data structures $D_1, D_2, \ldots, D_{\log(1/p)}$ of the active instance, where $p$ is a suitably chosen parameter.
In particular it also points to the $ D_i $ for which the active instance ensures correctness.
At the end of a phase the inactive data structure of the current phase becomes the active data structure for the next phase and the active one becomes the inactive one.

Additionally, $A'$ keeps a list $L$ of all items (e.g.~edges in the graph) that are currently stored in the data structure, stored in a balanced binary search tree, such
that adding and removing an item takes time $O(\log n)$ and the set of items that are currently in the data structure can be listed in time linear in their number.

We now describe how each of the two instances is modified during a phase. In the following when we use the term \emph{update} we mean an update in the (main) data 
structure.

(1) \emph{Active instance.} All updates are executed in the active instance and these are the only modifications performed on the active data structure.

(2) \emph{Inactive instance.} 
During the first $ n^c /2 $ updates in a phase, we do not allow any changes to $L$, but record all these updates.
Additionally during the first $n^c/4$ updates in the phase, we enumerate all items in $L$ and store them in an array by performing a constant amount of work of the enumeration and copy algorithm for each update. Let $S$ denote this set of items.
During the next $n^c/4$ updates we run the preprocessing algorithm for $S$ to build the corresponding data structure, 
again by performing a constant amount of work per update. This data structure becomes our current version of the inactive instance.

We also record all updates of the second half of the phase.
During the third $ n^c / 4 $ updates in the phase, we forward to the inactive instance and to~$L$ all $ n^c / 2 $ updates of the \emph{first} half of the current phase, by performing two recorded updates to the 
inactive instance and to $L$ per update in the second half of the phase.
Finally, during the final $ n^c / 4 $ updates, we forward to the inactive 
instance and to~$L$ all $ n^c / 2 $ updates of the \emph{second} half of the current phase, again performing two recorded update per update. 
This process guarantees that at the end of a phase the items stored in the active and the inactive instance are
identical. 

The correctness of this approach is straightforward.
To analyze the running time, observe that each update to the data structure will result in one update being processed by the active instance and at most two updates being processed in the inactive instance. Additionally maintaining $L$ 
increases the time per update by an additive amount of $O(\log n)$.
By the union bound, our new algorithm $ A'' $ spends worst-case time $2 \cdot O(\alpha\log(n)\log(1/p))$ with probability $ 1 - 2/p $.
By linearity of expectation, $ A'' $ has amortized expected update time  $2 \cdot O(\alpha\log(1/p))$.
By initializing the instance in preparation with the modified probability parameter $ p' = p/2 $ we obtain the desired formal guarantees.
\end{proof}

\section{Dynamic Spanner with Worst-Case Expected Update Time}
\label{sec:spanner}

In this section, we give a dynamic spanner algorithm with worst-case expected update time that, by our main reduction, can be converted to a dynamic spanner algorithm with high-probability worst-case update time with polylogarithmic overheads.
We heavily build upon prior work of Baswana et al.~\cite{BaswanaKS12} and replace a crucial subroutine requiring deterministic amortization by a randomized counterpart with worst-case expected update time guarantee.
In Subsection~\ref{sec:spanner high level}, we first give a high-level overview explaining where the approach of Baswana et al.~\cite{BaswanaKS12} requires (deterministic) amortization and how we circumvent it.
We then, in Subsection~\ref{sec:spanner algorithm Baswana}, give a more formal review of the algorithm of Baswana et al.\ together with its guarantees and isolate the dynamic subproblem we improve upon.
Finally, in Subsection~\ref{sec:worst-case filtering}, we give our new algorithm for this subproblem and work out its guarantees.

\subsection{High-Level Overview}\label{sec:spanner high level}

Recall that in the dynamic spanner problem, the goal is to maintain, for a graph $ G = (V, E) $ with $ n = |V| $ vertices that undergoes edge insertions and deletions, and a given integer $ k \geq 2 $, a subgraph $ H = (V, F) $ of size $ |F| = \tilde O (n^{1 + 1/k}) $ such that for every edge $ (u, v) \in E $ there is a path from $ u $ to $ v $ in~$ H $ of length at most $ 2 k - 1 $.
If the latter condition holds, we also say that the spanner has stretch $ 2k - 1 $.

The algorithm of Baswana et al.\ emulates a ``ball-growing'' approach for maintaining hierarchical clusterings.
In each ``level'' of the construction, we are given some clustering of the vertices and each cluster is sampled with probability $ p = \tfrac{1}{n^{1/k}} $.
The sampled clusters are grown as follows:
Each vertex in a non-sampled cluster that is incident on at least one sampled cluster, joins one of these neighboring sampled clusters.
Thus, for each unclustered vertex, there might be a choice as to which of its neighboring sampled clusters to join.
Furthermore, the algorithm distinguishes the edge that a non-sampled vertex uses to ``hook'' onto the sampled cluster it joins.
All sampled clusters (after possibly being extended by the hooks) together with the edges between them move to the next level of the hierarchy and in this way the growing of clusters is repeated $ k - 1 $ times.
The main idea why this hierarchy gives a good spanner is the following:
If a vertex belonging to an unsampled cluster has many neighboring clusters, then one of them is likely to be a sampled one and so the vertex joins a sampled cluster and is passed on to the next level of the hierarchy.
Conversely, if it stays at the current level of the hierarchy, then it only has few neighboring clusters, namely $ O(\tfrac{1}{p}) = O(n^{1/k}) $ many in expectation.
For such vertices, one can therefore afford to add one edge per neighboring cluster to the spanner.
By doing so, it is ensured that there is a path of length $ 2 k - 1 $ for each incident edge as every cluster has radius at most $ k - 1 $.

This hierarchy is maintained with the help of sophisticated data structures and some crucial applications of randomization to keep the expected update time low.
One important aspect for bounding the update time in such a hierarchical approach is the following:
It is not sufficient to analyze the update time at each level of the hierarchy in isolation as updates performed to one level might lead to changes in the clustering that lead to \emph{induced updates} to the next level.
In principle, by such a propagation of updates, a single update to the input graph might lead to an exponential number of induced updates to be processed by the last level.
Baswana et al.\ show that the amortized expected number of induced updates at level~$ i $ per update to the input graph is at most $ O (1)^i $.
Our contribution in this section is to remove the amortization argument, i.e., to give a bound of $ O (1)^i $ with worst-case expected guarantee

In the first level of the hierarchy, each vertex is a singleton cluster and each non-sampled vertex picks, among all edges going to neighboring sampled vertices, one edge uniformly at random as its hook.
Now consider the deletion of some edge $ e = (u, v) $.
If $ e $ was not the hook of $ u $, then the clustering does not need to be fixed.
However, if $ e $ was the hook, then the algorithm spends time up to $ O (\deg (u)) $ for picking a new hook, possibly joining a different cluster, and if so informing all neighbors about the cluster change.
If the adversary deleting $ e $ is oblivious to the random choices of the algorithm (both the choice of the sampled singleton clusters and the choice of the hooks), then every edge incident on $ u $ has the same probability of being the hook of $ u $, i.e., the probability of $ e $ being the hook of $ u $ is $ \tfrac{1}{\deg (u)} $.
Thus, the expected update time is $ \tfrac{1}{\deg (u)} \cdot O (\deg (u)) = O (1) $.

The situation is more complex at higher levels, when the clusters are not singleton anymore.
While the time spent upon deleting the hook is still $ O (\deg_i (u)) $, where $ \deg_i (u) $ is the degree of $ u $ at level~$ i $, one cannot argue that the probability of the deleted edge being the hook is $ O (\tfrac{1}{\deg_i (u)}) $.
To see why this could be the case, Baswana et al.\ provide the following example of a ``skewed'' distribution of edges to neighboring clusters:
Suppose $ u $ has $ \ell = \Theta (\frac{1}{p} \log(n)) $ neighboring clusters such that there are $ \Theta (n) $ edges from $ u $ into the first neighboring cluster and each remaining neighboring cluster has only one edge incident on~$ u $.
Now there is a quite high probability (namely $ 1 - p \approx 1 $) that the first cluster is not sampled and with high probability $ O (\log(n)) $ of the remaining clusters will be sampled, as follows from the Chernoff bound.
Thus, if $ u $ picked the hook uniformly at random from all edges into neighboring sampled clusters, it would join one of the single-edge clusters with high-probability.
As there are $ \ell $ edges incident on $ u $ from these single-edge clusters, this gives a probability of approximately $ \tfrac{1}{\ell} $ for some deleted edge $ (u, v) $ being the hook, which can be much larger than $ \tfrac{1}{\deg_i (u)} $.
This problem would not appear if among all edges going to neighboring clusters a $p$-th fraction would be incident on sampled clusters.
Then, intuitively speaking, one could argue that the probability of some edge $ e = (u, v) $ being the hook of $ u $ is at most $ p \cdot \tfrac{1}{\Omega(p \deg_i (u))} $, the probability that the cluster containing $ v $ is a sampled one times the probability that a particular edge among all edges to sampled clusters was selected.

This is why Baswana et al.\ introduce an edge \emph{filtering} step to their algorithm.
By making a sophisticated selection of edges going to the next level of the hierarchy, they can ensure that (a) among all such selected edges going to neighboring clusters a $p$-th fraction goes to sampled clusters and (b) to compensate for edges not being selected for going to the next level, each vertex only needs to add $ O (\frac{1}{p} \log^2 (n)) = O (n^{1/k} \log^2 (n)) $ edges to neighboring clusters to the spanner.
The filtering boils down to the following idea:
For each vertex~$ u $, group the neighboring non-sampled clusters into $ O (\log (n)) $ buckets such that clusters in the same bucket have approximately the same number of edges incident on $ u $.
For buckets that are large enough (containing $ \Theta (\tfrac{1}{p} \log (n)) $ clusters), a standard Chernoff bound for binary random variables guarantees that a $ p $-th fraction of \emph{all} clusters in the respective range for the number of edges incident on $ u $ go to sampled clusters.
As all these clusters have roughly the same number of edges incident on $ u $, a Chernoff bound for positive random variables with bounded aspect ratio also guarantees that a $ p $-th fraction of the edges of these clusters will go to sampled clusters.
Therefore, one gets the desired guarantee if all edges incident on clusters of small buckets are prevented from going to the next level in the hierarchy.
To compensate for this filtering, it is sufficient to add one edge -- picked arbitrarily -- from $ u $ to each cluster in a small bucket to the spanner.
As there are at most $ O (\log (n)) $ small buckets containing $ O (\tfrac{1}{p} \log (n)) $ clusters each, this step is affordable without blowing up the asymptotic size of the spanner too much.

Maintaining the bucketing is not trivial because whenever a cluster moves from one bucket to the other it might find itself in a small bucket coming from a large bucket, or vice versa.
In order to enforce the filtering constraint, this might cause updates to the next level of the hierarchy.
One way of controlling the number of induced updates is amortization:
Baswana et al.\ use soft thresholds for the upper and lower bounds on the number of edges incident on $ u $ for each bucket.
This ensures that updates introduced to the next level can be charged to updates in the current level, and leads to an amortized bound of $ O (1) $ on the number of induced updates.
Note that the filtering step is the only part in the spanner algorithm of Baswana et al.\ where this deterministic amortization technique is used.
If it were not for this specific sub-problem, the dynamic spanner algorithm would have worst-case expected update time.

Our contribution is a new dynamic filtering algorithm with worst-case expected update time, which then gives a dynamic spanner algorithm with worst-case expected update time.
Roughly speaking, we achieve this as follows: whenever the number of edges incident on $ u $ for a cluster~$ c $ in some bucket~$ j $ (with $ 0 \leq j \leq O (\log (n)) $) exceeds a bucket-specific threshold of~$ \alpha_j $, we move $ c $ up to the appropriate bucket with probability $ \Theta(\tfrac{1}{\alpha_j}) $ after each insertion of an edge between $ u $ and~$ c $.
This ensures that, with high probability, the number of edges to $ u $ for clusters in bucket~$ j $ is at most $ O (\alpha_j \log (n)) $.
Such a bound immediately implies that the expected number of induced updates to the next level per update to the current level is $ O (\tfrac{1}{\alpha_j} \cdot \alpha_j \log (n)) = O (\log (n)) $, which is already non-trivial but also unsatisfactory because it would lead to an overall update time of $ O(\log (n))^{k/2} $ for a $ (2 k - 1) $-spanner, instead of $ O (1)^{k/2} $ as in the case of Baswana et al.
By a more careful analysis we do actually obtain the $ O (1)^{k/2} $-bound.
By taking into account the diminishing probability of not having moved up previously, we argue that the probability to exceed the threshold by a factor of $ 2^t $ is proportional to $ 1/e^{(2^t)} $.
This bounds the expected number of induced updates by $ \sum_{t \geq 1} 2^t / e^{(2^t)} $, which converges to a constant.
A similar, but slightly more sophisticated approach, is applied for clusters moving down to a lower-order bucket.
Here we essentially need to adapt the sampling probability to the amount of deviation from the threshold because in the analysis we have fewer previous updates available for which the cluster has not moved, compared to the case of moving up.

\subsection{The Algorithm of Baswana et al.}\label{sec:spanner algorithm Baswana}

In the following, we review the algorithm of Baswana et al.~\cite{BaswanaKS12} for completeness and isolate the filtering procedure we want to modify.
We deviate from the original notation only when it is helpful for our purposes.

\subsubsection{Static Spanner Construction}

Let us first explain the principle behind the algorithm of Baswana et al.\ by reviewing a purely static version of the construction.

Given an integer parameter $ k \geq 2 $, the construction uses clusterings $ C_0, C_1, \ldots, C_{k-1} $ of subgraphs $ G_0 = (V_0, E_0), G_1 = (V_1, E_1), \ldots, G_{k-1} = (V_{k-1}, E_{k-1}) $, both to be specified in the following, where $ G_0 = G $ and, for each $ 0 \leq i \leq k - 2 $, $ G_{i+1} $ is a subgraph of $ G_i $ (i.e., $ V_{i+1} \subseteq V_i $ and $ E_{i+1} \subseteq E_i $).
For each $ 0 \leq i \leq k - 1 $, a \emph{cluster} of $ G_i $ is a connected subset of vertices of $ G_i $ and the \emph{clustering} $ C_i $ is a partition of $ G_i $ into disjoint clusters.
To control the size of the resulting spanner, the clusterings are partially determined by a hierarchy of randomly sampled subsets of vertices $ S_0 \supseteq S_1 \supseteq \dots \supseteq S_k $ in the sense that each cluster $ c $ in $ C_i $ contains a designated vertex of $ S_i $ called the \emph{center} of $ c $.
This sampling is performed by setting $ S_0 = V $, $ S_k = \emptyset $, and by forming~$ S_i $, for each $ 1 \leq i \leq k-1 $, by selecting each vertex from $ S_{i-1} $ independently with probability $ p = \tfrac{1}{n^{1/k}} $.
In addition to the clusterings, the construction uses a forest $ (V_i, F_i) $ consisting of a spanning tree for each cluster of~$ C_i $ rooted at its center such that each vertex in the cluster has a path to the root of length at most~$ i $.
Informally, level~$ i $ of this hierarchy denotes all the sets of the construction indexed with~$ i $.
Initially, $ G_0 = G $, $ F_0 = \emptyset $ and the clustering $ C_0 $ consists of singleton clusters $ \{ v \} $ for all vertices $ v \in S_0 = V $.

We now review how to obtain, for every $ 0 \leq i \leq k-1 $, the graph $ G_{i + 1} = (V_{i + 1}, E_{i + 1}) $, the clustering $ C_{i + 1} $ of $ G_{i + 1} $, and the set of edges $ F_{i + 1} $, based on the graph $ G_i = (V_i, E_i) $, the clustering $ C_i $, the edge set~$ F_i $, and the set of vertices $ S_{i + 1} $.
Let $ R_i $ be the set of all ``sampled'' clusters in the clustering~$ C_i $, i.e., all clusters in $ C_i $ whose cluster center is contained in $ S_{i + 1} $.
Furthermore, let $ V_{i + 1} $ be the set consisting of all vertices of $ V_i $ that belong to or are adjacent to clusters in $ R_i $ and let $ \mathcal{N}_i $ be the set consisting of all vertices of $ V_i $ that are adjacent to, but do not belong to, clusters in~$ R_i $.
Finally, for every $ u \in V_i $, let $ E_i (u) $ denote the set of edges of~$ E_i $ incident on $ u $ and any other vertex of $ V_i $, and, for every $ u \in V_i $ and every $ c \in C_i $, let $ E_i (u, c) $ denote the set of edges of~$ E_i $ incident on $ u $ and any vertex of~$ c $.
For each vertex $ u \in \mathcal{N}_i $, the construction takes an arbitrary edge $ (u, v) \in \bigcup_{c \in R_i} E_i (u, c) $ as the \emph{hook} of $ u $ at level~$ i $, called $ \hook (u, i) $.
Now the clustering $ C_{i + 1} $ is obtained by adding each vertex $ u \in \mathcal{N}_i $ to the cluster of the other endpoint of its hook and the forest $ F_{i + 1} $ is obtained from~$ F_i $ by extending the spanning trees of the clusters by the respective hooks.
To compensate for vertices that cannot hook onto any cluster in~$ R_i $, let $ X_i $ be a set of edges containing for each vertex $ v \in V_i \setminus V_{i+1} $ exactly \emph{one} edge of $ E_i (u, c) $ -- picked arbitrarily -- for each non-sampled neighboring cluster~$ c \in C_i \setminus R_i $.
Finally, the edge set~$ E_{i + 1} $ is defined as follows.
Every edge $ (u, v) \in E_i $ with $ u, v \in V_{i + 1} $ belongs to~$ E_{i + 1} $ if and only if $ u $ and $ v $ belong to different clusters in $ C_{i + 1} $ and at least one of $ u $ and $ v $ belongs to a sampled cluster (in $ R_i $) at level~$ i $.

The static spanner $ H $ now consists of the set of edges $ \bigcup_{0 \leq i \leq k-1} (F_i \cup X_i) $.
To analyze the stretch of $ H $, consider some edge $ e = (u, v) $ and let $ i $ be the largest index such that $ e \in E_i $.
If $ u $ and $ v $ are contained in the same cluster in the clustering $ C_i $, then the path from $ u $ to $ v $ in $ F_i $ via the common cluster center has length at most $ 2 i \leq 2 k - 2 $, as each cluster has radius at most $ i \leq k - 1 $.
If $ u $ and $ v $ are contained in different clusters in the clustering $ C_i $, then $ X_i $ contains an edge $ e' = (u, v') $ from $ u $ to the cluster of $ v $.
Now there is a path in $ H $ of length at most $ 2 i + 1 \leq 2 k - 1 $ from $ u $ to $ v' $ by first taking the edge $ e' $ to $ v' $ and then taking the path from $ v' $ to $ v $ in $ F_i $ via the common cluster center.

To analyze the size of the spanner, observe first that each forest $ F_i $ consist of at most $ n - 1 $ edges.
Furthermore, for each $ 0 \leq i \leq k - 2 $, each vertex in $ V_i \setminus V_{i+1} $, which is the set of vertices not being adjacent to a sampled cluster, is adjacent to at most $ \tfrac{1}{p} = n^{1/k} $ clusters in expectation (all of which are non-sampled clusters)
Thus, the number of edges contributed to $ X_i $ by each vertex in $ V_i \setminus V_{i+1} $ is at most $ n^{1/k} $ in expectation.
At level $ k - 1 $, no clusters are sampled ones anymore and $ X_{k-1} $ contains for each vertex in $ V_{k-1} $ one edge to each neighboring cluster.
As the number of clusters has been reduced to $ n^{1/k} $ in expectation at level $ k - 1 $, each vertex in $ V_{k-1} $ again contributes $ n^{1/k} $ edges to~$ X_{k-1} $ in expectation.
This results in an overall spanner size of $ O (k n^{1 + 1/k}) $ in expectation.

\subsubsection{Dynamic Spanner Maintenance}

The dynamic spanner algorithm uses the same definitions as above, with some minor modifications regarding how the hooks and the sets $ E_i $ are determined, and an additional edge set $ Y_i $ being included in $ H $ for each level $ i $.
Note that the sampling of $ S_0 \supseteq S_1 \supseteq \dots \supseteq S_k $ is performed a priori at the initialization and does not change over the course of the algorithm.
At each level $ i $ (for $ 0 \leq i \leq k-1 $), instead of selecting an arbitrary edge from $ (u, v) \in \bigcup_{c \in R_i} E_i (u, c) $ as the hook of $ u $ for each vertex $ u \in \mathcal{N}_i $, the hook is picked uniformly at random guaranteeing the following ``hook invariant'':
\begin{enumerate}
\item[(HI)] For every edge $ (u, v) \in \bigcup_{c \in R_i} E_i (u, c) $, where $ \bigcup_{c \in R_i} E_i (u, c) $ is the set of edges of $ E_i $ incident on $ u $ and any vertex contained in a cluster of $ R_i $, $ \PP [(u, v) = \hook (u, i)] = \tfrac{1}{| \bigcup_{c \in R_i} E_i (u, c) |} $.
\end{enumerate}
The main idea of Baswana et al.\ is that this simple method of choosing the hook leads to a fast update time if an additional filtering step is performed for selecting the edges that go to the next level.

For this purpose, the algorithm maintains, for each $ u \in \mathcal{N}_i $, and for certain parameters $ \lambda \geq g > 1 $, $ 0 < \epsilon < 1 $ and $ a > 1 $, a partition of the non-sampled neighboring clusters of $ u $ into $ \lceil \log_g (n) \rceil $ subsets called ``buckets'', a set of edges $ \mathcal{F}_i (u) \subseteq \bigcup_{c \in C_i \setminus R_i} E_i (u, c) $ and a set of clusters $ \mathcal{I}_i (u) \subseteq C_i \setminus R_i $ such that:\footnote{Here we slightly deviate from the original presentation of Baswana et al.\ by making the filtering process more explicit and also by giving the set $ \mathcal{I}_i (u) $ a name. We further deviate by suggesting to maintain this partitioning into buckets (which we call dynamic filtering) for each node in $ V_i $ (a superset of $ \mathcal{N}_i $). This does not increase the asymptotic running time of the overall algorithm and avoids special treatment when vertices join or leave $  \mathcal{N}_i $. Baswana et al. explicitly provide an argument for charging the initialization for of a vertex joining $ \mathcal{N}_i $ to a sequence of induced updates. We believe that our variant that avoids initialization slightly simplifies the formulation of Theorem~\ref{thm:main theorem Baswana}.}
\begin{enumerate}
\item[(F1)] For every $ 0 \leq j \leq \lfloor \log_g (n) \rfloor $ and every cluster $ c $ in bucket~$ j $, $ \tfrac{g^j}{\lambda} \leq | E_i (u, c) | \leq \lambda g^j $.
\item[(F2)] For every edge $ (u, v) \in \mathcal{F}_i (u) $, the bucket containing the cluster of $ v $ contains at least $ \ell := 4 \gamma a \lambda^2 \tfrac{1}{\epsilon^3} n^{1/k} \ln (n) \ln (\lambda) $ clusters (where $ \gamma \leq 80 $ is a given constant).
\item[(F3)] For every edge $ (u, v) \in \bigcup_{c \in C_i \setminus R_i} E_i (u, c) \setminus \mathcal{F}_i (u) $, the (unique) cluster of $ v $ in $ C_i $ is contained in $ \mathcal{I}_i (u) $.\footnote{The filtering algorithm of Baswana et al. guarantees the following stronger version of (F3): For every cluster $ c \in C_i \setminus R_i $ either $ E_i (u, c) \subseteq \mathcal{F}_i (u) $ or $ c \in \mathcal{I}_i (u) $. However, for the spanner algorithm to be correct, only the weaker guarantee of (F3) stated above is necessary. We will use this degree of freedom in our new filtering algorithm to avoid unnecessary ``bookkeeping'' work.}
\end{enumerate}
Intuitively, the set $ \mathcal{F}_i (u) $ is a \emph{filter} on the edges from $ u $ to non-sampled neighboring clusters and only edges to non-sampled clusters in $ \mathcal{F}_i (u) $ may be passed on to the next level in the hierarchy.
The clusters in $ \mathcal{I}_i (u) $ are those for which not all edges incident on $ u $ are contained in $ \mathcal{F}_i (u) $ and thus the algorithm has to compensate for these missing edges to keep the spanner intact.
For this purpose, the algorithm maintains a set of edges $ Y_i $ containing, for each vertex $ u \in V_{i+1} $ and each cluster $ c \in \mathcal{I}_i (u) $, exactly \emph{one} edge from $ E_i (u, c) $ -- picked arbitrarily.\footnote{Note that the lack of ``disjointness'' between $ \mathcal{F}_i (u) $  and $ \mathcal{I}_i (u) $ might lead to the situation that some edge is contained in both $ Y_i $ and $ \mathcal{F}_i (u) $. This was not the case in the original algorithm of Baswana et al., but it is correct to allow this behavior and allows us to avoid unnecessary ``bookkeeping'' work in our new filtering algorithm.}
In the following, we call an algorithm maintaining $ \mathcal{F}_i (u) $ and $ \mathcal{I}_i (u) $ satisfying (F1), (F2), and (F3) for a given vertex~$ u $ a \emph{dynamic filtering algorithm} with parameters $ \epsilon $ and $ a $.

For every vertex $ u $, let $ \mathcal{E}_i (u) = \mathcal{F}_i (u) \cup \bigcup_{c \in R_i} E_i (u, c) $ (where the latter is the set of edges incident on $ u $ from sampled clusters).
Now, the edge set $ E_{i + 1} $ is defined as follows.
Every edge $ (u, v) \in E_i $ with $ u, v \in V_{i + 1} $ belongs to $ E_{i + 1} $ if and only if $ u $ and $ v $ belong to different clusters in $ C_{i + 1} $ and one of the following conditions holds:
\begin{itemize}
\item At least one of $ u $ and $ v $ belongs to a sampled cluster (in $ R_i $) at level~$ i $, or
\item $ (u, v) $ belongs to $ \mathcal{E}_i (u) $ as well as $ \mathcal{E}_i (v) $.
\end{itemize}
Having defined this hierarchy, the dynamic spanner $ H $ consists of the set of edges $ \bigcup_{0 \leq i \leq k-1} (F_i \cup X_i \cup Y_i) $.

\subsubsection{Sketch of Analysis}

As explained above, it follows from standard arguments that $ | F_i \cup X_i | \leq O (n^{1 + 1/k}) $ for each $ 0 \leq i \leq k - 1 $.
Furthermore, the size of $ Y_i $ is bounded by $ n \cdot \max_{u} | \mathcal{I}_i (u) | $ for each $ 0 \leq i \leq k - 1 $.
The stretch bound of $ 2k - 1 $ follows from the clusters having radius at most $ k - 1 $ together with an argument that for each edge $ e = (u, v) $ not moving to the next level $ u $ has an edge to the cluster of~$ v $ (or vice versa) in one of the $ X_i $'s or one of the $ Y_i $'s.
Finally, the fast amortized update time of the algorithm is obtained by the random choice of the hooks.
Roughly speaking, the algorithm only has to perform significant work when the oblivious adversary hits a hook upon deleting some edge $ (u, v) $ from $ E_i $; this happens with probability $ \Omega (\tfrac{1}{| \mathcal{E}_i (u) |}) $ and -- by using appropriate data structures -- incurs a cost of $ O (| \mathcal{E}_i (u) |) $, yielding constant expected cost per update to~$ E_i $.
More formally, the filtering performed by the algorithm together with invariant~(HI) guarantees the following property.
\begin{lemma}[\cite{BaswanaKS12}]\label{lem:filtering property}
For every $ 0 \leq i \leq k - 1 $ and every edge $ (u, v) \in E_i $, $ \PP [ (u, v) = \hook (u, i)] \leq \tfrac{1 + 2 \epsilon}{| \mathcal{E}_i (u) |} $ for any constant $ 0 < \epsilon \leq \tfrac{1}{4} $.
\end{lemma}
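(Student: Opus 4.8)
The plan is to separate out the probabilistic heart of the filtering. I would prove that, over the one-shot sampling $ S_1 \supseteq \dots \supseteq S_k $ performed at initialization (each vertex of $ S_{i-1} $ retained in $ S_i $ independently with probability $ p := n^{-1/k} $), there is a ``good'' event $ \mathcal{G} $ with $ \PP[\overline{\mathcal{G}}] = n^{-\Omega(1)} $ on which, for every vertex $ u $, every level $ i $, and every point in time, $ |R_i(u)| \geq (1 - \epsilon')\, p\, |\mathcal{E}_i(u)| $, where $ R_i(u) := \bigcup_{c \in R_i} E_i(u,c) $ and $ \epsilon' $ is an auxiliary parameter taken small in terms of $ \epsilon $ (concretely $ \epsilon' \leq \tfrac{2\epsilon}{1+2\epsilon} $ suffices, which is feasible since $ \epsilon \leq \tfrac14 $). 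Informally: up to a $ (1-\epsilon') $ factor, the expected $ p $-fraction of the edges of $ \mathcal{E}_i(u) $ really does point to sampled clusters. Granting this, the lemma follows by combining invariant (HI) --- which makes $ \hook(u,i) $ a uniformly random element of $ R_i(u) $ --- with the observation that $ (u,v) $ is a candidate hook only when the cluster of $ v $ is sampled, an event of probability $ p $: the factor $ p $ compensates the factor $ p $ in the $ \mathcal{G} $-inequality, and $ \tfrac{1}{1-\epsilon'} \leq 1+2\epsilon $, which gives $ \PP[(u,v) = \hook(u,i)] \leq \tfrac{1+2\epsilon}{|\mathcal{E}_i(u)|} $; the contribution of $ \overline{\mathcal{G}} $ is negligible and is folded into the running-time analysis rather than into the lemma.

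To establish $ |R_i(u)| \geq (1-\epsilon')\, p\, |\mathcal{E}_i(u)| $ on $ \mathcal{G} $, recall that $ \mathcal{E}_i(u) $ is the disjoint union of $ R_i(u) $ with the filtered set $ \mathcal{F}_i(u) $, which consists only of edges to non-sampled clusters, so it is enough to lower bound $ |R_i(u)| $ by a constant multiple of $ p\, |\mathcal{F}_i(u)| $. I would work directly with the algorithm's buckets. By (F2), every edge of $ \mathcal{F}_i(u) $ goes to a non-sampled cluster lying in a bucket that holds at least $ \ell $ clusters, and by (F1) the edge-multiplicities to $ u $ of clusters sharing a bucket agree up to a factor $ \lambda^2 $. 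For one such ``heavy'' bucket, the set of clusters whose multiplicity to $ u $ falls in the associated range has size at least $ \ell $, and each such cluster lies in $ R_i $ independently with probability $ p $; a Chernoff bound for $ 0/1 $ variables (Lemma~\ref{lem:chernoff}) shows that at least a $ (1-\epsilon'') $ fraction of the expected $ \geq p\ell $ sampled clusters is realized, where $ p\ell = \Theta(\gamma a \lambda^2 \epsilon^{-3} \ln n \ln \lambda) = \Omega(a\, \epsilon^{-2}\ln n) $, so that for a large enough constant $ a $ the failure probability is below any prescribed inverse polynomial in $ n $. A second Chernoff bound, now for bounded positive random variables of aspect ratio at most $ \lambda^2 $ (again via (F1)), upgrades ``a $ p $-fraction of the clusters'' to ``at least a $ (1-\epsilon'')p $-fraction of the edges from $ u $ into that range goes to sampled clusters''. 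Summing over the heavy buckets and using (F2) to confine all of $ \mathcal{F}_i(u) $ to them gives the required lower bound on $ |R_i(u)| $, after taking $ \epsilon'' $ small enough in terms of $ \epsilon' $.

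The event $ \mathcal{G} $ is obtained by a union bound. Once the randomness of the levels below $ i $ is fixed, the clustering $ C_i $ and all edge-multiplicities at level $ i $ become deterministic functions of the (adversarially fixed) update sequence, so for each time step, vertex, level, and bucket index the relevant Chernoff event involves only the sampling $ S_{i+1} $ of a predetermined set of clusters. As the update sequence has polynomial length (we are inside Theorem~\ref{thm:main-extended}), there are only $ \mathrm{poly}(n) $ such events, each of failure probability at most $ n^{-\Omega(a)} $, so the union bound gives $ \PP[\overline{\mathcal{G}}] = n^{-\Omega(1)} $.

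I expect the main obstacle to be the bookkeeping forced by the geometrically spaced buckets. Since $ \lambda \geq g $, the multiplicity ranges $ [g^j/\lambda, \lambda g^j] $ of consecutive buckets overlap, so a single sampled cluster can be relevant to several buckets and a naive summation over buckets over-counts the edges of $ R_i(u) $ by a bounded factor strictly larger than $ 1 $; to keep the final constant at $ 1+2\epsilon $ rather than merely $ O(1) $, one must argue against a canonical non-overlapping refinement of the bucketing and exploit that the algorithm maintains the buckets only lazily (soft thresholds). This is exactly why $ \ell $ carries the extra $ \lambda^2 \epsilon^{-3} $ slack and why the argument runs through the intermediate parameters $ \epsilon' $ and $ \epsilon'' $. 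A secondary subtlety is that $ |\mathcal{E}_i(u)| $ is itself a random quantity, so the inequality in the statement is the one that holds on $ \mathcal{G} $, with the cancellation of the $ 1/p $ factor against ``the cluster of $ v $ is sampled'' carried out under that conditioning.
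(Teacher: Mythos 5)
The paper does not actually prove this lemma---it is imported verbatim from Baswana et al.~\cite{BaswanaKS12}---but the high-level overview in Section~\ref{sec:spanner high level} sketches precisely the argument you give: the hook is uniform over $R_i(u)$ by (HI), the cluster of $v$ is sampled with probability $p$, and conditions (F1)--(F2) together with the binary Chernoff bound and the bounded-aspect-ratio Chernoff bound (Theorem~\ref{thm:special Chernoff bound}, with $b=\lambda^2$ and the threshold $\ell$ chosen exactly so that heavy buckets meet its hypothesis) guarantee that a $(1-O(\epsilon))p$-fraction of $\mathcal{E}_i(u)$ goes to sampled clusters, so your proposal is the intended proof. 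The one caveat is that the lemma is stated unconditionally, so the failure probability of your good event $\mathcal{G}$ should be absorbed into the $1+2\epsilon$ slack (which is harmless since $|\mathcal{E}_i(u)| \leq n$ while $\PP[\overline{\mathcal{G}}] \leq n^{-\Omega(a)}$) rather than deferred to the running-time analysis.
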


The main probabilistic tool for obtaining this guarantee is a Chernoff bound for positive random variables.
Compared to the well-known Chernoff bound for binary random variables, the more general tail bound needs a longer sequence of random variables to guarantee a small deviation from the expectation with high probability: the overhead is a factor of $ b \log (b) $, where $ b $ is the ratio between the largest and the smallest value of the random variables.

\begin{theorem}[\cite{BaswanaKS12}]\label{thm:special Chernoff bound}
Let $ o_1, \ldots, o_\ell $ be $ \ell $ positive numbers such that the ratio of the largest to the smallest number is at most $ b $, and let $ Z_1, \ldots, Z_\ell $ be $ \ell $ independent random variables such that $ Z_i $ takes value $ o_i $ with probability $ p $ and $ 0 $ otherwise.
Let $ \mathcal{Z} = \sum_{1 \leq i \leq \ell} Z_i $ and $ \mu = \EX [\mathcal{Z}] = \sum_{1 \leq i \leq \ell} o_i p $.
There exists a constant $ \gamma \leq 80 $ such that if $ \ell \geq \gamma a b \tfrac{1}{\epsilon^3 p} \ln (n) \log (b) $ for any $ 0 < \epsilon \leq \tfrac{1}{4} $, $ a > 1 $, and a positive integer~$ n $, then the following inequality holds:
\begin{equation*}
\PP [\mathcal{Z} < (1 - \epsilon) \mu] < \frac{1}{n^a}
\end{equation*}
\end{theorem}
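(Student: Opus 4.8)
The plan is to prove \Cref{thm:special Chernoff bound} by the standard exponential-moment (Chernoff) argument, modified to absorb the bounded aspect ratio of the values $o_i$. First I would normalize: the event $\{\mathcal{X} < (1-\epsilon)\mu\}$ is unchanged under simultaneously rescaling all of $o_1,\dots,o_\ell$ (and hence $\mathcal{X}$ and $\mu$), so I may assume $\min_i o_i = 1$, so that every $o_i$ lies in $[1,b]$, and in particular $\mu = p\sum_i o_i \ge p\ell$. This last inequality is the only place where the hypothesis ``$\ell$ large'' will enter.

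Second, for a parameter $t > 0$ to be fixed later, I would apply Markov's inequality to $e^{-t\mathcal{X}}$:
\[
\PP[\mathcal{X} < (1-\epsilon)\mu] \le e^{t(1-\epsilon)\mu}\prod_{i=1}^{\ell}\EX\!\left[e^{-t X_i}\right].
\]
For each factor $\EX[e^{-tX_i}] = 1 - p(1 - e^{-t o_i}) \le \exp\!\big(-p(1-e^{-t o_i})\big)$, so the product is at most $\exp\!\big(-p\sum_i (1-e^{-t o_i})\big)$. Using the elementary inequality $1 - e^{-x} \ge x - x^2/2$ for $x \ge 0$ together with $o_i^2 \le b\,o_i$ (valid since $o_i \le b$), one gets $p\sum_i(1-e^{-t o_i}) \ge t\mu - \tfrac{t^2 b}{2}\mu$, so the failure probability is at most $\exp\!\big(\mu(\tfrac{t^2 b}{2} - t\epsilon)\big)$. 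The bracket is minimized at $t = \epsilon/b$, where it equals $-\epsilon^2/(2b)$; this choice is legitimate since it makes $t o_i \le \epsilon \le \tfrac{1}{4}$, so the second-order estimate wastes only a constant factor. Hence $\PP[\mathcal{X} < (1-\epsilon)\mu] \le \exp(-\epsilon^2 \mu/(2b))$. Finally, plugging in $\mu \ge p\ell$ and the hypothesis $\ell \ge \gamma a b \tfrac{1}{\epsilon^3 p}\ln(n)\log(b)$, which comfortably implies $p\ell \ge 2ab\ln(n)/\epsilon^2$, the exponent is at most $-a\ln n$, and the slack gives the stated strict inequality $\PP[\mathcal{X} < (1-\epsilon)\mu] < n^{-a}$.

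The only genuine subtlety — and the reason the aspect ratio $b$ has to appear in the hypothesis at all — is the second-order term in the expansion of $\EX[e^{-tX_i}]$. Since $X_i$ takes value $o_i$ (not $1$) when it fires, one cannot reduce to the ordinary binary Chernoff bound by splitting each variable into $o_i$ independent $\{0,1\}$-pieces: those pieces would be perfectly correlated. Instead one pays a factor $b$ in the variance proxy $p\sum_i o_i^2 \le b\mu$, which forces the small step size $t \approx \epsilon/b$ and hence the $1/b$ loss in the exponent; getting this single estimate right is where the care goes. It is worth noting that this argument only needs $\ell = \Omega(ab\ln(n)/(\epsilon^2 p))$, so the extra $\log(b)$ in the stated hypothesis is slack here and presumably reflects an alternative, more combinatorial proof via geometric bucketing of the $o_i$ and a union bound. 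The degenerate case $b = 1$, where all $o_i$ are equal, is just the classical multiplicative Chernoff bound and should be read off separately.
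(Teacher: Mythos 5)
The paper itself gives no proof of this theorem; it is imported verbatim from Baswana, Khurana, and Sarkar \cite{BaswanaKS12}, so there is no in-paper argument to compare against. Your derivation is correct and self-contained: the normalization to $o_i \in [1,b]$, the bound $\EX[e^{-tX_i}] = 1 - p(1-e^{-to_i}) \le \exp(-p(1-e^{-to_i}))$, the second-order estimate $1-e^{-x}\ge x - x^2/2$ combined with $o_i^2 \le b\,o_i$ to get the variance proxy $b\mu$, and the choice $t=\epsilon/b$ yielding $\PP[\mathcal{X} < (1-\epsilon)\mu] \le \exp(-\epsilon^2\mu/(2b))$ are all standard and sound, and the hypothesis on $\ell$ does give $\mu \ge p\ell \ge 2ab\ln(n)/\epsilon^2$ with room to spare (using $\gamma\log(b)/\epsilon \ge 2$). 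This is a genuinely different route from the original: Baswana et al.\ reduce to the binary Chernoff bound by partitioning the $o_i$ into roughly $\log_{1+\epsilon}(b) = \Theta(\log(b)/\epsilon)$ geometric classes and union-bounding over classes, which is exactly where their extra $\log(b)/\epsilon$ factor in the requirement on $\ell$ comes from; your single MGF computation shows that factor is slack, as you correctly observe. One caveat worth making explicit: for $b$ so close to $1$ that $\log(b) < 2\epsilon/\gamma$, the hypothesis on $\ell$ degenerates and the statement as literally written is not salvageable by any proof (take $\ell=1$); you flag only the case $b=1$, but the same issue persists for $1 < b < 2^{2\epsilon/\gamma}$. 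This is a defect of the imported statement rather than of your argument, and it is irrelevant to the application in the paper, where $b = \lambda^2 \ge 16$.
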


The running-time argument sketched above only bounds the running time of each level ``in isolation''.
For every $ 0 \leq i \leq k - 1 $, one update to $ G_i $ could lead to more than one \emph{induced} update to~$ G_{i + 1} $.
Thus, the hierarchical nature of the algorithm leads to an exponential blow-up in the number of induced updates and thus in the running time.
Baswana et al.\ further argue that the hierarchy only has to be maintained up to level $ \lfloor \tfrac{k}{2} \rfloor $ by using a slightly more sophisticated rule for edges to enter the spanner from the top level.
Together with a careful choice of data structures that allows constant expected time per atomic change, this analysis gives the following guarantee.

\begin{theorem}[Implicit in~\cite{BaswanaKS12}]\label{thm:main theorem Baswana}
Assume that for constant $ 0 < \epsilon < 1 $ and $ a > 1 $ there is a fully dynamic edge filtering algorithm~$ \mathfrak{F} $, in expectation, generates at most $ U (n) $ changes to $ \mathcal{F}_i (u) $ per update to $ E_i (u) $ and, in expectation, has an update time of $ U (n) \cdot T (n) $.
Then, for every $ k \geq 2 $, there is a fully dynamic algorithm $ \mathfrak{S} $ for maintaining a $ (2k - 1) $-spanner of expected size $ O (k n^{1 + 1/k} + k n \max_{i, u} | \mathcal{I}_i (u) |) $ with expected update time $ O ((3 + 4 \epsilon + U (n))^{k/2} \cdot T (n)) $.
If the bounds on $ \mathfrak{F} $ are amortized (worst-case), then so is the update time of $ \mathfrak{S} $.
\end{theorem}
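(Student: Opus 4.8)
The plan is to follow the analysis of Baswana et al.~\cite{BaswanaKS12} essentially unchanged for correctness, and to isolate the two places where the running time depends on the filtering subroutine: its output size $U(n)$ and its update time $U(n)\cdot T(n)$. The key observation making the theorem ``implicit'' in their work is that the stretch bound $2k-1$ and the size bound do not depend on \emph{how} $\mathcal{F}_i(u)$ and $\mathcal{I}_i(u)$ are maintained, only on the invariants (F1)--(F3) and (HI); the only entry point of $\mathfrak{F}$ into the size bound is the term $n\max_{i,u}|\mathcal{I}_i(u)|$, which is exactly the bound on $\sum_i |Y_i|$. So the real work is to track, level by level, how the number of induced updates and the cost per induced update inflate as one descends the hierarchy, and to check that the inflation is multiplicative so that it composes into the claimed $(3+4\epsilon+U(n))^{k/2}$ factor.

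First I would record the correctness facts. Each cluster of $C_i$ is spanned by a tree in $F_i$ of radius at most $i\le k-1$ rooted at its center, immediately from the way $F_{i+1}$ extends $F_i$ by hooks. For the stretch bound I would argue by a downward induction over levels that every edge $(u,v)\in E$ is \emph{witnessed}: either $(u,v)$ itself survives into $E_{i+1}$, or one of $X_i$, $Y_i$ contains an edge joining $u$ (or $v$) to the cluster of the other endpoint --- this is exactly what the definitions of $X_i$ (vertices in $V_i\setminus V_{i+1}$), $Y_i$ (clusters in $\mathcal{I}_i(u)$, by (F3)), and the rule defining $E_{i+1}$ via $\mathcal{E}_i$ are designed to ensure. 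Chaining a witnessing edge with the two radius-$(k-1)$ tree paths inside the two clusters gives a path of length at most $2(k-1)+1=2k-1$; the truncation of the hierarchy at level $\lfloor k/2\rfloor$ is handled by Baswana et al.'s modified top-level rule and does not interact with the filtering. For the size, $|F_i|\le n$, $|X_i|=O(n^{1+1/k}\log n)$ by the standard argument that a vertex adjacent to no sampled cluster has only $O(n^{1/k}\log n)$ neighboring clusters with high probability, and $|Y_i|\le n\max_u|\mathcal{I}_i(u)|$; summing over the $O(k)$ maintained levels gives $O(kn^{1+1/k}+kn\max_{i,u}|\mathcal{I}_i(u)|)$.

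The heart of the argument is the running time. Fix a level $i$ and a single (induced) update to $E_i$. There are three sources of work and of induced updates to $E_{i+1}$: (a) repairing the clustering $C_{i+1}$ when a \emph{deletion} hits the hook of some $u$ at level $i$ --- by Lemma~\ref{lem:filtering property} this happens with probability at most $(1+2\epsilon)/|\mathcal{E}_i(u)|$ against the oblivious adversary, while resampling the hook, possibly moving $u$ to a new cluster, and propagating the change costs and produces $O(|\mathcal{E}_i(u)|)$ atomic changes, for expected contribution $O(1+\epsilon)$; (b) the filtering algorithm $\mathfrak{F}$, which by hypothesis produces $U(n)$ changes to $\mathcal{F}_i(u)$ in expectation, hence at most $U(n)$ further induced updates at expected cost $U(n)\cdot T(n)$; (c) the $O(1)$ additional bookkeeping (recomputing which affected edges lie in $E_{i+1}$ via the $\mathcal{E}_i$ rule, updating $X_i$, $Y_i$), which with Baswana et al.'s data structures costs $O(T(n))$ per atomic change. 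Collecting the constants from (a) and (c) and adding (b), the expected number of updates induced on $E_{i+1}$ per update to $E_i$ is at most $3+4\epsilon+U(n)$, and the expected work per update to $E_i$ excluding the recursive cost is $O((3+4\epsilon+U(n))\cdot T(n))$. Since $S_0\supseteq\cdots\supseteq S_k$ is fixed a priori and the hook randomness at distinct levels is independent, these per-update expectations compose multiplicatively: by induction the expected number of updates induced on $E_i$ per update to $G_0=G$ is at most $(3+4\epsilon+U(n))^i$, so the total expected update time is $\sum_{i=0}^{\lfloor k/2\rfloor}(3+4\epsilon+U(n))^i\cdot O(T(n))=O((3+4\epsilon+U(n))^{k/2}\cdot T(n))$.

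For the amortized-versus-worst-case dichotomy, note that the only place a potential/charging argument could hide is inside $\mathfrak{F}$ itself: the reduction above is phrased entirely in terms of per-update expectations composed by linearity and the induced-update recursion, and the hook-hitting bound of Lemma~\ref{lem:filtering property} holds at every individual deletion against an oblivious adversary, hence is worst-case expected. So worst-case (expected) bounds for $\mathfrak{F}$ yield worst-case (expected) bounds for $\mathfrak{S}$, and amortized bounds yield amortized bounds. The main obstacle I anticipate is the careful verification in (b)--(c) that a single change reported by $\mathfrak{F}$ triggers only $O(1)$ downstream atomic changes handled in $O(T(n))$ time, and that the oblivious-adversary analysis of the hooks is not disturbed by the fact that $\mathcal{E}_i(u)$ is now maintained by a randomized $\mathfrak{F}$ --- here one uses that $\mathfrak{F}$'s randomness is independent of the adversary's sequence and that (F1)--(F3), all that Lemma~\ref{lem:filtering property} needs, are guaranteed with certainty.
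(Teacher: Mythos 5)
Correct, and essentially the same approach as the paper's: note that the paper does not actually prove this theorem — it is attributed as implicit in Baswana et al.\ and only sketched (cluster radius plus witness edges in the $X_i$/$Y_i$ for the stretch, standard bounds on $F_i \cup X_i$ and $|Y_i| \leq n \max_u |\mathcal{I}_i(u)|$ for the size, and the hook-hitting probability of Lemma~\ref{lem:filtering property} combined with the induced-update recursion truncated at level $\lfloor k/2 \rfloor$ for the update time). Your reconstruction fills in that sketch along exactly these lines, including the key point that the filtering subroutine enters the analysis only through $U(n)$, $T(n)$, and $\max_{i,u}|\mathcal{I}_i(u)|$, and that the per-level expectations compose because the hook analysis is per-update against an oblivious adversary.
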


\subsubsection{Summary of Dynamic Filtering Problem}

As we focus on the dynamic filtering in the rest of this section, we summarize the most important aspects of this problem in the following.
In a dynamic filtering algorithm we focus on a specific vertex $ u \in V_i $ at a specific level~$ i $ of the hierarchy, i.e., there will be a separate instance of the filtering algorithm for each vertex in $ V_i $.
The algorithm takes parameters $ 0 < \epsilon < 1 $ and $ a > 1 $ and fixes some choice of $ \lambda \geq g > 1 $.
It operates on the subset of edges of $ E_i $ incident on $ u $ and any vertex~$ v $ in a non-sampled cluster $ c \in C_i \setminus R_i $.
These edges are given to the filtering algorithm as a partition $ \bigcup_{c \in C_i \setminus R_i} E_i (u, c) $, where $ C_i \setminus R_i $, the set of non-sampled clusters at level~$ i $, will never change over the course of the algorithm.\footnote{Note if vertices join or leave clusters the dynamic filtering algorithm only sees updates for the corresponding edges.}
The dynamic updates to be processed by the algorithm are of two types: insertion of some edge $ (u, v) $ to some $ E_i (u, c) $, and deletion of some edge $ (u, v) $ from some $ E_i (u, c) $.
The goal of the algorithm is to maintain a partition of the \emph{clusters} into $ \lceil \log_g (n) \rceil $ buckets numbered from $ 0 $ to $ \lfloor \log_g (n) \rfloor $, a set of clusters $ \mathcal{I}_i (u) $ and a set of edges $ \mathcal{F}_i (u) $ such that conditions (F1), (F2) and (F3) are satisfied.

Condition~(F1) states that clusters in the same bucket need to have approximately the same number of edges incident on $ u $.
The ``normal'' size of $ | E_i (u, c) | $ for a cluster~$c$ in bucket~$ j $ would be~$ g^j $ and the algorithm makes sure that $ \tfrac{g^j}{\lambda} \leq | E_i (u, c) | \leq \lambda g^j $.
Thus, the ratio between the largest and the smallest value of $ | E_i (u, c) | $ among clusters~$ c $ in the same bucket is at most $ \lambda^2 $.
This value corresponds to the parameter~$ b $ in Theorem~\ref{thm:special Chernoff bound}.
The edges in $ \mathcal{F}_i (u) $ serve as a filter for the dynamic spanner algorithm in the sense that only edges in this set are passed on to level~$ i + 1 $ in the hierarchy.
Condition~(F2) states that an edge $ (u, v) $ may only be contained in $ \mathcal{F}_i (u) $ if the bucket containing the cluster of $ v $ contains at least $ \ell := 4 \gamma a \lambda^2 \tfrac{1}{\epsilon^3} n^{1/k} \ln (n) \ln(\lambda) $ clusters.
Here the choice of $ \ell $ comes from Theorem~\ref{thm:special Chernoff bound}; $ a $ is a constant that controls the error probability, $ \epsilon $ controls the amount of deviation from the mean in the Chernoff bound, and $ \gamma $ is a constant from the theorem.
Condition~(F3) states that clusters $ c $ for which some edge incident on $ u $ and $ c $ is not contained in $ \mathcal{F}_i (u) $ need to be contained in $ \mathcal{I}_i (u) $ (called \emph{inactive} clusters in~\cite{BaswanaKS12}).
Intuitively this is the case because for such clusters the spanner algorithm cannot rely on all relevant edges being present at the next level and thus has to deal with these clusters in a special way.

The goal is to design a filtering algorithm with a small value of $ \lambda $ that has small update time.
An additional goal in the design of the algorithm is to keep the number of changes performed to~$ \mathcal{F}_i (u) $ small.
A change to $ \mathcal{F}_i (u) $ after processing an update to $ E_i (u, c) $ is also called an \emph{induced update} as, in the overall dynamic spanner algorithm, such changes might appear as updates to level~$ i + 1 $ in the hierarchy, i.e., the insertion (deletion) of an edge $ (u, v) $ to (from) $ \mathcal{F}_i (u) $ might show up as an insertion (deletion) at level~$ i + 1 $.
As this update propagation takes place in all levels of the hierarchy, we would like to have a dynamic filtering algorithm that only performs $ O(1) $ changes to $ \mathcal{F}_i (u) $ per update to its input.

\subsubsection{Filtering Algorithm with Amortized Update Time}

The bound of Baswana et al.\ follows by providing a dynamic filtering algorithm with the following guarantees.

\begin{lemma}[Implicit in~\cite{BaswanaKS12}]\label{thm:filtering algorithm Baswana}
For any $ a > 1 $ and any $ 0 < \epsilon \leq \tfrac{1}{4} $, there is a dynamic filtering algorithm with amortized update time $ O (\tfrac{1}{\epsilon}) $ for which the amortized number of changes performed to~$ \mathcal{F}_i (u) $ per update to~$ E_i (u) $ is at most $ 4 + 10 \epsilon $ such that $ \mathcal{I}_i (u) \leq O (\tfrac{a}{\epsilon^7} n^{1/k} \log^2 (n)) $, i.e., $ U (n) = 4 + 10 \epsilon = O(1) $ and $ T (n) = O (\tfrac{1}{\epsilon}) $.
\end{lemma}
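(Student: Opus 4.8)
Since the statement is credited as implicit in Baswana et al., the plan is to make the bucketing data structure and its amortized analysis explicit. For the fixed vertex $u$ and level $i$ we maintain: for every non-sampled neighboring cluster $c$ the value $|E_i(u,c)|$ together with its current bucket $b(c)$; and for every bucket $j\in\{0,\dots,\lfloor\log_g n\rfloor\}$ its cluster count and a flag recording whether $j$ is currently \emph{large} or \emph{small}. Fix $g>1$ and $\lambda>1$ with $\lambda$ sufficiently larger than $g$, both chosen as suitable powers of $1/\epsilon$, take the large-bucket threshold to be $\ell=4\gamma a\lambda^2\epsilon^{-3}n^{1/k}\ln(n)\ln(\lambda)$ as in (F2), and use a \emph{soft} threshold at both levels: a cluster $c$ is kept in bucket $b(c)$ as long as $|E_i(u,c)|\in[g^{b(c)}/\lambda,\lambda g^{b(c)}]$, and a bucket is flagged large once its cluster count reaches $2\ell$ and stays large until the count drops below $\ell$. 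By definition, $\mathcal{F}_i(u)$ consists of exactly the edges $(u,v)$ whose cluster sits in a large bucket, and $\mathcal{I}_i(u)$ of exactly the clusters sitting in small buckets.

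The update rules are the natural ones. On an insertion or deletion of $(u,v)$ with $v$ in cluster $c$: update $|E_i(u,c)|$; if $c$'s bucket is large, insert/remove $(u,v)$ in $\mathcal{F}_i(u)$; if $|E_i(u,c)|$ has left $[g^{b(c)}/\lambda,\lambda g^{b(c)}]$, reassign $c$ to bucket $\lfloor\log_g|E_i(u,c)|\rfloor$ (because $\lambda\gg g$ this places $|E_i(u,c)|$ safely inside the new range, so $\Omega(g^{b(c)})$ further updates to $E_i(u,c)$ are needed before $c$ moves again), update the affected cluster counts, and bulk-move $c$'s edges into or out of $\mathcal{F}_i(u)$ if source and destination differ in status; finally, if a changed cluster count crossed an activation/deactivation threshold, flip that bucket's flag and bulk-move all of its clusters' edges into or out of $\mathcal{F}_i(u)$ (and its clusters out of or into $\mathcal{I}_i(u)$), all still inside the current update so (F2) is never momentarily violated. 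Correctness is then immediate: (F1) because a cluster only ever sits in a bucket whose range contains its edge count; (F2) because an edge is in $\mathcal{F}_i(u)$ only while its bucket is large and a large bucket always holds at least $\ell$ clusters; (F3) by the definition of $\mathcal{I}_i(u)$. For the size bound, a small bucket holds fewer than $2\ell$ clusters and there are $\lceil\log_g n\rceil$ buckets, so $|\mathcal{I}_i(u)|\le 2\ell\lceil\log_g n\rceil$; choosing $g,\lambda$ as appropriate powers of $1/\epsilon$ makes the $\ln(\lambda)$ factor in $\ell$ cancel against $\ln(g)$ in $\log_g n$, giving $|\mathcal{I}_i(u)|=O(a\epsilon^{-7}n^{1/k}\log^2(n))$.

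For the amortized guarantees I would run a two-level charging argument. \emph{First level:} a reassignment of $c$ out of bucket $j$ bulk-moves $O(|E_i(u,c)|)$ edges in or out of $\mathcal{F}_i(u)$, and only when source and destination differ in status; but the number of updates to $E_i(u,c)$ made since $c$ entered $j$ is at least $|E_i(u,c)|$ at departure up to a factor $1+O(g/\lambda)$ (when $c$ leaves from the top; far more when from the bottom), so this cost is absorbed at amortized rate $1+O(\epsilon)$ per update. \emph{Second level:} a flip of bucket $j$ bulk-moves $O(\ell\lambda g^j)$ edges in or out of $\mathcal{F}_i(u)$, but $j$ was flipped only because $\Omega(\ell)$ clusters have arrived in (or departed from) $j$ since its previous flip, and the contribution of each such cluster $c'$ to the flip cost — about $|E_i(u,c')|$ edges — equals, up to the same $1+O(g/\lambda)$ factor, the number of updates to $E_i(u,c')$ that accumulated its current edge count while $c'$ occupied the \emph{previous} bucket; charging the flip against those updates again gives amortized $1+O(\epsilon)$ per update, and one checks that every elementary edge update is charged at most twice this way — once at the first level for the cluster's own departure, once at the second level for the flip of the bucket the cluster next enters — plus the at-most-one direct change in $\mathcal{F}_i(u)$ it may cause, plus the one direct change when the update is to a sampled-cluster edge. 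Summing these and tuning the constants yields an amortized number of changes to $\mathcal{E}_i(u)$ of at most $4+10\epsilon$. Running time is accounted in lockstep: each change to $\mathcal{F}_i(u)$ is $O(1)$ balanced-search-tree work and the per-update threshold tests and bucket recomputation cost $O(1/\epsilon)$, so $T(n)=O(1/\epsilon)$ and the amortized update time is $(4+10\epsilon)\cdot O(1/\epsilon)=O(1/\epsilon)$. Nothing here touches the hooks: the filtering algorithm only maintains $\mathcal{F}_i(u)$ and $\mathcal{I}_i(u)$ subject to (F1)--(F3), and Lemma~\ref{lem:filtering property} already certifies that these imply the hook-probability bound used by the spanner construction.

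The main obstacle is the second-level charging. A single reassignment of one cluster can simultaneously deactivate its source bucket and activate its destination bucket, each triggering a bulk relabeling of $\Theta(\ell\lambda g^j)$ edges — vastly more than the one edge update that caused it — so a worst-case accounting fails outright, and the naive amortized fix of charging a flip to its single triggering reassignment gives a bound that degrades with $n$. The fix is the hierarchical charge above, and the delicate part is to choose the soft-threshold gaps at both levels large enough and mutually consistent that every elementary update is charged $O(1)$ times \emph{total} while still pinning the final constant at $4+10\epsilon$ rather than an unspecified $O(1)$; verifying this bookkeeping — essentially what Baswana et al.\ do — is the real content of the proof. A minor secondary point is to confirm that executing each flip atomically inside the update step keeps (F2) invariant at every intermediate moment, not merely between updates.
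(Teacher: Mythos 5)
Your proposal matches the paper's treatment of this lemma: the paper does not actually prove it but cites it as implicit in Baswana et al.\ and sketches exactly the scheme you describe --- soft thresholds both for a cluster's bucket membership and for a bucket's activation status, with a two-level amortization that charges a cluster's move against the $\Theta(g^j)$ updates accumulated since it entered its bucket and charges (de)activation against the $\Theta(\ell)$ clusters that joined or left since the last flip. Your reconstruction, including singling out the bucket-flip charging as the delicate step, is at least as detailed as the paper's own sketch, which defers the full bookkeeping (and the specific constant $4+10\epsilon$) to the cited work.
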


Note that the dynamic filtering algorithm is the only part of the algorithm by Baswana et al.\ that requires amortization.
Thus, if one could remove the amortization argument from the dynamic filtering algorithm, one would obtain a dynamic spanner algorithm with worst-case expected guarantee on the update time, which in turn could be strengthened to a worst-case high-probability guarantee.
This is exactly how we proceed in the following.

To facilitate the comparison with our new filtering algorithm, we shortly review the amortized algorithm of Baswana et al.
Their algorithm uses $ g = \lambda = \tfrac{1}{\epsilon} $ where $ \epsilon $ is a constant that is optimized to give the fastest update time for the overall spanner algorithm.
This leads to $ O (\log_g (n)) $ overlapping buckets such that all clusters in bucket~$ j $ have between $ g^{j-1} $ and $ g^j $ edges incident on~$ u $.

The algorithm does the following:
Every time the number of edges incident on $ u $ of some cluster~$ c $ in bucket~$ j $ grows to $ g^{j + 1} $, $ c $ is moved to bucket~$ j + 1 $, and every time this number falls to~$ g^{j - 1} $, $ c $ is moved to bucket~$ j - 1 $.
The algorithm further distinguishes \emph{active} and \emph{inactive} buckets such that active buckets contain at least $ \ell $ clusters and all inactive buckets contain at most $ \kappa \ell $ clusters for some constant~$ \kappa $.
An active bucket will be inactivated if its size falls to $ \ell $ and an inactive bucket will be activated if its size grows to $ \kappa \ell $.
Additionally, the algorithm makes sure that $ \mathcal{F}_i (u) $ consists of all edges incident on clusters from active buckets and that $ \mathcal{I}_i $ consists of all clusters in inactive buckets.

By employing these soft thresholds for maintaining the buckets and their activation status, Baswana et al.\ make sure that for each update to~$ E_i (u) $ the running time and the number of changes made to~$ \mathcal{F}_i (u) $ is constant.
For example, every time a cluster $ c $ is moved from bucket~$ j $ to bucket~$ j + 1 $ with a different activation status, the algorithm incurs a cost of at most $ O (g^{j + 1}) $ -- i.e., proportional to~$ | E_i (u, c) | $ -- for adding or removing the edges of $ E_i (u, c) $ to $ \mathcal{F}_i (u) $.
This cost can be amortized over at least $ g^{j + 1} - g^j = \Theta (g^{j+1}) $ insertions to $ E_i (u, c) $, which results in an amortized cost of $ O (g) = O (\tfrac{1}{\epsilon}) $, i.e., constant when $ \tfrac{1}{\epsilon} $ is constant.
Similarly, the work connected to activation and de-activation is~$ O (g) $ when amortized over $ \Theta (\ell) $ clusters joining or leaving the bucket, respectively.

\subsection{Modified Filtering Algorithm}\label{sec:worst-case filtering}

In the following, we provide our new filtering algorithm with worst-case expected update time, i.e., we prove the following theorem.
\begin{theorem}
For every $ 0 \leq i \leq k-1 $ and every $ u \in \mathcal{N}_i $, there is a filtering algorithm that has worst-case expected update time $ O (\log (n)) $ and per update performs at most $ 10.6 $ changes to $ \mathcal{F}_i (u) $ in expectation, i.e., $ U (n) = 10.6 $ and $ T (n) = O (\log (n)) $.
The maximum size of $ \mathcal{I}_i (u) $ is $ O (n^{1/k} \log^{6} (n) \log \log (n)) $.
\end{theorem}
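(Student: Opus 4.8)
The plan is to keep the bucketing architecture underlying Lemma~\ref{thm:filtering algorithm Baswana} but replace its deterministic soft thresholds by a randomized rule, and then charge every expensive operation — a cluster changing bucket, or a bucket changing activation status — against the \emph{deterministic} edge count it operates on, exploiting that the probability of a cluster still residing in a given bucket decays exponentially in how far its count has drifted from that bucket's canonical size. Concretely, I fix a base $g = O(1)$ and a poly-logarithmic drift bound $\lambda$, keep $\ell = 4 \gamma a \lambda^2 \epsilon^{-3} n^{1/k} \ln(n) \ln(\lambda)$, give bucket $j$ canonical count $g^j$, and maintain the counts $|E_i(u,c)|$ together with the number of clusters in each bucket. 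A bucket is declared \emph{active} once it holds $\geq \ell$ clusters; I set $\mathcal{F}_i(u)$ to be the union of the sets $E_i(u,c)$ over clusters $c$ in active buckets, and $\mathcal{I}_i(u)$ to be the set of clusters in inactive buckets, so that (F2) and (F3) hold by construction and only (F1), the expected number of changes to $\mathcal{F}_i(u)$, and the running time remain to be controlled.

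The randomized rules are: when an insertion raises the count of a cluster $c$ in bucket $j$ above $g^{j+1}$, move $c$ up to bucket $j+1$ with probability $\Theta(1/g^{j+1})$; when a deletion lowers that count below $g^{j}$, move $c$ down to bucket $j-1$ with probability $\Theta(g^{j}/x^2)$, where $x$ is the current count, so that the down-probability grows quadratically with the drift; and govern each bucket's activation by the same device one level up, activating an inactive bucket that has reached $\ell$ clusters with probability $\Theta(1/\ell)$ per further arrival (symmetrically for deactivation, with a deficit-adapted probability). First I would prove (F1). On the insertion side, remaining in bucket $j$ until the count equals $g^{j+1} 2^r$ requires failing $\Theta(g^{j+1} 2^r)$ independent move attempts, which happens with probability $(1-\Theta(1/g^{j+1}))^{\Theta(g^{j+1} 2^r)} = e^{-\Omega(2^r)}$; on the deletion side, descending through the octave $[g^{j}/2^{r+1}, g^{j}/2^{r}]$ yields $\Theta(2^r)$ expected move attempts, so surviving octaves $1, \dots, r$ again has probability $e^{-\Omega(2^r)}$. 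Choosing the drift factor $2^r = \Theta(\log n)$ makes these $n^{-\Omega(1)}$, and a union bound over the $O(\log n)$ buckets and $\operatorname{poly}(n)$ updates shows (F1) fails only with probability $1/n^a$, which is absorbed into the overall error probability; this gives (F1) with aspect ratio $b = \Theta(\lambda^2)$ as required by Theorem~\ref{thm:special Chernoff bound}.

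Next I would bound the expected number of changes to $\mathcal{F}_i(u)$ per update. Fix an update $\sigma$, say an insertion, after which the count of the affected cluster $c$ equals some \emph{fixed} value $x^*$. If $\sigma$ triggers no bucket move, or a move between two buckets of equal activation status, only the single new edge changes $\mathcal{F}_i(u)$; the expensive cases are $c$ crossing an active/inactive boundary (cost $O(x^*)$) or $\sigma$ flipping a bucket's activation status (cost $O(\ell g^j)$). In the first case, $c$ can sit in bucket $j$ at time $\sigma$ only if $x^* = g^{j+1} 2^r$ (or $x^* = g^{j}/2^r$) for one of $O(\log \lambda)$ buckets, and by the previous step that has probability $e^{-\Omega(2^r)}$; multiplying by the move probability $\Theta(1/g^{j+1})$ (resp.\ $\Theta(4^r/g^j)$) and the cost $O(x^*)$ gives a contribution $2^r e^{-\Omega(2^r)}$, and $\sum_{r \geq 0} 2^r e^{-\Omega(2^r)} = O(1)$. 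In the second case, $\sigma$ can move at most one cluster into at most one bucket, with probability $\Theta(1/g^{j+1})$, and conditioned on that move the activation cost is $\Theta(1/\ell) \cdot O(\ell g^{j+1}) = O(g^{j+1})$, again $O(1)$. Summing these contributions together with the base cost of $1$, and tracking the hidden constants in the geometric tails, yields the stated bound of $16$; deletions are symmetric with the down-move probabilities. The running-time bound $T(n) = O(\log n)$ then follows because, storing the counts, bucket membership lists, and per-bucket cluster counts in balanced search trees, each elementary update to these structures and to the sets $\mathcal{F}_i(u)$ and $\mathcal{I}_i(u)$ costs $O(\log n)$ while the expected number of such elementary changes per update is $O(1)$. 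Finally, there are $O(\log_g n)$ buckets, each inactive bucket holds $O(\ell)$ clusters up to the slack introduced by the randomization, and plugging in $\ell = \Theta(\lambda^2 n^{1/k}\log(n)\log(\lambda))$ with $\lambda$ and $g$ chosen as suitable poly-logarithmic (resp.\ constant) values gives $|\mathcal{I}_i(u)| = O(n^{1/k}\log^6(n)\log\log(n))$.

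The step I expect to be the main obstacle is the deletion side of the induced-update bound — equivalently, the design of the down-move probability. Unlike the insertion side, where a cluster accumulates $\Theta(g^{j+1} 2^r)$ survival events before its count can drift by a factor $2^r$, a cluster descending below $g^{j}$ has been in the bucket for at most $g^{j}$ deletions, so the survival argument alone cannot produce the decay we need; one must make the per-step down-probability grow like $g^{j}/x^2$ precisely so that the \emph{product} of the survival probability, the move probability, and the $O(x^*)$ cost still telescopes into a convergent geometric series, while simultaneously keeping the probability of undershooting $g^{j}/\lambda$ polynomially small with only a poly-logarithmic $\lambda$. A secondary technical point is that the bucket activation switches must themselves be made worst-case-expected cheap, which is why the analogous randomized mechanism is needed for crossing the $\ell$-cluster thresholds rather than the soft thresholds of Baswana et al.
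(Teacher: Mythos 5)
Your core device---moving a cluster up with probability $\Theta(1/g^{j+1})$ once its count exceeds the bucket's upper threshold, and down with a probability growing quadratically in the drift (your $\Theta(g^{j}/x^{2})$ is, up to constants, exactly the paper's $2^{2t+1}/\alpha_j$ with $x \approx \alpha_j/2^t$)---is the same probabilistic-threshold technique the paper uses, and your survival-probability analysis of cluster moves matches the paper's computation of the probabilities $p_t$. The architectural difference is that you retain the active/inactive bucket mechanism of Baswana et al.\ and randomize the activation switches, whereas the paper deliberately discards activation: it fixes an arbitrary order on clusters and edges and declares (invariants (B2)/(B3)) that the first $\ell \lambda \alpha_j$ cluster-edge pairs of each bucket are always excluded from $\mathcal{F}_i(u)$ and the first $1 + \lambda^2 \ell$ clusters are always in $\mathcal{I}_i(u)$, so a bucket never changes status and every induced update is attributable to a single cluster move of cost $O(|E_i(u,c)|)$. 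This difference is where your proposal has two genuine gaps.

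First, your activation accounting is off by a factor of $\lambda = \Theta(\log n)$. When a bucket $j$ flips status, every edge of every cluster in it enters or leaves $\mathcal{F}_i(u)$; under (F1) a cluster in bucket $j$ may hold up to $\lambda g^{j}$ edges, so the flip costs up to $O(\ell \lambda g^{j})$, not the $O(\ell g^{j+1})$ you write (with $g = O(1)$ and $\lambda$ polylogarithmic these differ by $\Theta(\log n)$). Multiplying by your $\Theta(1/g^{j+1}) \cdot \Theta(1/\ell)$ gives expected cost $O(\lambda) = O(\log n)$ per update, i.e.\ $U(n) = O(\log n)$ rather than $16$---and the paper explicitly notes that $U(n) = O(\log n)$ degrades the spanner's update time to $O(\log n)^{k/2}$, which is exactly what the whole construction is designed to avoid. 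Rescuing this would require showing that the \emph{expected} total edge count of a bucket at the (random) stopping time of activation is $O(\ell g^{j})$, a conditioning argument you have not supplied; the paper's prefix-based filtering makes the question moot because no $\Omega(\ell)$-cost event ever occurs.

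Second, you establish (F1) only with high probability via a union bound, but (F1) is a correctness invariant, not a cost bound: the hook-probability guarantee (Lemma~\ref{lem:filtering property}) rests on applying Theorem~\ref{thm:special Chernoff bound} with aspect ratio $b = \lambda^2$, which silently fails if any cluster has drifted outside $[g^{j}/\lambda, \lambda g^{j}]$. The paper enforces (F1) \emph{deterministically} by adding a hard cap---a cluster moves unconditionally the moment its count reaches $\lambda \alpha_j$ or $\alpha_j/\lambda$---and only the expected \emph{cost} of these rare forced moves is bounded probabilistically (the $q \cdot 3n \leq 0.41$ term in the paper's calculation). You should add the same cap; without it the resulting spanner algorithm is not Las Vegas with respect to the filtering conditions. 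The constant $16$ you quote would then have to be re-derived in whichever architecture you settle on.
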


Together with Theorem~\ref{thm:main theorem Baswana} the promised result follows.

\begin{corollary}[Restatement of Theorem~\ref{thm:spanner-expected}]
For every $ k \geq 2 $, there is a fully dynamic algorithm for maintaining a $ (2k - 1) $-spanner of expected size $ O (k n^{1 + 1/k} \log^{6} (n) \log \log (n)) $ that has worst-case expected update time $ O (14^{k/2} \log(n)) $.
\end{corollary}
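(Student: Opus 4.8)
The plan is to obtain the corollary from the filtering theorem just stated by feeding its parameters into Theorem~\ref{thm:main theorem Baswana}: with $ U(n) = 16 $ and $ T(n) = O(\log n) $ the spanner update time becomes $ O((3 + 4\epsilon + 16)^{k/2}\cdot O(\log n)) = O(20^{k/2}\log n) $ for any fixed $ \epsilon \le \tfrac14 $, and the size bound $ O(k n^{1+1/k} + k n\cdot\max_{i,u}|\mathcal I_i(u)|) $ becomes $ O(k n^{1+1/k}\log^6(n)\log\log(n)) $ after substituting $ |\mathcal I_i(u)| = O(n^{1/k}\log^6(n)\log\log(n)) $, and the guarantees stay worst-case (expected) since those of the filter do. So the real content is the filtering theorem, for which I now sketch a plan — essentially a randomized replacement of the soft-threshold bucketing behind Lemma~\ref{thm:filtering algorithm Baswana}.

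\textbf{The algorithm.} Fix a constant $ g\ge 2 $, use buckets $ 0,\dots,\lceil\log_g n\rceil $, a slack parameter $ \lambda = \Theta(\log n) $, declare the ``comfortable'' size range of bucket $ j $ to be $ [g^j, g^{j+1}) $, and instantiate the activation count $ \ell $ of~(F2) with aspect ratio $ b = \lambda^2 $, so $ \ell = \Theta(n^{1/k}\log^3(n)\log\log(n)) $. Each cluster $ c $ records its current bucket and a bit ``flagged''; I would let $ \mathcal F_i(u) $ be exactly the edges of the flagged clusters and define $ \mathcal I_i(u) $ as the set of \emph{unflagged} clusters, so that~(F3) holds by construction, while~(F1) is enforced by hard caps and~(F2) by only flagging clusters of buckets whose count is comfortably above $ \ell $ and completing the unflagging of a bucket before its count can fall to $ \ell $. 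The movement rules replace Baswana et al.'s deterministic soft thresholds by randomized ones: after an insertion into $ E_i(u,c) $ with $ c $ in bucket $ j $, if $ |E_i(u,c)| \ge g^{j+1} $ move $ c $ up to bucket $ \lfloor\log_g|E_i(u,c)|\rfloor $ with probability $ \Theta(g^{-(j+1)}) $, and move it up unconditionally once $ |E_i(u,c)| $ reaches $ \lambda g^j $; symmetrically, after a deletion, if the size has fallen to $ g^j - s $ for $ s\ge 1 $, move $ c $ down with probability $ \Theta(\min\{1, s/g^j\}) $ (a probability growing with the deficit), and unconditionally once the size reaches $ g^j/\lambda $. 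Bucket (de)activation is handled identically, using the cluster count against a constant multiple of $ \ell $ in place of the size, with the work of flagging/unflagging a bucket's clusters spread as an extra $ \Theta(\lambda)=\Theta(\log n) $ factor of piggybacked work over the subsequent updates touching that bucket — this is what creates the $ \log n $ factor in $ T(n) $.

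\textbf{The analysis.} Since moves never change $ |E_i(u,c)| $ and the adversary is oblivious, the size of each cluster immediately after a fixed update $ \sigma $ is deterministic; only the \emph{bucket} it occupies is random. For the up-direction, a cluster can still be in bucket $ j $ while its size exceeds $ g^{j+1} $ by a factor $ 2^t $ only if all $ \Theta(2^t g^{j+1}) $ insertions it received since passing $ g^{j+1} $ came up tails, which has probability $ e^{-\Theta(2^t)} $; for the down-direction the linearly growing probability makes a correction typically fire after $ \approx\sqrt{g^j} $ deletions and makes reaching deficit $ \Theta(g^j) $ have probability $ e^{-\Theta(g^j)} $. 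Combining these with the deterministic size at $ \sigma $, the expected cost charged to $ \sigma $ (and the expected number of induced changes to $ \mathcal F_i(u) $) is $ \sum_j \PP[c\text{ in bucket }j\text{ at }\sigma]\cdot\PP[\text{move fires}]\cdot\Theta(\text{cost of the move}) $, which telescopes into a convergent geometric-type series dominated, up the hierarchy, by the bucket one step from correct, hence is $ O(1) $; tuning the constants in the move probabilities makes it at most $ 16 $, giving $ U(n)=16 $, and the $ \Theta(\log n) $ piggyback factor for (de)activation gives $ T(n)=O(\log n) $. Forced hard-cap moves and forced (de)activations add only $ o(1) $ to every expectation, their probability being $ e^{-\Theta(\lambda)}=n^{-\Theta(1)} $ (resp.\ exponentially small in $ \ell $) against a polynomial cost. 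For the size bound, $ |\mathcal I_i(u)| $ is the number of unflagged clusters, each in an inactive bucket or a bucket mid-(de)activation; with $ O(\log n) $ buckets, each holding $ O(\ell\cdot\mathrm{polylog}(n)) $ such clusters (the randomized trigger fires while the count is within a polylog factor of $ \ell $), this is $ O(n^{1/k}\log^6(n)\log\log(n)) $. Finally (F1) holds by the hard caps, (F2) by the flagging invariant, (F3) by definition, and since bucket sizes and counts stay in the ranges required for Theorem~\ref{thm:special Chernoff bound}, Lemma~\ref{lem:filtering property} continues to apply.

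\textbf{Main obstacle.} The hard part is the down-direction and bucket-deactivation analysis. Unlike an up-move, when a cluster (or a bucket's count) has just dropped below its threshold there are only a handful of past updates on which a coin could have fired, so a fixed reciprocal-of-threshold probability is far too weak to avoid frequent forced corrections; one must tune the per-step probability to the current deficit so that survival probabilities still decay fast enough to make forced corrections — hence violations of~(F1)/(F2) — astronomically unlikely, while keeping the expected per-step correction cost $ O(1) $. Simultaneously one must interleave the lazy unflagging of a deactivating bucket carefully enough that its clusters are unflagged before the count can reach $ \ell $, so~(F2) is never violated, without inflating either $ |\mathcal I_i(u)| $ or the worst-case expected update time. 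Making all of these constraints hold at once, without any recourse to the charging and potential arguments available in the amortized setting, is where the bulk of the technical work lies.
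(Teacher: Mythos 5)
Your first paragraph is exactly the paper's proof of this corollary: plug $U(n)=16$, $T(n)=O(\log n)$ and $\max_{i,u}|\mathcal{I}_i(u)| = O(n^{1/k}\log^6(n)\log\log(n))$ from the preceding filtering theorem into Theorem~\ref{thm:main theorem Baswana}, giving $(3+4\epsilon+16)^{k/2}\cdot O(\log n) \le O(20^{k/2}\log n)$ and the stated size, with the worst-case expected nature inherited. That part is correct and needs nothing more. The issues are in your sketch of the filtering theorem itself, which you rightly identify as the real content, and which deviates from the paper in two ways that break the claimed bounds.

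First, your down-move rule fails quantitatively. With fire probability $\Theta(s/g^j)$ at deficit $s$, consider the deletion at which the deficit reaches $s=\sqrt{g^j}$: the survival probability up to that point is $\prod_{s'<s}(1-\Theta(s'/g^j)) = e^{-\Theta(s^2/g^j)} = \Omega(1)$, the coin fires with probability $\Theta(1/\sqrt{g^j})$, and the move costs $\Theta(g^j)$ induced updates, so that single update has expected induced-update count $\Omega(\sqrt{g^j})$ --- up to $\Omega(\sqrt{n})$ at the top buckets --- rather than $O(1)$. The paper avoids this by not flipping any coin until the size has already halved to $\alpha_j/2$, and then using probability $\min(2^{2t+1}/\alpha_j,1)$ on the dyadic scale $t$ of the current size, so that (probability)$\times$(move cost) is $\Theta(2^t)$ while the survival probability decays like $e^{-(2^t-2)}$; the series $\sum_t 2^t e^{-2^t}$ then converges. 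Second, you retain Baswana et al.'s bucket activation/deactivation with lazily spread (un)flagging, which reintroduces exactly the burst the paper's first design principle eliminates: unflagging a deactivating bucket requires removing up to $\Theta(\ell\lambda g^{j+1})$ edges from $\mathcal{F}_i(u)$, and these are \emph{induced updates} counting toward $U(n)$, not merely local work counting toward $T(n)$; spreading them over the $\Theta(\ell g^j)$ underlying edge updates that can shrink the bucket still leaves $U(n)=\Omega(\lambda)=\Omega(\log n)$, which exponentiates to $O(\log n)^{k/2}$ instead of $O(20^{k/2})$. The paper instead drops activation entirely and filters deterministically by a fixed order: the first $\ell\lambda\alpha_j$ cluster-edge pairs of each bucket's tree are always excluded from $\mathcal{F}_i(u)$ and the first $1+\lambda^2\ell$ clusters of each bucket always form $\mathcal{I}_i(u)$, so a cluster move induces only $O(|E_i(u,c)|)$ changes and the $O(\log n)$ in $T(n)$ comes merely from balanced-search-tree operations.
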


We now apply the reduction of Theorem~\ref{thm:main} to maintain $ O (\log (n)) $ instances of the dynamic spanner algorithm and use the union of the maintained subgraphs as the resulting spanner.
The reduction guarantees that, at any time, one of the maintained subgraphs, and thus also their union, will indeed be a spanner and that the update-time bound holds with high probability.
\begin{corollary}[Restatement of Theorem~\ref{thm:spanner-converted}]
For every $ k \geq 2 $, there is a fully dynamic algorithm for maintaining a $ (2k - 1) $-spanner of expected size $ O (k n^{1 + 1/k} \log^{7} (n) \log \log (n)) $ that has worst-case update time $ O (14^{k/2} \log^3 (n)) $ with high probability.
\end{corollary}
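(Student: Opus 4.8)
The plan is to instantiate the black-box reduction of Theorem~\ref{thm:main} with the worst-case expected spanner algorithm of Theorem~\ref{thm:spanner-expected}, and then to combine the $ \Theta(\log(n)) $ copies that the reduction maintains into a single spanner by taking their union (restricted to the current edge set). First I would pin down the parameters of the input algorithm: feeding the new filtering theorem above --- which gives a worst-case expected dynamic filtering algorithm with $ U(n) = 16 $ induced updates in expectation per update, update time $ T(n) = O(\log(n)) $, and $ \max_{i,u} |\mathcal{I}_i(u)| = O(n^{1/k} \log^6(n) \log\log(n)) $ --- into Theorem~\ref{thm:main theorem Baswana} with a constant $ \epsilon < \tfrac14 $ yields a fully dynamic Las Vegas algorithm $ A $ that maintains a $ (2k-1) $-spanner $ H $ of expected size $ O(k n^{1+1/k} + k n \cdot n^{1/k} \log^6(n) \log\log(n)) = O(k n^{1+1/k} \log^6(n) \log\log(n)) $ with worst-case expected update time $ \alpha = O\bigl((3 + 4\epsilon + 16)^{k/2} \log(n)\bigr) = O(20^{k/2} \log(n)) $; this is precisely the algorithm of Theorem~\ref{thm:spanner-expected}.

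Next I would check the hypotheses of Theorem~\ref{thm:main} for $ A $. The data structure stores only edges, so at most $ n^2 $ items, polynomial in $ n $, are present at any time; and for any edge set $ S $ on the vertex set of size polynomial in $ n $, the Baswana et al.\ construction --- together with all of its auxiliary data structures and the a priori samples $ S_0 \supseteq \cdots \supseteq S_k $ --- can be built from scratch in time polynomial in $ |S| + n $, since we only need polynomial (not fast) preprocessing. Theorem~\ref{thm:main} then produces an algorithm $ A' $ that runs $ q = \Theta(\log(n)) $ independent copies $ A_1, \ldots, A_q $ of $ A $, maintaining spanners $ H_1, \ldots, H_q $, processes each update in $ O(\alpha \log^2(n)) = O(20^{k/2} \log^3(n)) $ time with high probability, and always knows an index $ i^\star $ for which the copy $ A_{i^\star} $ is fully up to date, so that $ H_{i^\star} $ is at that moment a genuine $ (2k-1) $-spanner of the current graph $ G $.

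It remains to expose a single spanner to the caller without re-introducing the lack of continuity discussed in Remark~\ref{remark:main}. I would output $ \hat H := \bigl( \bigcup_{1 \leq i \leq q} H_i \bigr) \cap E(G) $. Because $ A_{i^\star} $ is current we have $ H_{i^\star} \subseteq E(G) $, hence $ H_{i^\star} \subseteq \hat H \subseteq E(G) $; thus $ \hat H $ is a subgraph of $ G $ that contains a valid $ (2k-1) $-spanner and is therefore itself a valid $ (2k-1) $-spanner, of expected size at most $ \sum_{i} \EX[|H_i|] = \Theta(\log(n)) \cdot O(k n^{1+1/k} \log^6(n) \log\log(n)) = O(k n^{1+1/k} \log^7(n) \log\log(n)) $. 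Intersecting with $ E(G) $ is what guarantees $ \hat H \subseteq E(G) $ even while some copies are temporarily stale because they are mid-update: a lagging copy may still hold an already-deleted edge in its $ H_i $, and that edge must be kept out of the reported spanner. Maintaining $ \hat H $ costs nothing asymptotically: we never materialise a deduplicated edge set, we keep the $ q $ subgraphs separately and answer a spanner query by listing their union intersected with the current edges, so the update time remains the $ O(20^{k/2} \log^3(n)) $ (w.h.p.)\ already spent running $ A' $; and since each elementary change --- an edge change inside some $ H_i $, or an insertion/deletion in $ G $ --- alters $ \hat H $ by $ O(1) $ edges, the continuity that the single-copy version would destroy is restored.

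I do not expect a genuine obstacle: conditioned on the new filtering theorem and on Theorem~\ref{thm:main}, this is essentially bookkeeping. The points that need a little care are (i) verifying the polynomial-preprocessing hypothesis of Theorem~\ref{thm:main} for the (rather elaborate) Baswana et al.\ data structure, (ii) tracking the polylogarithmic factors --- one $ \log(n) $ from running $ \Theta(\log(n)) $ copies, two more from the $ \log^2(n) $ blow-up in Theorem~\ref{thm:main}, and a further $ \log(n) $ in the spanner size from taking the union --- and (iii) noticing that the union must be intersected with $ E(G) $ so that the reported object is literally a subgraph of the current graph. All the difficulty --- making the filtering subroutine worst-case expected instead of amortized, and establishing the reduction of Theorem~\ref{thm:main} --- lies in the results we are allowed to assume.
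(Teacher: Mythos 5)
Your proposal is correct and follows essentially the same route as the paper: instantiate Theorem~\ref{thm:main} with the worst-case expected algorithm of Theorem~\ref{thm:spanner-expected}, run the $\Theta(\log(n))$ copies, and output the union of their spanners, picking up one extra $\log(n)$ factor in the size and a $\log^2(n)$ factor in the update time. Your additional observation that the union should be intersected with the current edge set $E(G)$ (since lagging copies may still hold deleted edges) is a small technical refinement the paper glosses over, but it does not change the argument or the bounds.
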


\subsubsection{Design Principles}

Our new algorithm uses the following two ideas.
First, we observe that it is not necessary to keep only the edges incident from clusters of small buckets away from $ \mathcal{F}_i (u) $.
We can also, somewhat more aggressively, keep away the edges incident from the first $ \ell $ clusters of large buckets out of~$ \mathcal{F}_i (u) $.
In this way, we avoid that many updates are induced if the size of a bucket changes from small to large or vice versa.
Our modified filtering is deterministic based only on the current partitioning of the clusters into buckets and on an arbitrary, but fixed ordering of vertices, clusters, and edges.
This is a bit similar to the idea in~\cite{BodwinK16} of always keeping the ``first'' few incident edges of each vertex in the spanner.

Second, we employ a probabilistic threshold technique where, after exceeding a certain threshold on the size of the set $ E_i (u, c) $, a cluster $ c $ changes its bucket with probability roughly inverse to this size threshold.
Moving a cluster is an expensive operation that generates changes to the set of filtered edges, which the next level in the spanner hierarchy has to process as induced updates.
The idea behind the probabilistic threshold approach is that by taking a sampling probability that is roughly inverse to the number of updates induced by the move, there will only be a constant number of changes in expectation.
A straightforward analysis of this approach shows that in each bucket the size threshold will not be exceeded by a factor of more than $ O (\log (n)) $ with high probability, which immediately bounds the expected number of changes to the set of filtered edges by $ O (\log (n)) $.
By a more sophisticated analysis, taking into account the diminishing probability of not having moved up previously, we can show that exceeding the size threshold by a factor of $ 2^t $ happens with probability $ O (1/2^{e^t}) $.
Thus, the expected number of induced updates is bounded by an exponentially decreasing series converging to a constant.
A similar, but slightly more involved algorithm and analysis is employed for clusters changing buckets because of falling below a certain size threshold.

We remark that a deterministic deamortization of the filtering algorithm by Baswana et al.\ might be possible in principle without resorting to the probabilistic threshold technique, maybe using ideas similar to the deamortization in the dynamic matching algorithm of Bhattacharya et al.~\cite{BhattacharyaHN17}.
However, such a deamortization needs to solve non-trivial challenges and the other parts of the dynamic spanner algorithm would still be randomized.
Furthermore, we believe that the probabilistic threshold technique leads to a significantly simpler algorithm.
Similarly it might be possible to use the probabilistic threshold technique to emulate the less aggressive filtering of Baswana et al.\ that only filters away edges incident on large buckets.
Here, not using the probabilistic threshold technique seems the simpler choice.

\subsubsection{Setup of the Algorithm}

In our algorithm, described below for a fixed vertex $ u $, we work with an arbitrary, but fixed, order on the vertices of the graph.
The order on the vertices induces an order on the edges, by lexicographically comparing the ordered pair of incident vertices of the edges, and an order on the clusters, by comparing the respective cluster centers.
For each $ 0 \leq j \leq \lfloor \log (n) \rfloor $, we maintain a bucket by organizing the clusters in bucket~$ j $ in a binary search tree $ B_j $, employing the aforementioned order on the clusters.
Similarly, for $ 0 \leq j \leq \lfloor \log (n) \rfloor $, we organize the edges incident on $ u $ and each bucket~$ j $ in a binary search tree $ T_j $, i.e., a search tree ordering the set of edges $ \bigcup_{c \in B_j} E_i (u, c) $, where these edges are compared lexicographically as cluster-edge pairs.

We set $ \lambda = 2^{\lceil \log (4 + \ln (n)) \rceil} = O (\log (n)) $, $ \ell = 4 \gamma a \lambda^2 \tfrac{1}{\epsilon^3} n^{1/k} \ln (n) \ln (\lambda) = O (n^{1/k} \log^{3} (n) \log{\log (n)}) $ and, for every $ 0 \leq j \leq \lfloor \log (n) \rfloor $ we set $ \alpha_j = 2^j $.
Our algorithm will maintain the following invariants for every $ 0 \leq j \leq \lfloor \log (n) \rfloor $:
\begin{enumerate}
\item[(B1)] For each cluster $ c $ in bucket~$ j $, $ \tfrac{\alpha_j}{\lambda} \leq | E_i (u, c) | \leq \lambda \alpha_j $.
\item[(B2)] The edges of the first $ \ell \cdot \lambda \alpha_j $ cluster-edge pairs of $ T_j $ (or all cluster-edge pairs of $ T_j $ if there are less than $ \ell \cdot \lambda \alpha_j $ of them) are not contained in $ \mathcal{F}_i (u) $ and the remaining edges of $ T_j $ are contained in $ \mathcal{F}_i (u) $.
\item[(B3)] The first $ 1 + \lambda^2 \ell $ clusters of $ B_j $ are contained in $ \mathcal{I}_i (u) $ and the remaining clusters of $ B_j $ are not contained in $ \mathcal{I}_i (u) $.
\end{enumerate}

Observe that invariant~(B1) is equal to condition~(F1) and that invariant~(B3) immediately implies the claimed bound on $ \mathcal{I}_i (u) $ as there are $ O (\log (n)) $ buckets, each contributing $ O (\lambda^2 \ell) $ clusters.

Furthermore, the invariants also imply correctness in terms of conditions (F2) and~(F3) because of the following reasoning:
For condition~(F2), let $ (u, v) \in \mathcal{F}_i (u) $ and let $ c $ denote the cluster of~$ v $.
Then, by invariant~(B2), there are at least $ \ell \cdot \lambda \alpha_j $ cluster-edge pairs contained in $ T_j $ that are lexicographically smaller than the pair consisting of $ c $ and $ (u, v) $.
As each cluster in bucket~$ j $ has at most $ \lambda \alpha_j $ edges incident on $ u $ by invariant~(B1), it follows that there are at least $ \ell $ clusters contained in bucket~$ j $ as otherwise $ T_j $ could not contain at least $ \ell \cdot \lambda \alpha_j $ cluster-edge pairs.

For condition~(F3), let $ (u, v) \in E_i (u) \setminus \mathcal{F}_i (u) $ and let $ c $ denote the cluster of~$ v $.
Then the pair consisting of $ c $ and $ (u, v) $ must be among the first $ \ell \cdot \lambda \alpha_j $ entries of~$ T_j $ by invariant~(B2).
As each cluster in bucket~$ j $ has at least $ \tfrac{\alpha_j}{\lambda} $ edges incident on $ u $ by invariant~(B1), there are thus at most $ \tfrac{\lambda \alpha_j}{\alpha_j / \lambda} \ell = \lambda^2 \ell $ clusters in bucket~$ j $ that are smaller than $ c $ in terms of the chosen ordering on the clusters.
It follows that $ c $ must be among the first $ 1 + \lambda^2 \ell $ clusters of $ B_j $ and by invariant~(B3) is thus contained in $ \mathcal{I}_i (u) $ as required by condition~(F1).

\subsubsection{Modified Bucketing Algorithm}

The algorithm after an update to some edge $ (u, v) $ is as follows, where we denote the unique cluster of $ v $ by $ c $:
\begin{itemize}
	\item If the edge $ (u, v) $ was inserted, check if one of the following cases applies:
	\begin{itemize}
		\item If $ | E_i (u, c) | = 1 $ after the insertion (i.e., $ c $ becomes a neighbor of $ u $), then move $ c $ into bucket~$ 0 $ by performing the following steps:
		\begin{enumerate}
			\item Add $ c $ to $ B_0 $.
			\item Add $ (u, v) $ to $ T_0 $.
		\end{enumerate}
		\item If $ | E_i (u, c) | \geq 2 \alpha_j $ after the insertion, where $ j $ is the number $ c $'s current bucket, do the following: Flip a biased coin that is ``heads'' with probability $ \min (\tfrac{1}{\alpha_j}, 1) $. If the coin shows ``heads'' or if $ | E_i (u, c) | = \lambda \cdot \alpha_j $, then move cluster~$ c $ up to bucket~$ j' = \lceil \log (| E_i (u, c) |) \rceil $ by performing the following steps:
		\begin{enumerate}
			\item Remove $ c $ from $ B_j $ and add it to $ B_{j'} $.
			\item Remove all edges of $ E_i (u, c) $ from $ T_j $ and add them to $ T_{j'} $.
		\end{enumerate}
	\end{itemize}
	\item If the edge $ (u, v) $ was deleted, check if one of the following cases applies:
	\begin{itemize}
		\item If $ | E_i (u, c) | = 0 $ after the deletion (i.e., $ c $ ceases to be a neighbor of $ u $), then move $ c $ out of bucket~$ 0 $ by performing the following steps:
		\begin{enumerate}
			\item Remove $ c $ from $ B_0 $.
			\item Remove $ (u, v) $ from $ T_0 $.
		\end{enumerate}
		\item If $ | E_i (u, c) | \leq \tfrac{\alpha_j}{2} $ after the deletion, where $ j $ is the number $ c $'s current bucket, do the following: Flip a biased coin that is ``heads'' with probability $ \min (\tfrac{2^{2 t + 1}}{\alpha_j}, 1) $ for the maximum $ t \geq 1 $ such that $ | E_i (u, c) | \leq \tfrac{\alpha_j}{2^t} $ (i.e., $ t = \lfloor \log (\tfrac{\alpha_j}{| E_i (u, c) |}) \rfloor $). If the coin shows ``heads'' or if $ | E_i (u, c) | = \tfrac{\alpha_j}{\lambda} $, then move cluster~$ c $ down to bucket~$ j' = \lfloor \log (| E_i (u, c) |) \rfloor $ by performing the following steps:
		\begin{enumerate}
			\item Remove $ c $ from $ B_j $ and add it to $ B_{j'} $.
			\item Remove all edges of $ E_i (u, c) $ from $ T_j $ and add them to $ T_{j'} $.
		\end{enumerate}
	\end{itemize}
\end{itemize}
Additionally, invariants (B2) and (B3) are maintained in the trivial way by making the necessary changes to $ \mathcal{F}_i (u) $ after a change to $ T_j $ and to $ \mathcal{I}_i $ after a change to $ B_j $, respectively.
Furthermore, invariant~(B1) is satisfied because the following invariant (B1') holds as well for every $ 0 \leq j \leq \lfloor \log (n) \rfloor $ by the design of the algorithm:
\begin{enumerate}
\item[(B1')] Whenever a cluster $ c $ moves to bucket~$ j $, $ \tfrac{\alpha_j}{2} <| E_i (u, c) | < 2 \alpha_j $.
\end{enumerate}

\subsubsection{Analysis of Induced Updates and Running Time}

We now analyze the update time and the number of changes to~$ \mathcal{F}_i (u) $ per update to some~$ E_i (u, c) $ for some cluster~$ c $.
These changes are also called induced updates.

If by the update $ c $ becomes a neighbor of $ u $, then one cluster is added to $ B_0 $ and one edge is added to $ T_0 $.
Similarly, if by the update $ c $ ceases to be a neighbor $ u $, then one cluster is removed from $ B_0 $ and one edge is removed from $ T_0 $.
Clearly, both of these cases lead to at most $ 2 $ changes to~$ \mathcal{F}_i (u) $ and a running time of $ O (\log n) $ as both $ B_0 $ and $ T_0 $ are organized as binary search trees.

Now observe that each other type of update causes at most one move of $ c $ from some bucket~$ j $ to some other bucket~$ j' $.
Each such move can be processed in time $ O (| E_i (u, c) | \log{n}) $ as only the cluster~$ c $ is moved from some binary search tree~$ B_j $ to another binary search tree~$ B_{j'} $ and correspondingly only $ | E_i (u, c) | $ cluster-edges pairs are moved from the binary search tree~$ T_j $ to the binary search tree~$ T_{j'} $.
To analyze the number of changes to~$ \mathcal{F}_i (u) $ consider the following case distinction for some cluster-edge pair $ (c, (u, v)) $ being moved from~$ T_j $ to~$ T_{j'} $:
\begin{itemize}
\item If $ (c, (u, v)) $ is among the first $ \ell \cdot \lambda \alpha_j $ cluster-edge pairs of $ T_j $ before being removed from~$ T_j $ and is among the first $ \ell \cdot \lambda \alpha_{j'} $ cluster-edge pairs of $ T_{j'} $ after being added to~$ T_{j'} $, then $ (u, v) $ is neither contained in $ \mathcal{F}_i (u) $ before nor after the move.
Furthermore, at most one cluster-edge pair might start being among the first $ \ell \cdot \lambda \alpha_j $ pairs in $ T_j $ (resulting in the removal of the corresponding edge from~$ \mathcal{F}_i (u) $) and at most one cluster-edge pair might stop being among the first $ \ell \cdot \lambda \alpha_{j'} $ pairs in $ T_{j'} $ (resulting in the addition of the corresponding edge to~$ \mathcal{F}_i (u) $).
Thus, we perform at most $ 2 $ changes to~$ \mathcal{F}_i (u) $ for moving $ (c, (u, v)) $.
\item If $ (c, (u, v)) $ is among the first $ \ell \cdot \lambda \alpha_j $ cluster-edge pairs of $ T_j $ before being removed from~$ T_j $ and is not among the first $ \ell \cdot \lambda \alpha_{j'} $ cluster-edge pairs of $ T_{j'} $ after being added to~$ T_{j'} $, then $ (u, v) $ is not contained in $ \mathcal{F}_i (u) $ before the move, but it is contained in $ \mathcal{F}_i (u) $ after the move.
Furthermore, at most one cluster-edge pair might start being among the first $ \ell \cdot \lambda \alpha_j $ pairs in $ T_j $ (resulting in the removal of the corresponding edge from~$ \mathcal{F}_i (u) $) and no cluster-edge pair will stop being among the first $ \ell \cdot \lambda \alpha_{j'} $ pairs in $ T_{j'} $.
Thus, we perform at most $ 2 $ changes to~$ \mathcal{F}_i (u) $ for moving $ (c, (u, v)) $.
\item If $ (c, (u, v)) $ is not among the first $ \ell \cdot \lambda \alpha_j $ cluster-edge pairs of $ T_j $ before being removed from~$ T_j $ and is among the first $ \ell \cdot \lambda \alpha_{j'} $ cluster-edge pairs of $ T_{j'} $ after being added to~$ T_{j'} $, then $ (u, v) $ is contained in $ \mathcal{F}_i (u) $ before the move, but it is not contained in $ \mathcal{F}_i (u) $ anymore after the move.
Furthermore, no cluster-edge pair will start being among the first $ \ell \cdot \lambda \alpha_j $ pairs in $ T_j $ and at most one cluster-edge pair might stop being among the first $ \ell \cdot \lambda \alpha_{j'} $ pairs in $ T_{j'} $ (resulting in the addition of the corresponding edge to~$ \mathcal{F}_i (u) $).
Thus, we perform at most $ 2 $ changes to~$ \mathcal{F}_i (u) $ for moving $ (c, (u, v)) $.
\item If $ (c, (u, v)) $ is not among the first $ \ell \cdot \lambda \alpha_j $ cluster-edge pairs of $ T_j $ before being removed from~$ T_j $ and is not among the first $ \ell \cdot \lambda \alpha_{j'} $ cluster-edge pairs of $ T_{j'} $ after being added to~$ T_{j'} $, then $ (u, v) $ is contained in $ \mathcal{F}_i (u) $ before and after the move.
Furthermore, no cluster-edge pair will start being among the first $ \ell \cdot \lambda \alpha_j $ pairs in $ T_j $ and no cluster-edge pair will stop being among the first $ \ell \cdot \lambda \alpha_{j'} $ pairs in $ T_{j'} $.
Thus, we perform no changes to~$ \mathcal{F}_i (u) $ for moving $ (c, (u, v)) $.
\end{itemize}
Thus, for each move of a cluster $ c $, we incur at most $ 2 | E_i (u, c) | $ changes to~$ \mathcal{F}_i (u) $.

For technical reasons, we go on by giving slightly different analyses for the cases of moving up and moving down.

\paragraph{Moving Up.}

For every integer $ 1 \leq t \leq \log (\lambda) - 1 $, let $ p_t $ be the probability that $ 2^t \alpha_j \leq | E_i (u, c) | < 2^{t + 1} \alpha_j $ when $ c $~is moved up and let $ q $ be the probability that $ | E_i (u, c) | = \lambda \cdot \alpha_j $ when $ c $~is moved up.
Note that this covers all events for $ c $ being moved up.
As observed above, each move induces at most $ 2 | E_i (u, c) | $ updates, where $ | E_i (u, c) | < 2^{t + 1} \alpha_j $ with probability $ p_t $ and $ | E_i (u, c) | \leq n $ in any case.
Thus, by the law of total expectation, the expected number of induced updates per insertion to~$ E_i (u, c) $ is at most
\begin{equation*}
\sum_{1 \leq t \leq \log (\lambda) - 1} p_t \cdot 2 \cdot 2^{t + 1} \alpha_j + q \cdot 2 n \, .
\end{equation*}

We now bound $ p_t $, the probability that $ 2^t \alpha_j \leq | E_i (u, c) | < 2^{t + 1} \alpha_j $ when $ c $ is moved up.
As soon as $ | E_i (u, c) | $ exceeds the threshold $ 2 \alpha_j $, each insertion makes $ c $ move up with probability $ \tfrac{1}{\alpha_j} $ (when the biased coin shows ``heads'').
For $ t = 1 $, we clearly have $ p_t \leq \tfrac{1}{\alpha_j} $.
For $ 2 \leq t \leq \log (\lambda) - 1 $, $ p_t $ is determined by one ``heads'' preceded by at least $ 2^t \alpha_j - 2 \alpha_j $ ``tails'' in the coin flips of previous insertions to $ E_i (u, c) $, i.e., $ p_t $ is bounded by
\begin{equation*}
p_t \leq \frac{1}{\alpha_j} \cdot \left( 1 - \frac{1}{\alpha_j} \right)^{(2^t - 2) \cdot \alpha_j} \leq \frac{1}{\alpha_j} \cdot \frac{1}{e^{2^t - 2}} \, .
\end{equation*}
Here we use the inequality $ (1 - \tfrac{1}{x})^x \leq \tfrac{1}{e} $, where $ e $ is Euler's constant.
Similarly, $ q $, the probability that $ | E_i (u, c) | = \lambda \alpha_j $ with $ \lambda = 2^{\lceil \log (4 + \ln (n)) \rceil} $ when $ c $ is moved up, is determined by $ \alpha_j (\lceil a \ln (n) \rceil + 2) - 2 \alpha_j $ ``tails''.
Thus, $ q $ is bounded by
\begin{equation*}
q \leq \frac{1}{e^{2^{\lceil \log (4 + \ln (n)) \rceil} - 2}} \leq \frac{1}{e^{2 + \ln (n)}} = \frac{1}{e^2 n} \, .
\end{equation*}

We can now bound the expected number of induced updates by
\begin{align*}
\sum_{1 \leq t \leq \log (\lambda) - 1} p_t \cdot 2 \cdot 2^{t + 1} \alpha_j + q \cdot 2 n &= \frac{1}{\alpha_j} \cdot 2 \cdot 2^2 \alpha_j + \sum_{2 \leq t \leq \log (\lambda) - 1} \frac{1}{\alpha_j} \cdot \frac{1}{e^{2^t - 2}} \cdot 2 \cdot 2^{t + 1} \alpha_j + \frac{1}{e^2 n} \cdot 2 n \\
	&\leq 8 + 4 \cdot \sum_{2 \leq t < \infty} \frac{2^t}{e^{2^t - 2}} + 0.28 \\
	&\leq 8 + 4 \cdot 0.57 + 0.28 \\
	&\leq 10.6 \, .
\end{align*}

\paragraph{Moving Down.}
 
For every $ 1 \leq t \leq \log (\lambda) - 1 $, let $ p_t $ be the probability that $ \tfrac{\alpha_j}{2^{t + 1}} < | E_i (u, c) | \leq \tfrac{\alpha_j}{2^t} $ when $ c $ is moved down and let $ q $ be the probability that $ | E_i (u, c) | = \tfrac{\alpha_j}{\lambda} $ when $ c $ is moved down.
As observed above, each move induces at most $ 2 | E_i (u, c) | $ updates and thus, by the law of total expectation, the expected number of induced updates per deletion from $ E_i (u, c) $ is at most
\begin{equation*}
\sum_{1 \leq t \leq \log (\lambda) - 1} p_t \cdot 2 \frac{\alpha_j}{2^t} + q \cdot 2 n \, .
\end{equation*}


We now bound $ p_t $, the probability that $ \tfrac{\alpha_j}{2^{t + 1}} < | E_i (u, c) | \leq \tfrac{\alpha_j}{2^t} $ when $ c $ is moved down.
For $ t = 1 $, we clearly have $ p_1 \leq \tfrac{2^3}{\alpha_j} = \tfrac{8}{\alpha_j} $ as this is the probability that just a single coin flip made the cluster move down.
For $ 2 \leq t \leq \log (\lambda) - 1 $, observe that for $ | E_i (u, c) | \leq \tfrac{\alpha_j}{2^t} $ to hold, there must have been at least $ t - 1 $ subsequences of deletions such that after every deletion in subsequence $ s $ we had $ \tfrac{\alpha_j}{2^{s + 1}} < | E_i (u, c) | \leq \tfrac{\alpha_j}{2^s} $ (where $ 1 \leq s \leq t - 1 $).
Observe that the $s$-th subsequence consists of $ d_s := \tfrac{\alpha_j}{2^s} - \tfrac{\alpha_j}{2^{s + 1}} = \tfrac{\alpha_j}{2^{s + 1}} $ many deletions.
Remember that during the $ s $-th subsequence the probability of $ c $ moving down is $ r_s := \min (\tfrac{2^{2 s + 1}}{\alpha_j}, 1) $.
If $ r_s < 1 $, then by the inequality $ (1 - \tfrac{1}{x})^x \leq \tfrac{1}{e} $ we have
\begin{equation*}
(1 - r_s)^{d_s} = \left( 1 - \frac{2^{2 s + 1}}{\alpha_j} \right)^{\frac{\alpha_j}{2^{s + 1}}} = \left( 1 - \frac{2^{2 s + 1}}{\alpha_j} \right)^{\frac{\alpha_j}{2^{2s + 1}} \cdot 2^{s}} \leq \frac{1}{e^{2^s}} \, .
\end{equation*}
In the other case, $ r_s = 1 $, we clearly have $ (1 - r_s)^{d_s} = 0 \leq \frac{1}{e^{2^s}} $.
Now $ p_t $ is determined by one ``heads'' preceded by at least $ d_s $ ``tails'' for each subsequence $ s $, i.e., $ p_t $ is bounded by
\begin{align*}
p_t &\leq \frac{2^{2t + 1}}{\alpha_j} \cdot \prod_{1 \leq s \leq t - 1} \left( 1 - r_s \right)^{d_s} \\
	&\leq \frac{2^{2t + 1}}{\alpha_j} \cdot \prod_{1 \leq s \leq t - 1} \frac{1}{e^{2^s}} \\
	&= \frac{2^{2t + 1}}{\alpha_j} \cdot \frac{1}{e^{\sum_{1 \leq s \leq t - 1} 2^s}} \\
	&= \frac{2^{2t + 1}}{e^{2^t - 2} \alpha_j} \, .
\end{align*}

Similarly, $ q $, the probability that $ | E_i (u, c) | = \tfrac{\alpha_j}{\lambda} $ with $ \lambda = 2^{\lceil \log (4 + \ln (n)) \rceil} $ when $ c $ is moved down, is bounded by
\begin{equation*}
q \leq \frac{1}{e^{2^{\lceil \log (4 + \ln (n)) \rceil} - 2}} \leq \frac{1}{e^{2 + \ln (n)}} = \frac{1}{e^2 n} \, .
\end{equation*}

We can now bound the expected number of induced updates by
\begin{align*}
\sum_{1 \leq t \leq \log (\lambda) - 1} p_t \cdot 2 \cdot \frac{\alpha_j}{2^t} + q \cdot 2 n &= p_1 \cdot 2 \cdot \frac{\alpha_j}{2} + \sum_{2 \leq t \leq \log (\lambda) - 1} p_t \cdot 2 \cdot \frac{\alpha_j}{2^t} + q \cdot 2 n \\
	&\leq \frac{8}{\alpha_j} \cdot 2 \cdot \frac{\alpha_j}{2} + \sum_{2 \leq t \leq \log (\lambda) - 1} \frac{2^{2t + 1}}{e^{2^t - 2} \alpha_j} \cdot 2 \cdot \frac{\alpha_j}{2^t} + \frac{1}{e^2 n} \cdot 2 n \\
	&\leq 8 + 4 \cdot \sum_{2 \leq t < \infty} \frac{2^t}{e^{2^t - 2}} + 0.28 \\
	&\leq 8 + 4 \cdot 0.57 + 0.28 \\
	&\leq 10.6 \, .
\end{align*}
This concludes the proof that the expected number of induced updates is at most $ 10.6 $.

\section{Dynamic Maximal Matching with Worst-Case Expected Update Time}
\label{sec:matching}

\newcommand{\hh}{\mathcal{H}}
\renewcommand{\aa}{\mathcal{A}}
\newcommand{\qstar}{q*}
\newcommand{\sigmagap}{\sigma^{\textrm{gap}}}
\newcommand{\sigmalevel}{\sigma^{\textrm{level}}}
\newcommand{\sigmadummy}{\sigma^{\textrm{dummy}}} 
\newcommand{\sigmapivot}{\sigma^{\textrm{pivot}}}
\newcommand{\elemma}{\EE^{\textrm{lemma}}}
\newcommand{\epivot}{\EE^{\textrm{pivot}}}
\newcommand{\ebad}{\EE^{\textrm{no-pivot}}}
\newcommand{\esettle}{\EE^{\textrm{settle}}}
\newcommand{\responsible}{\textrm{{\sc responsible}}}

In this section we turn to proving Theorem~\ref{thm:matching-expected}\footnote{The correctness proof had to be significantly extended due to a mistake in our SODA 2019 paper.}. We achieve our result
by modifying the algorithm of Baswana \etal~\cite{BaswanaGS18}, which achieves \emph{amortized} expected time $O(\log(n))$.
We start by describing the original algorithm of Baswana \etal, and then 
discuss why their algorithm does not provide a worst-case expected guarantee, and the modifications
we make to achieve this guarantee. Throughout this section, we define a vertex to be \emph{free} if it is not matched,
and we define $\mate(v)$, for matched $v$, to be the vertex that $v$ is matched to.

\subsection{The Original Matching Algorithm of Baswana \etal}
\label{subsec:matching-baswana}

\paragraph{High-Level Overview.} Let us consider the trivial algorithm for maintaining a maximal matching. Insertion of an edge $(u,v)$ is easy to handle in $O(1)$ time: if $u$ and $v$ are both free then we add the edge to the matching; otherwise, we do nothing. Now consider deletion of an edge $(u,v)$. If
$(u,v)$ was not in the matching then the current matching remains maximal, so there is nothing to be done and the update time is only $O(1)$.
If $(u,v)$ was in the matching, then both $u$ and $v$ are now free and must scan all of their neighbors looking for a new neighbor to match to.
The update time is thus $ O(\mmax{\degree(u)}{\degree(v)})$. This is the only expensive operation.

At a very high level, the idea of the Baswana \etal\ algorithm is to create a hierarchy of the vertices (loosely) according to their degrees. High degree vertices
are more expensive to handle. To counterbalance this, the algorithm ensures that when a high degree vertex $v$ picks a new mate, it chooses
that mate \emph{at random} from a large number of neighbors of $v$. Thus, although the deletion of the matching edge $(v,\mate(v))$ will be expensive,
there is a high probability that the adversary will first have to delete many non-matching $(v,w)$ (which are easy to process) before it finds $(v,\mate(v))$. (Recall that the algorithm of Baswana \etal and our modification both assume an oblivious adversary).

\paragraph{Setup of the Algorithm.}
Let $\textsc{l}_0 = \lfloor \log_{4} (n) \rfloor$.

\begin{itemize}

\item Each edge $(u,v)$ will be \emph{owned} by exactly one of its endpoints. 
Let $\OO_v$ contain all edges owned by $v$. Loosely speaking, 
if $(u,v) \in \OO_v$ then $v$ is responsible for telling $u$ about any changes in its status (e.g.\ $v$ becomes unmatched or changes levels in the hierarchy), but not vice versa. 

\item  The algorithm maintains a partition of the vertices into $\textsc{l}_0 +2$ \emph{levels},
numbered from $-1$ to $\textsc{l}_0$. 
During the algorithm, when a vertex moves to level $i$, it owns at least $4^i$ edges. Level $-1$ then contains the vertices that own no edges.
The algorithm always maintains the invariant that if $\LL(u) < \LL(v)$ then edge $(u,v) \in \OO_v$.

\item For every vertex $u$, the algorithm stores a dynamic hash table of the edges in $\OO_u$. 
The algorithm also maintains the following list of edges for $u$: for each $i\ge\LL(u)$,
let $\mathcal{E}_u^i$ be the set of all those edges incident on
$u$ from vertices at level $i$ that are not owned by $u$. 
The set $\EE_u^i$ will be maintained in a dynamic hash table.
However, the onus of maintaining $\EE_u^i$ will not be on~$u$, because these edges are by definition not owned by $u$.
For example, if a neighbor~$v$ of~$u$ moves from level $i > \LL(u)$ to level $j > i$, then $v$
will remove $(u,v)$ from~$\EE_u^i$ and insert it to~$\EE_u^j$. 
\end{itemize}

\paragraph{Invariants and Subroutines.}
Define $\below{j}(v)$ to contain all neighbors of $v$ strictly below level~$j$ and $\equal{j}(v)$ to contain all neighbors of $v$ at level exactly $j$. The key invariant of the hierarchy is that a vertex moves up to a higher level in the hierarchy (via what we call a \Rise operation) 
it will have
sufficiently many neighbors below it. For $j > \LL(v)$, define $\phi_v(j) = |\below{j}(v)|$, and $\phi_v(j) = 0$ otherwise
\footnote{Baswana et al.~gave an equivalent definition  in terms of the $\OO_v$ and $\EE_v^i$ structures.}.
We now describe some guarantees of the Baswana \etal\ algorithm. Note that the hierarchy only maintains an upper bound on $\below{j}(v)$ (Invariant~3),
{\em not} a lower bound;
a lower bound on $\below{j}(v)$ only comes into play when $v$ picks a new matching edge (Matching Property).
More specifically, right before a mate is randomly selected for a node $v$ on level $j$ the algorithm makes sure that $|\below{j}(v)| \ge 4^j$

\begin{itemize}
\item \textbf{Invariant~1:} Each edge is owned by exactly one endpoint, and if the endpoints of the edge are at different levels, the edge
is owned by the endpoint at higher level. (If the two endpoints are at the same level, then the tie is broken appropriately by the algorithm.)

\item \textbf{Invariant~2:} Every vertex at level $\ge 0$ is matched and every vertex at level $-1$ is
free.

\item \textbf{Invariant~3:} For each vertex $v$ and for all
$j > \LL(v)$, $\phi_v(j) < 4^{j}$ holds true.

\item \textbf{Invariant~4:}  Both endpoints of a matched edge are at the same level.

\item \textbf{Matching Property:}  If a vertex $v$ at level $j > -1$ is (temporarily) unmatched, the algorithm proceeds as follows: if $|\below{j}(v)| \geq 4^{j}$, $v$ picks a new mate \emph{uniformly at random} from $\below{j}(v)$; If $|\below{j}(v)| < 4^{j}$, then $v$ falls to level $j-1$ and is recursively processed there (i.e. depending on the size of 
$\below{j-1}(v)$, $v$ either picks a random mate from $\below{j-1}(v)$ or continues to fall.)
\end{itemize}
Invariant~1 and~3 combined imply that $|\OO_v| \leq \phi_v(j+1)\leq 4^{\LL(v) + 1} = O(4^{\LL(v)})$ and $\equal{\LL(v)}(v) \leq 4^{\LL(v) + 1} = O(4^{\LL(v)})$ if $ \LL (v) < \textsc{l}_0 $.
For $ \LL (v) = \textsc{l}_0 $, $ 4^{\textsc{l}_0 + 1} \geq n $, so trivially $ |\OO_v| = O(4^{\LL(v)})$ and $\equal{\LL(v)}(v) = O(4^{\LL(v)})$.
\begin{remark}
Observe that if we maintain these invariants then we always have a maximal matching:
By Invariant~1, each edge $ e $ is owned by exactly one endpoint $ v $.
As by Invariant~3 a vertex at level~$-1$ owns no edges, the endpoint $ v $ is at level $ \geq 0 $, and by Invariant~2, $ v $ must be matched.
Thus, every edge has an endpoint that is matched.
\end{remark}

We now consider the procedures used by the algorithm of Baswana \etal\ to maintain the hierarchy and the maximal matching. The bulk of the work is in maintaining $\OO_v$, $\EE_v^j$, and $\phi_v(j)$, which change due to external additions and deletions of edges, and also due to the algorithm internally moving vertices in the hierarchy to satisfy the invariants above. We largely stick to the notation of the original paper, but we omit details that remain entirely unchanged in our approach. See Section~4 in~\cite{BaswanaGS18} for the original algorithm description (and its analysis).

\begin{itemize}
\item \IncrementPhi{$v,i$} increases $\phi_v(i)$ by one, whereas \DecrementPhi{$v,i$} decreases it. (The paper of Baswana et al. instead called the increment function Increment-$\phi$, but we choose $\IncrementPhi$ because it better fits the details of our algorithm.) Note that \IncrementPhi{$v,i$} might trigger a call to \Rise{$v,i,j$} and the logic that we use for triggering this call differs from that in~\cite{BaswanaGS18} as we also have probabilistic rises, see below.

\item \Rise{$v,i,j$} (new notation) moves a vertex $ v $ from level $i$ to level $j$. This results in changes to many of the $\OO$ and~$\EE$ lists.
In particular, $v$ takes ownership of all edges $(v,w)$ with $w \in \below{j}(v)$. Moreover, for any vertex $w \in \below{i}(v)$, edge $(v,w)$
is removed from  $\EE_w^i$, and for every $w \in \below{j}(v)$, edge $(v,w)$ is added to $\EE_w^j$. As a result, the algorithm runs
\DecrementPhi{$w,k$} for every $w \in \below{j}(v)$, and every $i < k \leq j$. A careful analysis bounds the total amount of bookkeeping work
at $O(4^j)$ (see Lemma~\ref{lem:rise}).

\item \Fall{$v,i$} (new notation) moves $v$ from level $i$ to level $i-1$. 
As above this leads to bookkeeping work: $\OO_w$, 
$\EE_w^i$, and $\EE_w^{i-1}$  change for many neighbors of $w$ of $v$. Note that only edges $(v,w)$ 
previously owned by $v$ are affected, so by Invariant~3, the total
amount of bookkeeping work is at most $|\OO_v| = O(4^i)$. 

The algorithm must also do \IncrementPhi{$w,i$} for every $w$ that was previously in $\below{i}(v)$, incrementing
$\phi_w(i)$. Such an increment might result in $w$
violating Invariant~3 (if $\phi_w(i)$ goes from $4^i -1$ to $4^i$), in which case the algorithm executes \Rise{$w, \LL(w), i$)}. Moreover, if $w'$ was the previous mate of $w$, then edge $(w,w')$ is removed from the matching to preserve invariant~4, so the algorithm must also execute
\FixFreeVertex{$w$} and \FixFreeVertex{$w'$} (see below), which can in turn lead to more calls to \Fall and \Rise. One of the main tasks of the analysis is to bound this cascade.

\item \FixFreeVertex{$v$} handles the case when a vertex $v$ is unmatched; this can happen because the matching edge  incident to $v$ was deleted,
or because $v$ newly rose/fell to level $i$, where $i = \LL(v)$. Following the Matching Property, if $|\below{i}(v)| < 4^{i}$, then the algorithm executes \Fall{$v,i$}, followed by \FixFreeVertex{$v$}. On the other hand, if $|\below{i}(v)| \geq 4^{i}$, then $v$ remains at level $i$ and picks a new mate by executing \GenericRandomSettle{$v,i$}.

\item \GenericRandomSettle{$v,i$} finds a new mate $w$ for a vertex $v$ at level $i$ assuming that $|\below{i}(v)| \geq 4^{i}$. The algorithm
picks $w$ uniformly at random from $\below{i}(v)$. Let $\ell = \LL(w) < i$. The algorithm first does \Rise{$w,\ell,i$} (to satisfy Invariant~4),
and then matches $v$ to $w$. Note that if $\ell \neq -1$, then $w$ had a previous mate $w'$ which is now unmatched, so the algorithm now does
\FixFreeVertex{$w'$}.
\end{itemize}

\paragraph{Handling Edge Updates.}

We now show how the algorithm maintains the invariants under edge updates.
First consider the insertion of edge $(u,v)$. Say w.l.o.g.\ that $\LL(v) \geq \LL(u)$. Then $(u,v)$ is added to $\OO_v$ and to $\EE_u^{\LL(v)}$. 
The algorithm must then execute \IncrementPhi{$u,j$} and \IncrementPhi{$v,j$} for every $j > \LL(v)$. This takes time $O(\log(n))$ and might additionally result in some level $\ell$ for
which $\phi_v(\ell) \geq 4^\ell$ (or $\phi_u(\ell) \geq 4^\ell$), in which case Invariant~3 is violated so the algorithm performs \Rise{$v,\LL(v),\ell$} (or \Rise{$u,\LL(u),\ell$}). If $\phi_v(\ell) \geq 4^\ell$ for multiple levels $\ell$, then $v$ rises to the highest such $\ell$.

Now consider the deletion of an edge $(u,v)$ with $\LL(v) \geq \LL(u)$. The algorithm first does $O(\log(n))$ work of simple bookkeeping: 
it removes $(u,v)$ from $\OO_v$ and $\EE_u^{\LL(v)}$, and executes the corresponding calls to \DecrementPhi.
If $(u,v)$ was \emph{not} a matching edge, the work ends there: unlike with \IncrementPhi, the procedure \DecrementPhi cannot lead to the violation of any invariants. By contrast, the most expensive operation is the deletion of a \emph{matched} edge $(u,v)$, because the algorithm must execute \FixFreeVertex{$u$}, and \FixFreeVertex{$v$}.

\paragraph{Analysis Sketch.}

Whereas our final algorithm is very similar to the original algorithm of Baswana \etal, our analysis is mostly different, so we only provide a brief
sketch of their original analysis. The basic idea is that because a vertex $v$ is only responsible for edges in $\OO_v$, processing a vertex at level $i$ takes time $O(4^{i+1})$ (Invariant~3). The crux of the analysis is in arguing that vertices at high level are processed less often. 
There are two primary ways a vertex $v$ can be processed at level $i$. 
\textbf{1)} $v$ rises to level $i$ because $\phi_v(i)$ goes from  $4^i - 1$ to $4^i$. This does not happen often because many \IncrementPhi{$v,i$} are required to reach such a high $\phi_v(i)$. 
\textbf{2)} the matching edge $(v,\mate(v))$ is deleted from the graph. This does not happen often because by Matching Property, $v$ originally picks its mate at random from at least $4^i$ options, so since the adversary is oblivious, it will in expectation delete many non-matching edges $(v,w)$ (which are easy to process) before it hits upon $(v, \mate(v))$.

\subsection{Our Modified Algorithm}
\label{subsec:modification}
Recall the definition of $\phi_v(j)$ for any vertex $v$ with $i = \LL(v)$ and level $j$:
\[ 
\phi_v(j) = \left\{ 
\begin{array}{ll}
|\below{j}(v)| & \mbox{~if~} j>i \\
0 & \mbox{otherwise.}
\end{array}
\right.
\]

There are two reasons why the original algorithm of Baswana \etal\ does not guarantee a worst-case expected update time. 

\textbf{1:} The algorithm uses a hard threshold for $\phi_v(i)$: the update which increases $\phi_v(i)$ 
from $4^{i} - 1$ to~$4^i$ is guaranteed to lead to the expensive execution of \Rise{$v, \LL(v), i$}. Thus, while their algorithm guaranteed that 
overall few updates lead to this expensive event, it is not hard to construct an 
update sequence which forces one particular update to be an expensive one. To overcome this, we use a randomized
threshold, where every time $\phi_v(i)$ increases, $v$ rises to level $i$ with probability $\Theta(\log(n)/4^i)$. 

\textbf{2:} Consider the deletion of an edge $(u,v)$ where $i = \LL(v) \geq \LL(u)$. 
Baswana \etal\ showed that this deletion takes time $O(\log(n))$ if $u \neq \mate(v)$, and time $O(4^{i})$ if $u = \mate(v)$.
At first glance this seems to lead to an expected-worst-case guarantee: we know by the Matching Property that $v$ picked its mate at random
from a set of at least $4^i$ vertices, so if we could argue that for any edge $(u,v)$ we always have $\PP[\mate(v) = u] \leq 1/4^{i}$,
then the expected time to process \emph{any} deletion would be just $O(\log(n))$. 

Unfortunately, in the original algorithm it is \emph{not} the case that $\PP[\mate(v) = u] \leq 1/4^i$.
To see this, consider the following star graph with center $v$. In the sequence of updates to the edges incident to a vertex $v$, in which $v$ will be always at level~$i$, 
every updated edge $(u,v)$ will have $\LL(u) < \LL(v)$, and $|\below{i}(v)|$ will always be between $4^i$ and $2 \cdot 4^i$.
The other vertices in the sequence are $v'$, $x_1, x_2, \ldots, x_{4^{i}-1}$ and $y_1, y_2, \ldots, y_{4^{i}-1}$.
At the beginning, $v$ has an edge to $v'$ and to all the $x_i$. The update sequence repeats the following cyclical process for very many rounds: insert an edge to every $y_i$, delete the edge to every $x_i$, insert an edge to every $x_i$, delete the edge to every $y_i$, insert the edge to every $y_i$, and so on.
Note that the edge from $v$ to $v'$ is never deleted. We claim that as we continue this process for a long time, $\PP[\mate(v) = v'] \rightarrow 1$.
The reason is that the algorithm of Baswana \etal\ only picks a new mate for~$v$ when the previous matching edge was deleted. But the process
repeatedly deletes all edges except $(v,v')$, so it will continually pick a new matching edge at random until it eventually picks $(v,v')$,
at which point $v'$ will remain the mate of~$v$ throughout the process. The original algorithm of Baswana \etal\ is thus \emph{not} worst-case expected:
if the adversary starts with the above (long) sequence and then deletes $(v,v')$, 
this deletion is near-guaranteed to be expensive because $\PP[\mate(v) = v'] \sim 1$.

One way to overcome this issue is to give $v$ a small probability of resetting its matching edge every time a neighbor of $v$ undergoes certain kinds of changes in the hierarchy;
this would ensure that even if $(v,v')$ becomes the matching edge at some point during the process, it will not stick forever. This is the approach we will take.

\subsubsection{List of Changes to the Baswana \etal\ Algorithm}
\label{subsec:matching-modified}

We now describe the changes that we make the original algorithm of Baswana et al. \cite{BaswanaGS18}. Full pseudocode for our algorithm is given in Algorithms \ref{alg:matching 1} and \ref{alg:matching 2}.

\begin{itemize}
\item To simplify the algorithm we remove the lists $\OO_v$, $\EE_v^i$, and the notation of \emph{ownership}. (Note that the original algorithm also could be changed in this way.)
Instead we keep for each vertex $v$ the following sets in a dynamic hash table and also maintain their respective sizes: (a) For each level $j > \LL(v)$ the set $\equal{j}(v)$, i.e., all edges incident to neighbors on level $j$, and (b) one set containing the set
$\below{\LL(v)+1}(v)$, i.e., all edges incident to neighbors on level $\LL(v)$ and below.

\item Invariants 2-4 are exactly the same as above, Invariant 1 is no longer needed as we no longer use the concept of ownership.

\item Define $C$ to be a sufficiently large constant used by the algorithm.

\item Whenever the algorithm executes \IncrementPhi{$v,i$} for a vertex $v$ with $\LL(v) < i$, the algorithm: \textbf{1)} performs
\Rise{$v,\LL(v),i$} with probability $\prise = C\log(n)/4^i$. We call this a \emph{probabilistic rise}. 
\textbf{2)} always performs \Rise{$v, \LL(v), i$}
if $\phi_v(i)$ increases from $4^i - 1$ to~$4^i$; we call this a \emph{threshold rise}. 
(The original algorithm of Baswana \etal\ only performed threshold rises. Our new version modifies line 13 in the pseudocode of Procedure \textsc{process-free-vertices} of~\cite{BaswanaGS18}, as well as the paragraph ``Handling insertion of an edge'' in Section~4.2 in~\cite{BaswanaGS18}.)

\item \textbf{Matching Property*} If a vertex $v$ at level $i > -1$ is (temporarily) unmatched and $|\below{i}(v)| \geq 4^{i}/(32C\log(n))$, then $v$ will pick a new mate \emph{uniformly at random} from $\below{i}(v)$. If $|\below{i}(v)| < 4^{i}/(32C\log(n))$, then $v$ falls to level $i-1$ and is recursively processed from there. Note that Matching Property* is identical to Matching Property above, but with $4^{i}/(32C\log(n))$ instead of $4^{i}$.
This leads to the following change in procedure \FixFreeVertex{$v$}. Let $i = \LL(v)$: 
if $|\below{i}(v)| \geq 4^{i}/(32C\log(n))$, then the algorithm executes \GenericRandomSettle{$v,i$}, and if 
$|\below{i}(v)| < 4^{i}/(32C\log(n))$, then it executes \Fall{$v,i$}.
(Our version modifies line~$5$ of Procedure \textsc{falling} of~\cite{BaswanaGS18}).

\item We will keep a boolean $\responsible(v)$ for each vertex $v$, which is set to True if the matching edge $(v,w)$ was chosen during \GenericRandomSettle{$v,\ell$}. 
In this case we say that $v$ is \emph{responsible} for the matched edge.
If $v$ is free, $\responsible(v)$ is False.
Each matching edge will have exactly one endpoint with \responsible($v$) set to True, and free vertices will always be set to False.

\item We make the following change for processing an adversarial insertion of edge $(u,v)$. The algorithm executes \ResetMatching{$u$} with probability $\preset_{\LL(u)}$  and it executes \ResetMatching$(v)$ with probability $\preset_{\LL(v)}$, where $\preset_i = \resetprob{i}$. If $\responsible(u) =$ True, \ResetMatching{$u$} simply picks a new matching edge for~$u$ by removing edge $(u, \mate(u))$ from the matching and then calling \FixFreeVertex{$u$} and \FixFreeVertex{$\mate(u)$}; if $\responsible(u)$ is False, \ResetMatching{$u$} does nothing. 
(Our version modifies the paragraph ``Handling insertion of an edge" in Section~5.2 of~\cite{BaswanaGS18}.)

\item We also make the following change to procedure \Fall{$v,i$}. 
All edges of the set $\below{i}(v)$ are traversed, not just the ones owned by $v$.
Since $|\below{i+1}(v)| = O(4^{i+1})$ (by Invariant 3),
the running time analysis of ~\cite{BaswanaGS18} remains valid.
Furthermore, recall that as a result of $v$ falling to level $i-1$, $v$ now belongs to $\equal{i-1}(u)$ for every neighbor $u$ of $v$ at level~$ i-1 $. Each such neighbor~$u$ then executes \ResetMatching{$u$} with probability $\preset_{\LL(u)}$.
(Our version modifies lines 3 and~4 in Procedure \textsc{falling} of~\cite{BaswanaGS18}.)

\end{itemize}

\paragraph{Pseudocode.}

We give the pseudocode for the whole modified algorithm in Algorithms~\ref{alg:matching 1} and~\ref{alg:matching 2}. The pseudocode shows
how the basic procedures of the algorithm (e.g. \Rise, \Fall, \FixFreeVertex, \ResetMatching) call each other. 
We note that in the pseudocode, whenever the algorithm changes the level of a vertex in the hierarchy it also performs straightforward
\emph{bookkeeping work} that adjusts all sets
$\below{i}(v)$ and $\equal{i}(v)$ to match the new hierarchy.
For example, if a vertex falls from level $i$ to 
level $i-1$, then for every edge $(v,w)$ with $w \in \below{i+1}(v)$ we do the following: 
if $\LL(w) = i$ then we transfer $w$ from $\below{i+1}(v)$ to $\equal{i}(v)$;
if $\LL(w) = i-1$ then we transfer $w$ from $\equal{i}(w)$ to $\below{i}(w)$
and from $\below{i+1}(v)$ to $\below{i}(v)$; and
if $\LL(w) < i-1$ then we transfer $w$ from $\equal{i}(w)$ to $\equal{i-1}(w)$
and from $\below{i+1}(v)$ to $\below{i}(v)$.
Note that by Invariant 3 $\phi_{i+1}(v) < 4^{i+1} = O(4^{\LL(v)})$ and, thus, the bookkeeping can be done in time $ O (4^{\LL (v)})$.

The matching maintained by the algorithm in the pseudocode is denoted by $ \mathcal{M} $.
Note that for technical reasons calls of \FixFreeVertex{$v$} for vertices $ v $ in our algorithm are not executed immediately.
Instead, we maintain a global FIFO queue $ Q $ of vertices $ v $ for which we still need to perform \FixFreeVertex{$v$}, implemented as a doubly-linked list.
To avoid adding the same vertex to the queue twice and enable us to a vertex in the queue at the end of the queue we store at every vertex a pointer to its position in the queue. If a vertex is not in the queue, this pointer is set to NIL.

\begin{algorithm2e}
\caption{Fully Dynamic Maximal Matching Algorithm}\label{alg:matching 1}

$\prise_i \gets C\log(n)/4^i$ for some large constant $C$ \;
$\preset_i \gets 1/4^{i+3}$
Initialize empty queue $ Q $ 

\Procedure(\tcp*[f]{Process deletion of edge $ (u, v) $}){\Delete{$ u $, $ v $}}{
Perform bookkeeping work for deletion of $ (u, v) $\;
	\If{$ (u, v) \in \mathcal{M} $}{
		Perform bookkeeping work for deletion of $ (u, v) $ \label{step:delete-change} \;
		Set $\responsible(u)$ and $\responsible(v)$ to False \;
		Add $ u $ to end $ Q $ (or move $u$ to end if it is already in $ Q $)\; 
		Add $ v $ to end of $ Q $ (or move $v$ to end if it is already in $ Q $)\; 
		\ProcessQueue{} \label{step:delete-process-queue}\;
	}
}

\Procedure(\tcp*[f]{Process insertion of edge $ (u, v) $}){\Insert{$ u $, $ v $}}{
	Perform bookkeeping work for insertion of $ (u, v) $ \label{step:insert-change} \;
	\ForEach(\label{step:insert-begin-rise}){$ j > \max{ \{\LL(u),\LL(v)\} } $ in increasing order}{
		\IncrementPhi{$ v $, $ j $} \label{step:insert-rise}\; 
		\IncrementPhi{$ u $, $ j $} \label{step:insert-end-rise}\; 
	}
	With probability $ \preset_{\LL(u)} $ \KwDo \ResetMatching{$ u $} \label{step:addition/removal/move-to-end}\label{step:insert-reset} \;
	With probability $ \preset_{\LL(v)} $ \KwDo \ResetMatching{$ v $} \label{step:addition/removal/move-to-end2} \;
	\ProcessQueue{} \label{step:insert-process-queue} \;
}

\Procedure{\ProcessQueue{}}{
	\While{$ Q $ is not empty}{
		Pop the first vertex $ v $ in $ Q $ \label{step:pop} \;
		\FixFreeVertex{$v$} \label{step:fix} \;
	}
}
\end{algorithm2e}

\begin{algorithm2e}
\caption{Fully Dynamic Maximal Matching Algorithm}\label{alg:matching 2}

\setcounter{AlgoLine}{22}

\Procedure(\tcp*[f]{Called only if $ \LL (v) > - 1$}){\ResetMatching{$ v $}}{
	\If{$\responsible(v)$ = False} {Exit Procedure \ResetMatching}
	$ w \gets \mate (v) $\;
	$ \mathcal{M} \gets \mathcal{M} \setminus \{ (v, w) \} $ \label{step:reset-delete} \tcp*[r]{Unmatch $ v $ and $ w $}
	Set $\responsible(v)$ to False \tcp*[f]{Note: $\responsible(w)$ was already False, since $v$ was responsible for $(v,w)$}\;
	Add $ v $ to end of $ Q $ (or move $v$ to end if it is already in $ Q $)\; 
	Add $ w $ to end of $ Q $ (or move $w$ to end if it is already in $ Q $) 
}

\Procedure{\FixFreeVertex{$ v $}}{
	$ i \gets \LL (v) $\;
	Compute $ \below{i} (v) $ from $ \below{i+1} (v) $ \label{step:fix-compute-sets}\;
	\If{$ i > -1 $ \KwAnd $ v $ is unmatched}{
		\eIf(\label{step:fix-if}){$ |\below{i}(v)| \geq 4^i / (32 C \log{(n)}) $}{
			Compute $ \below{i} (v) $\;
			\GenericRandomSettle{$ v $, $ i $}\;
		}{
			\Fall{$ v $, $ i $} \label{step:fix-else}\;
		}
	}
}

\Procedure(\tcp*[f]{Called only if $ |\below{i}(v)| \geq 4^i / (32 C \log{(n)}$}){\GenericRandomSettle{$ v $, $ i $}}{
	 Pick $ w \in \below{i}(v) $ uniformly at random 	\label{step:settle-random-mate} \;
	 \Rise($w, \LL(w),i$) \label{step:settle-rise} \;
	$ \mathcal{M} \gets \mathcal{M} \cup \{(v, w)\} $ \label{step:settle-insert} \tcp*[r]{Match $ v $ and $ w $} 
	$\responsible(v) \leftarrow$ True
}

\Procedure{\Fall{$ v $, $ i $}}{
	Compute $ \below{i} (v) $ and $ \equal{i} (v) $ from $ \below{i+1} (v) $\;
	Perform bookkeeping work to move $ v $ from level~$ i $ to level~$ i-1 $ \label{step:fall-change} \;
	\ForEach(\label{step:fall-begin-rise}){$ w \in \below{i}(v) $}{
		\IncrementPhi{$ w $, $ i $} \label{step:fall-increment} \tcp*[r]{$v$ joins $N_{<i}(w)$} 
	}
	\ForEach{$ w \in \equal{i-1}(v) $}{
			With probability $ \preset_{i-1} $ \KwDo \ResetMatching{$ w $} \label{step:fall-reset}
	}
	Add $ v $ to end of $ Q $ (or move $v$ to end if $v$ is already in $Q$) \label{step:fall-end} 
}

\Procedure(\tcp*[f]{Called when $|N_{<i}(v)|$ increases for $ i > \LL (v) $}){\IncrementPhi{$ v $, $ i $}}{
	\lIf(\tcp*[f]{Threshold rise}){$ |\below{i}(v)| \geq 4^i $}{
		\Rise{$ v $, $ \LL (v) $, $ i $}
	}

	\lElse(\label{step:increment-prise} \tcp*[f]{Probabilistic rise}){With probability $ \prise_i $ \KwDo \Rise{$ v $, $ \LL (v) $, $ i $}}  
}

\Procedure{\Rise{$ v $, $ i $, $ j $}}{
	\If(\tcp*[f]{Check if $ v $ is matched with some neighbor $ w $}){$ \exists (v, w) \in \mathcal{M} $}{
		$ \mathcal{M} \gets \mathcal{M} \setminus \{ (v, w) \} $ \label{step:rise-mate-delete} \tcp*[r]{Unmatch $ v $ and $ w $}
		Set $\responsible(v)$ and $\responsible(w)$ to False \;
		Add $ w $ to end of $ Q $ (or move $w$ to end if it is already in $ Q $) 
	}
	Perform bookkeeping work to move $ v $ from level $ i $ to level $ j $ \label{step:rise-change} \;
	Add $ v $ to end of $ Q $ (or move $v$ to end if it is already in $Q$) 

}

\end{algorithm2e}

\subsection{Correctness of the Modified Algorithm}
To show the correctness of the modified algorithm we need to show that it fulfills Invariants 2--4 and that \textbf{Matching Property*} holds.
We will do so in this subsection.
Termination is guaranteed in the next section, which shows that the expected time to process an adversarial
edge insertion/deletion is finite. 

\begin{lemma}
	\label{lem:invariants}
Invariants 2--4, and Matching Property* hold before and after the processing of each edge update. Also, for every free vertex $v$ we have $\responsible(v)$ is False, and for any matching edge $(v,w)$, $\responsible(v)$ is True if and only if $(v,w)$ was last chosen to enter the matching during a call to \GenericRandomSettle{$v,\cdot$}; as a consequence, exactly one of $\responsible(v)$ and $\responsible(w)$ is True.
\end{lemma}

\begin{proof}
	
\emph{Responsibilities:} the claims about $\responsible(v)$ follow trivially from the pseudocode, as we always explicitly maintain these properties.

\emph{Invariant~2:}
To show invariant~2 we need to show that (a) every vertex on a level larger than $-1$ is matched and (b) every vertex on level $-1$ is free.
We show the claim by induction on the number of updates. Initially the graph is empty and every vertex is unmatched and on level $-1$. Thus, the claim holds.
Assume now that the claim holds before an edge insertion or deletion. We will show that it holds also after the edge insertion or deletion was processed.

We first show (a).
A vertex $v$ on level larger than $-1$ can violate Invariant~2 if (1) its matched edge was deleted or (2) it became unmatched in procedure \GenericRandomSettle or \Rise.
In both cases $ v $ is placed on the queue (if it is not already there).
Then the current procedure completes and then other calls to \FixFreeVertex might be executed before the call \FixFreeVertex{$v$} is started. 
Thus, it is possible that the hierarchy has changed between the time when $ v $ was placed on the queue and the time when its execution starts. 
This is the reason why \FixFreeVertex{$v$} first checks whether $v$ is still on a level larger than $-1$ and whether it is still unmatched. If this is not the case, $v$ fulfills Invariant~2.
If this is still the case, then the main body of \FixFreeVertex{$v$} is executed, which either matches $v$ with \GenericRandomSettle{$v$,$\LL(v)$}
\emph{or} it decreases the level of $v$ (if $v$ does not have ``enough'' neighbors on levels below $\LL(v)$)
and then places $ v $ on the queue.
As the update algorithm does not terminate until the queue is empty, it is guaranteed that all vertices fulfill  (a) at termination of the update.

To show (b) note that a vertex $u$ is only matched in procedure \GenericRandomSettle and in this case it needs to be on a level $i$ such that either the vertex $u$ itself or its newly matched partner~$v$ fulfill the property that there is at least one neighbor in a level \emph{below} level $i$. As $-1$ is the lowest level, it follows that $i > -1$, which shows (b).

\emph{Invariant~3:}
For Invariant~3 we need to show for all $j > \LL(v)$ that $|\below{j}(v)| < 4^j$.
We show the claim by induction on the number of updates.  
The property certainly holds at the beginning of the algorithm when there are no edges. 
Assume it was true before the current edge update. We will show that it also holds after the current edge update. 
Let $v$ be a vertex. 
The set $\below{j}(v)$ increases  only if (i) a neighbor $w$ drops from a level at or above $j$ to a level below $j$ or (ii) an edge incident to $v$ is inserted. 
We show next that Invariant~3 holds in either case.
In case (i) since each execution of \Fall decreases the level of a vertex only by one, the set $\below{j}(v)$ can only increase if a neighbor $w$ drops from $j$ to $j-1$. 
As a consequence it follows that
the sets  $\below{k}(v)$ for \emph{all} $k \ne j$ are unchanged, and, thus, $| \below{k}(v)| < 4^k$ for all $k \ne j$, i.e., there is only one set $\below{\cdot}(v)$ that might violate the invariant, namely $\below{j}(v)$ and in this case $| \below{j}(v)| = 4^j$. 
The fall of $w$ from level $j$ calls \IncrementPhi{$v,j$}, which in turn immediately calls \Rise{$v$,$\LL(v)$,$j$} if $|\below{j}(v)| =  4^j$. After $v$ has moved up to level $j$, it holds that for all
$k > j$ that $|\below{k}(v)| < 4^k$ as this was also true before the rise. Thus, Invariant~3 holds again for $v$.
In case (ii) in the insertion operation the function \IncrementPhi{$v,j$}
is called for every level $j$ that is larger than the level of $v$.
If for one of these levels, let's call it $i$, $|\below{i}(v)| \ge  4^i$,
then $v$ is moved up
to the highest level $j$ for which $|\below{j}(v)| \ge  4^j$.
Thus Invariant~3 is guaranteed.

\emph{Invariant~4:}
For Invariant~4 we have to show that the endpoints of every matched edge are at the same level. Note that two vertices $v$ and $w$  only become matched in procedure
\GenericRandomSettle and right before that the vertex (out of the two) on the lower level is ``pulled up'' to the level of the higher vertex. Thus, both are at the same level when they are matched.

\emph{Matching Property*:}
Finally Matching Property* holds for every vertex $v$ for the following reason: As soon as a vertex becomes unmatched, $ v $ is placed on the
queue. Whenever this call is executed, it checks whether   $|\below{i}(v)| \geq 4^{i}/(32C\log(n))$, where $i = \LL(v)$, is fulfilled and if so, it calls 
\GenericRandomSettle{$v$,$i$}, which in turn picks a random neighbor of $\below{i}(v)$ and matches $v$ with it. 
If, however, $|\below{i}(v)| < 4^{i}/(32C\log(n))$, then \FixFreeVertex{$v$} calls the procedure \Fall{$v$,$i$}. The procedure \Fall checks again whether it still holds that $|\below{i}(v)| < 4^{i}/(32C\log(n))$, and if so $v$ is moved one level down.  Since in this case the vertex is still unmatched, \Fall{$v$,$i$} also inserts $ v $ into the queue, which later on results in a call to \FixFreeVertex{$v$} executed on $v$'s new level $i-1$. Thus, $v$ continues to fall until it either reaches a level~$i$ where $|\below{i}(v)| \geq 4^{i}/(32C\log(n))$ (in which case it is matched there) or until it reaches level $ -1 $, in which case $|\below{i}(v)| = 0 <1$. Hence, in either case Matching Property* holds.
\end{proof}

\begin{remark}
We later prove a stronger version of Lemma \ref{lem:invariants}, which shows that Invariant 4 and relaxed version of Invariants 3 hold not just at the end of processing an adversarial update, but also at all points in the middle of processing an update. See Lemma \ref{lem:invariants-stronger} for more details.
\end{remark}

\subsection{Analysis of the Modified Algorithm}

Note that each procedure used by the algorithm (e.g. \Fall or \FixFreeVertex) incurs two kinds of costs: 
\begin{itemize}
\item \textbf{Bookkeeping work:} As discussed above, if the procedure changes the level of a vertex, the algorithm must do bookkeeping work to maintain the various sets $\below{\LL(v)+1}(v)$ and $\equal{j}(v)$ data structures.
\item \textbf{Recursive work:} a change in the hierarchy could lead other vertices to violate one of the invariants, and so lead to the execution of further procedures. 
\end{itemize}

We start with the easier task of analyzing the bookkeeping work.
The \Fall, \GenericRandomSettle, and \FixFreeVertex procedures all require $O(4^i)$ time
to process a vertex $v$ at level $i$: this is because the bookkeeping work only requires us to look at $\below{i+1}(v)$, which
by Invariant~3 contains at most $O(4^{i+1})$ edges. We now analyze the bookkeeping required for procedure \Rise:

\begin{lemma} 
\label{lem:rise}
\Rise{$v,i,j$} requires $O(4^{j})$ bookkeeping work.
\end{lemma}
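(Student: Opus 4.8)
The plan is to isolate the two kinds of work that a single call to \Rise{$v,i,j$} does \emph{directly} — the bookkeeping on the ownership lists $\OO$ and the adjacency lists $\EE$, and the cascade of \DecrementPhi calls it issues — and bound each by $O(4^j)$; the recursive \FixFreeVertex calls (for $v$ and for $v$'s former mate) that \Rise only pushes onto $Q$ are not counted here, since they are charged in the recursion analysis. First I would establish the one structural fact everything rests on: at the instant \Rise{$v,i,j$} is invoked we still have $\LL(v)=i$, and for every level $k$ with $i<k\le j$ we have $|\below{k}(v)| = \phi_v(k) \le 4^k$. In every way the call can arise — a probabilistic or a threshold rise inside \IncrementPhi (or the equivalent step in edge-insertion handling), or a rise inside \GenericRandomSettle pulling a freshly chosen mate up to its new level — Invariant~3 held for $v$ at worst one \IncrementPhi step earlier, and each \IncrementPhi changes a counter by exactly one, so no $\phi_v(k)$ can have overshot $4^k$. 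The same reasoning, together with Invariant~1 and $\below{\LL(v)+1}(v) = \below{\LL(v)}(v)\cup\equal{\LL(v)}(v)$, gives $|\OO_v| < 4^{i+1}$ just before the rise and $|\equal{i}(v)|,\,|\equal{j}(v)| < 4^{j+1}$, by applying Invariant~3 at levels $i+1$ and $j+1$.

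Given these bounds, the bookkeeping is routine: $v$ becomes the owner of the $|\below{j}(v)|\le 4^j$ edges $(v,w)$ with $w\in\below{j}(v)$, i.e.\ each such edge is inserted into $\OO_v$, removed from $\OO_w$ or from $\EE_w^i$, and inserted into $\EE_w^j$; the $<4^{i+1}$ edges $v$ already owned have their $\EE_w^i$-entry moved to $\EE_w^j$; and the $<4^{j+1}$ level-$j$ neighbors of $v$ have the tie on their edge to $v$ re-resolved. Each is one $O(1)$ hash-table operation, so this is $O(4^j)$. For the counter cascade, I would use that \DecrementPhi never violates an invariant, hence spawns no recursion and costs $O(1)$, and that the call \DecrementPhi{$w,k$} (issued for $w\in\below{j}(v)$, $i<k\le j$) is substantive only when $w\in\below{k}(v)$. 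Running the loop level by level — for $k=i+1,\dots,j$, scan $v$'s lists $\OO_v$ and $\EE_v^{i},\dots,\EE_v^{k-1}$ to enumerate $\below{k}(v)$ and decrement — the total cost is
\[
\sum_{k=i+1}^{j} O\!\left(|\below{k}(v)|\right) \;\le\; \sum_{k=i+1}^{j} O\!\left(4^{k}\right) \;=\; O(4^{j}),
\]
because a geometric sum is dominated by its largest term. Adding the $O(1)$ to (possibly) remove $(v,\mate(v))$ from $\mathcal M$ and push $v$ and its ex-mate onto $Q$ yields the claimed $O(4^j)$.

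The hard part is the counter cascade. The obvious estimate — $|\below{j}(v)|\le 4^j$ edges times the up to $j-i=\Theta(\log n)$ levels strictly between $i$ and $j$ — gives only $O(4^j\log n)$, a $\log n$ factor too large; this is exactly the ``careful analysis'' promised in the description of \Rise. The resolution is to charge per level instead of per edge: at level $k$ only the $\le 4^k$ vertices of $\below{k}(v)$ are touched (Invariant~3), and since these per-level bounds grow geometrically the whole telescoping sweep over $i<k\le j$ still collapses to $O(4^j)$. A second point requiring care is that, because of the randomized threshold, a rise to level $j$ may be caused by a coin flip while $\phi_v(j)$ is still far below $4^j$, so one cannot argue from ``$\phi_v(j)=4^j$''; instead one argues from ``Invariant~3 held a step earlier and counters move by one'', which is what gives $\phi_v(k)\le 4^k$ uniformly for all $k\le j$ at the moment of the rise.
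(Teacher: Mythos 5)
Your proof is correct and follows essentially the same route as the paper's: split the cost into list bookkeeping (bounded by $O(4^j)$ neighbors via Invariant~3) plus the \DecrementPhi cascade, and observe that the per-level counts decay geometrically so the sum over levels $i<k\le j$ collapses to $O(4^j)$. Your accounting $\sum_{k=i+1}^{j}|\below{k}(v)|$ is just a re-indexing of the paper's weighted sum $|\below{i+1}(v)|(j-i)+\sum_{k}|\equal{k}(v)|(j-k)$, and your extra care about Invariant~3 possibly being off by one at the moment of a threshold rise is a valid (and slightly more explicit) version of the paper's ``held before the procedure call that led to \Rise'' argument.
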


\begin{proof}
Although they do not state it as such, this lemma holds for the original Baswana \etal\ algorithm as well. When $v$ rises from level $i$ to level $j$, the algorithm performs bookkeeping of two sorts. Firstly, every neighbor $u$ of $v$ whose level is
less than $j$ must update $\equal{j}(v)$ and either $\equal{i}(v)$, if $\LL(u) <i$, or
$\below{\LL(u)+1}(u)$ if $i \le \LL(u) < j$.
By Invariant~3 there are at most  
$O(4^{j})$ neighbors to update. Secondly, every neighbor $u$ of $v$ in $\below{j}(v)$ must execute \DecrementPhi{$u,k$} for every 
$\mmax{i}{\LL(u)} < k \leq j$. The total cost is upper bounded by 
\begin{equation}
\label{eq:rise}
O(\below{j}(v)\cdot(j-i) + \sum_{k=i+1}^{j-1} |\equal{k}(v)|),
\end{equation}
where $\equal{k}(v)$ is the number of neighbors of~$v$ at level~$k$. But note that for $k > i$, $|\equal{k}(v)| < 4^k + 1$,
since otherwise $v$ would have violated Invariant~3 for level $k$ even before the procedure call that led to \Rise{$v,i,j$}. 
Similarly, $|\below{i}(v)| < 4^{i+1}+1$. Plugging these bounds into Equation~\ref{eq:rise} yields 
$$\sum_{k=i}^{j-1} O(4^{k}) = O(4^j).$$
\end{proof}

Before analyzing the recursive work, we bound the probability that \IncrementPhi{$v,i$} calls \Rise. The chance of a probabilistic rise is always the same
$\prise = \Theta(\log(n)/4^i)$. We now bound threshold-rises. For this we need to introduce the notion of a hierarchy.
When we refer to the \emph{(graph) hierarchy~$\hh(t)$} at time $t$, we mean the current graph $G$, the level assigned to each vertex at time $t$,
as well as the set of edges in the matching at time $t$. 

The sequence of oblivious updates
predefined by the adversary gives some probability distribution on the point in time (in the sequence of updates) to process the first call to \IncrementPhi, the second call to \IncrementPhi, the third one, and so on.

\begin{lemma}
\label{lem:threshold-rise}
For any $k$, it holds with high probability that the $k$th call to \IncrementPhi does \emph{not} lead to a threshold rise.
\end{lemma}

\begin{proof}
Let $\bad_{v,i}$ be the bad event that the $k$th call to \IncrementPhi increments $\phi_v(i)$ from $4^{i}-1$ to~$4^i$.
It is enough to show that $\neg \bad_{v,i}$ occurs with high probability; we can then union bound over all pairs $(v,i)$.
Let $t_k$ be the time at which this $k$th call to \IncrementPhi occurs, and note that at the beginning of time~$t_k$ we have $\LL(v) < i$, 
since otherwise we would have $\phi_v(i)=0$ and no threshold rise would occur. Now, let $t$ be the earliest point in 
time such that $\LL(v) < i$ in the entire time interval from $t$ to $t_k$, i.e.,~$t$ is either the start of the algorithm or a point in time when $v$ falls below level $i$. It is not hard to see 
that because Matching Property* only allows a vertex to fall to below level $i$ when $|\below{i}(v)| < 4^{i}/(32C\log(n))$,
it must be the case that at time~$t$ we have $\phi_v(i) < 4^{i}/(32C\log(n)) < 4^i/2$. (There is also the fringe case $t=0$; in this case $\phi_v(i) = 0$ at time $t$ because in the fully dynamic setting one can assume w.l.o.g.\ that the graph starts empty.)
Thus, there must have been at least $4^i/2$ calls to \IncrementPhi{$v,i$} in time interval $(t, t_k)$, and
by the assumption that $ \LL(v) < i$ in this entire time interval, none of these $4^i/2$ calls to \IncrementPhi{$v,i$} 
led to a probabilistic rise. But each probabilistic rise occurs independently with probability $C\log(n)/4^i$ (for a sufficiently large~$C$), 
so a simple Chernoff bound shows us that with high probability this event does not occur.
\end{proof}

We now turn to bounding the recursive work incurred by a procedure. Let us first define this more formally. 
Bringing attention to the pseudocode, we note that each procedure is either directly called by some previous procedure,
or, in the case of \FixFreeVertex{$v$}, it is \emph{indirectly} called by the procedure that added $v$ to the queue; 
we say that the called procedure is
\emph{caused} by the calling procedure and, in case of \FixFreeVertex{$v$}, we say that it is \emph{caused} by the last procedure that
affected $v$'s position on the queue, either by placing it on the queue or by moving it to the end. For example, a procedure \Fall leads to many calls to \IncrementPhi, and 
so can potentially lead to many calls to procedure \Rise. We can thus construct a \emph{causation tree} for each adversarial edge update, whose root is the procedure handling the adversarial edge update and where the parent procedure
causes all the children procedure calls. We then say that the \emph{total work} of a procedure is the bookkeeping work
of that procedure, plus  (recursively) 
the total work of all of its children in the causation tree; equivalently, the total work
of a procedure is the total bookkeeping work required to process all of its descendants in the causation tree. 

We now show that for any vertex $v$ at level $i = \LL(v)$, the \emph{expected total work} of any call to \Fall{$v,i$}, \GenericRandomSettle{$v,i$}, or \FixFreeVertex{$v$} is at most $O(4^{i})$. The running time of the remaining procedures \ResetMatching, \IncrementPhi, and \Rise can then be
easily analyzed in terms of the analysis of the earlier three procedures.
Note that the time to process any procedure at time $t$ depends on two things: the hierarchy at time $t$,
and the random coin flips made after time $t$. Thus, we can define $\efall_i(v,\hh(t))$ to be the expected total work to process \Fall{$v,i$} given that the state of the current hierarchy is $\hh(t)$, 
where the expectation is taken over all coin flips made after time $t$. 
(We assume that in the hierarchy $\hh(t)$ vertex $v$ has level $i$, since otherwise \Fall{$v,i$} is not a valid procedure call.)
We define $\egenericsettle_i(v,\hh(t))$ and $\eprocessfree_i(v,\hh(t))$ analogously.
We say that some hierarchy $\hh(t)$ is valid if it satisfies all of the hierarchy invariants above; note that our
dynamic algorithm always maintains a valid hierarchy. 

We are now ready to introduce our key notation. We let $\efall_i$ be the maximum of all $\efall_j(v,\hh)$, where the maximum
is taken over all levels $j \le i$, all vertices $v$, and all valid hierarchies $\hh$ in which $v$ has level $j$.
Define $\egenericsettle_i$ and 
$\eprocessfree_i$ accordingly.
Define $\emax_i = \max \{ \efall_i, \egenericsettle_i, \eprocessfree_i \}$. 
Note that because $\emax_i$ takes the maximum over all valid hierarchies, it is an upper bound on the expected
time to process \emph{any} update at level $i$. We now prove a recursive formula for bounding $\emax_i$.

\begin{lemma}
\label{lem:emax}
$\emax_i \leq O(4^{i}) + 3 \emax_{i-1}$ 
\end{lemma}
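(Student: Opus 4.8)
The plan is to bound $\emax_i$ by analyzing each of the three procedures \Fall{$v,i$}, \GenericRandomSettle{$v,i$}, \FixFreeVertex{$v$} separately, showing that each has expected total work $O(4^i) + 2\emax_{i-1}$, and taking the maximum. In every case, the bookkeeping work done directly by the procedure is $O(4^i)$ by Lemma~\ref{lem:rise} and Invariant~3 (for \Rise from a lower level up to level $i$ the cost is $O(4^i)$, and for the other procedures the cost is proportional to $|\OO_v| = O(4^i)$), so the real task is to bound the \emph{recursive} work generated by calls pushed onto the queue or invoked directly — and to show that this recursive work is, in expectation, at most $O(4^i) + 2\emax_{i-1}$, i.e. it reduces to at most two ``children'' that each live at level $i-1$ or below, plus an additive $O(4^i)$ for children that can be charged directly.

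\textbf{\FixFreeVertex{$v$}} at level $i$ either calls \Fall{$v,i$}, which lives at level $i$ (I handle this via the \Fall analysis), or calls \GenericRandomSettle{$v,i$}; so it suffices to bound \GenericRandomSettle and \Fall. \textbf{\GenericRandomSettle{$v,i$}} picks a random $w \in \below{i}(v)$, raises $w$ to level $i$ via \Rise{$w, \LL(w), i$} (cost $O(4^i)$ bookkeeping, and it pushes $w$ and possibly its old mate's neighbor $x$ onto the queue), and possibly unmatches $w$'s old partner $x$. The key point: $w$ was at level $<i$, so $w$'s old mate $x$ was also at level $<i$ by Invariant~4, hence the call \FixFreeVertex{$x$} lives at level $\le i-1$ and contributes $\emax_{i-1}$; similarly \FixFreeVertex{$w}$ is called but $w$ is now matched to $v$, so that call does nothing. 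The subtle issue is the \Rise{$w,\LL(w),i$}: it can itself dislodge a matched neighbor of $w$, but that neighbor is at level $i$ (by Invariant~4 the matched edge has both endpoints at $w$'s \emph{new} level $i$)\,— wait, no: before the rise $w$ is at level $\LL(w)<i$, so its current mate is at level $\LL(w)<i$; the rise un-matches $w$ from that old mate $x$ and pushes $x$. So the only recursive child of level $\ge i$ spawned is none, and we get $O(4^i) + \emax_{i-1}$, well within budget.

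\textbf{\Fall{$v,i$}} moves $v$ from level $i$ to $i-1$; its bookkeeping is $O(4^i)$. It then does three things that spawn recursive work: (1) it pushes \FixFreeVertex{$v$} onto the queue, but now $v$ is at level $i-1$, contributing $\emax_{i-1}$; (2) for each $w \in \below{i}(v)$ (at most $O(4^i)$ of them) it calls \IncrementPhi{$w,i$}, which with probability $\prise = \Theta(\log n / 4^i)$ triggers \Rise{$w,\LL(w),i$}, and also possibly triggers a threshold rise — but by Lemma~\ref{lem:threshold-rise} threshold rises happen only with probability $1/n^{\Omega(1)}$ for any fixed call, so their expected contribution is negligible, while the expected number of probabilistic rises is $O(4^i) \cdot \Theta(\log n/4^i) = O(\log n)$, each costing… this is where the delicate accounting comes in; (3) for each $w \in \equal{i}(v)$ it runs \ResetMatching{$w$} with probability $\preset = 1/4^{i+3}$, and since $|\equal{i}(v)| = O(4^i)$ the expected number of resets is $O(1)$, each reset spawning two \FixFreeVertex calls at level $i$.

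\textbf{The main obstacle} — and where the whole argument hinges — is controlling item (2) and the \ResetMatching children in item (3), because both spawn \Rise{$\cdot,\cdot,i$} or \FixFreeVertex{$\cdot$} calls that again live at level $i$, not level $i-1$, so they are not automatically covered by the $\emax_{i-1}$ terms and threaten to blow up the recursion. The resolution must be that \Rise{$w,\LL(w),i$} costs $O(4^i)$ bookkeeping and spawns only children at level $\le i-1$ (its dislodged old-mate-of-$w$ is at level $<i$) plus \FixFreeVertex{$w$} which is now harmless since $w$ just rose and will be matched — actually \Rise only un-matches $w$; the subsequent \FixFreeVertex{$w$} at level $i$ must then be shown to be covered, presumably by folding it into a clean recursive statement where a single ``level-$i$ event'' spawns at most the stated two level-$(i-1)$ subcalls. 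I would set this up by proving the bound by induction on $i$, with base case $i=-1$ trivial, and carefully writing the recursion so that the $O(\log n)$ expected probabilistic rises each cost $O(4^i/\log n)$ in expectation… no: each probabilistic rise is triggered per \IncrementPhi and costs $O(4^i)$, and there are $\Theta(\log n)$ of them in expectation, giving $O(4^i \log n)$, which is too big. So the genuine subtlety — the crux the authors surely resolve via a more careful amortization over the $\below{i}(v)$ being freshly created by the fall, or by charging probabilistic rises against the large $\phi$-increments that preceded them — is exactly how to absorb the probabilistic rises into the $O(4^i)$ term rather than paying $O(4^i \log n)$; I expect the fix is that a probabilistic rise of $w$ to level $i$ can be charged to the $\Theta(4^i/\log n)$ prior \IncrementPhi{$w,i$} calls that each failed to trigger it (a Chernoff-style amortized-in-expectation argument analogous to the one in Lemma~\ref{lem:threshold-rise}), reducing the per-$\sigma$ expected contribution to $O(1)$ rises of cost $O(4^i)$ each, i.e. $O(4^i)$ total, which then closes the recursion $\emax_i \le O(4^i) + 2\emax_{i-1}$.
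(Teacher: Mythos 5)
You correctly identify the crux — absorbing the level-$i$ recursive calls spawned by \Fall{$v,i$} — but your resolution of it is wrong, and the actual resolution is much simpler than what you speculate. The missing idea is that \Fall{$v,i$} is only ever invoked when $|\below{i}(v)| < 4^{i}/(32C\log(n))$ (this is exactly the branch condition in \FixFreeVertex under Matching Property*). So the number of calls to \IncrementPhi{$w,i$} triggered by the fall is not $O(4^i)$ but $O(4^i/\log(n))$, and multiplying by the rise probability $\prise = \Theta(\log(n)/4^i)$ gives $\EX[\xrise] = O(1)$ — the paper gets $1/16$ — not $O(\log(n))$. Your proposed fix (charging each probabilistic rise against the $\Theta(4^i/\log n)$ prior failed coin flips) is an amortization argument, which is precisely the kind of reasoning the paper must avoid to get a \emph{worst-case} expected bound; it would not bound the expected cost of this single call to \Fall in isolation.

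The second piece you leave unresolved — how the level-$i$ children from \ResetMatching and \Rise are "covered" — is handled in the paper not by pushing them down to level $i-1$ but by writing a self-referential inequality: each reset or rise spawns at most two procedure calls each bounded by $\emax_i$ plus $O(4^i)$ bookkeeping, and since $\EX[\xreset + \xrise] \leq 1/8$ one gets $\emax_i \leq O(4^i) + \emax_{i-1} + (2\emax_i + O(4^i))\cdot(1/8) < O(4^i) + \emax_{i-1} + \emax_i/2$, which solves to the claimed bound. Two subtleties you should note here: the count $\xreset + \xrise$ may be correlated with the cost of the spawned calls (both depend on the current hierarchy), but it is independent of $\emax_i$ because $\emax_i$ is defined as a maximum over \emph{all} valid hierarchies — this is what legitimizes multiplying the expectation of the count by $\emax_i$. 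Your treatment of \GenericRandomSettle is essentially correct and matches the paper.
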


\begin{proof}
We first show that $\egenericsettle_i \leq O(4^{i}) + \emax_{i-1}$. 
\GenericRandomSettle{$v,i$} picks some random mate $v'$ for $v$
with $\LL(v') < i$, performs $O(4^{i})$ bookkeeping work to move $v'$ to level $i$ 
(Lemma~\ref{lem:rise}), 
and then causes a single other procedure call, namely, \FixFreeVertex{$\oldmate(v')$}; 
this caused procedure call occurs at some level less than $i$,
so the expected total work can be upper bounded by~$\emax_{i-1}$. 

Now consider \FixFreeVertex{$v$}, where  $i = \LL(v)$. 
The work of this procedure to construct $\below{i+1}(v)$ from $\below{i+1}(v)$ is $O(4^i)$. 
The algorithm then causes one other procedure call: either \GenericRandomSettle{$v,i$} or \Fall{$v,i$}, depending on the size of $\below{i}(v)$. We have already bounded
$\egenericsettle_i$, so all that remains is to bound $\efall_i$.

Recall that the algorithm only executes \Fall{$v,i$} when $\below{i}(v) < 4^{i}/(32C\log(n))$. The procedure \Fall requires the standard $O(\below{i+1}(v)) = O(4^{i})$ bookkeeping work, and it also causes a call to \FixFreeVertex{$v$} at level $i-1$,
which has $\emax_{i-1}$ expected total work. \Fall{$v,i$} can also lead to additional updates at level $i-1$ due to \ResetMatching: see line \ref{step:fall-reset} of Algorithm \ref{alg:matching 2}.  Finally, unlike the other procedures,
\Fall{$v,i$} can also cause additional procedure calls at level $i$. This can happen because each neighbor $u$ of $v$ at level $\leq i-1$ executes 
\IncrementPhi{$u,i$} (line \ref{step:fall-increment}) (Note that it is not possible that a neighbor calls a
\IncrementPhi{$u,j$} for $j > i$)
. This has a small chance of resulting in \Rise{$u, \LL(u), i$} (either through a probabilistic rise or through a
threshold rise - see Lemma~\ref{lem:threshold-rise}), followed by \FixFreeVertex{$u$}, where $\LL(u)=i$, and \FixFreeVertex{$\oldmate(u)$}, where $\LL(\oldmate(u)) \le i-1$.

Let $\xreset$ be the random variable that stands for the number of \ResetMatching{$u$} triggered by the fall of $v$, and note that every such vertex is at level $i-1$. Let $\xrise$ be the number of \Rise{$u, \LL(u), i$} triggered by the fall. 
Note that $\EX[\xreset] \leq 1/4$
because $v$ is at level $i$ before the fall, so by Invariant~3, $v$ has at most $4^{i+1}$ neighbors at level $i-1$, and each neighbor has a $\preset_{i-1} = \resetprob{i-1}$ probability of being reset. 
We now argue that $\EX[\xrise] \leq 1/16$. By Matching Property*, $v$ has at most $4^i/(32C\log(n))$ neighbors~$u$ at lower level before the fall,
each of which executes \IncrementPhi{$u,i$}. Our modification to the original algorithm ensures that this increment has a $\prise = C\log(n)/4^i$ chance
of inducing a probabilistic-rise, and by Lemma~\ref{lem:threshold-rise} the probability of a threshold-rise is negligible (Lemma \ref{lem:threshold-rise}), so for simplicity we
upper bound it by $C\log(n)/4^i$. Thus: $\EX[\xrise] = [4^i/(16C\log(n))][2C\log(n)/4^i] = 1/16$.

We now consider the total work to process a fall. Firstly, the fall automatically
triggers $O(4^{i})$ bookkeeping work plus it causes a procedure call at level $i-1$; by definition, the expected
total work to process this additional procedure call can be upper bounded with $\emax_{i-1}$.
We also have to do additional work for each call to \ResetMatching or \Rise. 
Each reset causes two additional procedure calls at level $i-1$, whose running time
we upper-bound by $2\emax_i$ (using the fact that $\emax_{i-1}\le \emax_i$). Each \Rise procedure requires $O(4^i)$ bookkeeping work and causes a new call at level $i$, as well as a call at another level less than $i$ (due to the old mate $w$ becoming free). We upper-bound the time to execute these two
calls at level $i$ or less again by $2\emax_i$. 
Note that this upper bound allows to achieve a crucial probabilistic independence:
although the value of $\xrise$ might be correlated with the time to process these calls to $\Rise$
(both depend on the current hierarchy), the value of $\xrise$ is \emph{completely independent} from~$\emax_i$, since
the latter takes the maximum over all valid hierarchies, and
so does not depend on the current hierarchy. Now, recall that $\EX[\xreset] \leq 1/4$ and $\EX[\xrise] \leq 1/16$. 
Putting it all together, we can write a recursive formula for $\emax_i$.
\begin{equation}
\begin{split}
\emax_i & \leq O(4^i) + \emax_{i-1} + (2\emax_i + O(4^i)) \sum_{k=1}^\infty k\PP[\xreset + \xrise = k] \\
& \leq O(4^i) + \emax_{i-1} + (2\emax_i + O(4^i))(\EX[\xrise + \xreset]) \\
&= O(4^i) + \emax_{i-1} + (2\emax_i + O(4^i))\frac{5}{16} < O(4^i) + \emax_{i-1}  + \frac{5}{8}\emax_i. \qedhere
\end{split}
\end{equation}
Bringing $\frac{5}{8}\emax_i$ to the left side of the inequality and multiplying it by $8/3$ it leads to the statement of the lemma.
\end{proof}

\begin{corollary}
\label{cor:emax}
The expected total work for a call to \FixFreeVertex, \Fall, \GenericRandomSettle, \Rise, or \ResetMatching or \IncrementPhi
at level $i$ is $O(4^i$), where \Rise{$v,i',i$} is said to be a procedure call at level $i$.
\end{corollary}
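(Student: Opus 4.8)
The plan is to unroll the recursion in Lemma~\ref{lem:emax} into a closed form and then reduce the remaining procedures \Rise\ and \ResetMatching\ to it. I would first fix the base case: at level $-1$ every vertex is free and owns no edges (Invariants~2 and~3), so \FixFreeVertex{$v$} with $\LL(v) = -1$ returns immediately, while \Fall\ is only ever invoked from \FixFreeVertex\ with argument $i \ge 0$ and \GenericRandomSettle{$v,i$} requires $|\below{i}(v)| > 0$ and hence $i \ge 0$; thus $\efall_{-1} = \egenericsettle_{-1} = \eprocessfree_{-1} = 0$, i.e.\ $\emax_{-1} = 0$. Writing $\emax_i \le c \cdot 4^i + 2\emax_{i-1}$ for the constant $c$ implicit in Lemma~\ref{lem:emax} and iterating down to the base case $\emax_{-1} = 0$ gives
\[
\emax_i \;\le\; \sum_{k=0}^{i} 2^{\,i-k}\, c\, 4^{k} \;=\; c\, 4^{i} \sum_{k=0}^{i} 2^{-(i-k)} \;\le\; 2c\, 4^{i} \;=\; O(4^{i}),
\]
since the last sum is geometric with ratio $1/2$. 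Because $\emax_i$ is by definition the maximum of $\efall_i(v,H)$, $\egenericsettle_i(v,H)$, $\eprocessfree_i(v,H)$ over all vertices $v$ at level $i$ and all valid hierarchies $H$, this already establishes the claimed $O(4^i)$ bound for \FixFreeVertex{$v$}, \Fall{$v,i$}, and \GenericRandomSettle{$v,i$}.

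It remains to bound \Rise\ and \ResetMatching\ in terms of the now explicit quantity $\emax$. For \Rise{$v,i,j$}: by Lemma~\ref{lem:rise} the bookkeeping costs $O(4^j)$, and from the pseudocode the only procedure calls it causes are \FixFreeVertex{$v$}, which now runs at level $j$, and --- when $v$ was matched --- \FixFreeVertex{$\oldmate(v)$}, which runs at $v$'s old level $i < j$ (both endpoints of a matching edge share a level, by Invariant~4). Their expected total works are at most $\emax_j = O(4^j)$ and $\emax_i = O(4^i) \le O(4^j)$, so \Rise{$v,i,j$} has expected total work $O(4^j)$, consistent with the convention that it counts as a call ``at level $j$''. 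For \ResetMatching{$v$} invoked while $\LL(v) = i$: it removes $(v, \mate(v))$ from the matching and then causes \FixFreeVertex\ for both $v$ and $\mate(v)$, which are at level $i$ by Invariant~4, for a total of at most $2\emax_i = O(4^i)$.

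I do not expect a genuine obstacle here: the substantive arguments are already contained in Lemmas~\ref{lem:rise}, \ref{lem:threshold-rise}, and~\ref{lem:emax}, and what is left is a geometric-sum unrolling. The two points that need a little care are getting the $\emax_{-1} = 0$ base case right, and, in the \Rise\ step, bounding the spawned call \FixFreeVertex{$v$} at level $j$ by the \emph{already-derived} closed form $\emax_j = O(4^j)$ rather than by invoking the recursion again, so that the argument is not circular.
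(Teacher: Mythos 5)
Your proposal is correct and follows essentially the same route as the paper: solve the recurrence of Lemma~\ref{lem:emax} to get $\emax_i = O(4^i)$ for \FixFreeVertex, \Fall, and \GenericRandomSettle, then bound \ResetMatching and \Rise by the (at most two) calls they cause to these already-bounded procedures plus, for \Rise, the $O(4^j)$ bookkeeping of Lemma~\ref{lem:rise}. Your explicit base case and geometric-sum unrolling just make precise what the paper states as ``solving the recurrence.''
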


\begin{proof}
Solving the recurrence relation in Lemma~\ref{lem:emax} yields $\emax_i = O(4^i)$, which gives us
the desired bound for \FixFreeVertex, \Fall, and \GenericRandomSettle. Procedure \ResetMatching 
causes two calls to \FixFreeVertex, so the same $O(4^i)$ bound applies. 
Procedure \Rise 
requires $O(4^i)$ bookkeeping work (Lemma~\ref{lem:rise}), and then causes at most two other calls to \FixFreeVertex,
each of which we know has expected total work $O(4^i)$.
Procedure \IncrementPhi does $O(4^i)$ bookkeeping work and then causes at most one
call to Procedure \Rise at level $i$. Thus the same $O(4^i)$ bound applies.
\end{proof}

\subsection{Bounding the Probability that an Edge Appears in the Matching.}
\label{subsec:matching-probability}

Now that we have analyzed the time to process the individual procedure calls, we turn our attention to the time required to process an adversarial edge insertion/deletion.
Note that the most direct reason the algorithm might have to perform a procedure call at level $i$ is the deletion of matching edge $(v, \mate(v))$ with $v$ and
$\mate(v)$ at level $i$. 
Our modifications to the algorithm allow us to do without the charging argument of Baswana \etal, 
and instead directly bound the probability that a deleted edge $(x,y)$ is a matching edge.
Note that for $(x,y)$ to be a matching edge, 
it must have been chosen by a \GenericRandomSettle{$x,i$} or \GenericRandomSettle{$y,i$} for some level $i$.
There are thus $2(\textsc{l}_0+1) = O(2\log(n))$ possible procedure calls
that could have created this matching edge: 
we bound the probability of each separately.

\begin{lemma}
\label{lem:prob-matching}
Let $(x,y)$ be any edge at any time $\tstar$ during the update sequence, and let $0\leq \ell \leq \floor{\log_4(n)}$ be any level in the hierarchy. 
Then: $\PP[$at time $\tstar$, $(x,y)$ is a matching edge at level $\ell$ and $\responsible(x)$ is True$]$ $= O(\log^3(n)/4^\ell)$, where the probability
is over all random choices made by the algorithm. (Note that this is equivalent to the probability that $(x,y)$ was chosen by \GenericRandomSettle{$x,\ell$}.)
\end{lemma}

\begin{corollary}
\label{cor:prob-matching}
Let $(x,y)$ be any edge at any time $\tstar$ during the update sequence, and let $0\leq \ell \leq \floor{\log_4(n)}$ be any level in the hierarchy. 
Then: $\PP[$at time $\tstar$, $(x,y)$ is a matching edge at level $\ell]$ $= O(\log^3(n)/4^\ell)$, where the probability is over all random choices made by the algorithm.
\end{corollary}

\begin{proof}[Proof Of Corollary \ref{cor:prob-matching}]
For any edge $(x,y)$ at level $\ell$ that is in the matching, we have that either $\responsible(x)$ or $\responsible(y)$ is True. We can thus apply Lemma \ref{lem:prob-matching} to each of those two cases and union bound the two resulting probabilities.
\end{proof}

The proof of this lemma is very involved, and the rest of this subsection is devoted to proving it. Let us first briefly discuss the naive approach and why it fails to work. Let $t$ be the \emph{last} time before~$\tstar$ that \GenericRandomSettle{$x$,$\ell$} is called, and note that assuming $\responsible(x)$ is True at time $\tstar$ the matching edge is precisely the matching edge picked at time $t$. Matching Property* guarantees that at any given call to \GenericRandomSettle{$x$,$\ell$} only has a $O(\log(n)/4^\ell)$ probability of picking the specific edge $(x,y)$. Thus, it is tempting to (falsely) argue that at time $t$ the probability that \GenericRandomSettle{$x$,$\ell$} picked edge $(x,y)$ is at most $O(\log(n)/4^\ell)$.
But this might not be true, because although \GenericRandomSettle{$x$,$\ell$} picks an edge uniformly at random from many options,
the fact that we condition on $t$ being the \emph{last} random settle before $\tstar$ means that we condition on events \emph{after} time $t$, which can greatly skew the distribution at time $t$. 
Consider, for illustration, the update sequence in the star graph at the beginning of Section~\ref{subsec:modification}:  
the sequence repeatedly inserts and deletes all edges other than $(v,v')$, so any edge other than $(v,v')$ is unlikely to be the \emph{last}
matching edge, since it will soon be deleted. To overcome this issue, we now present a more complex analysis that (loosely speaking) bounds the total number of times \GenericRandomSettle{$x$,$\ell$} is called in some critical time period. 

One of the main difficulties of the proof is that we have to be very careful with the assumption of obliviousness. The model assumes that adversarial updates are oblivious to our hierarchy, so the specific mate that $v$ chooses in some \GenericRandomSettle{$v,k$} will not affect future adversarial updates. But the internal changes made by the algorithm might not be oblivious: if $v$ chooses the specific edge $(v,w)$, this will change the level of $w$, which will lead to changes to the neighbors of $w$, which might indirectly increase the probability of some \ResetMatching{$x$}, which will lead to the removal of $(x,y)$ from the matching. Thus, internal updates are adaptive to internal random choices.

To overcome this adaptivity issue, we will show that the higher levels of the hierarchy are in fact oblivious to random choices made by lower levels of the hierarchy.

\paragraph{Comparison to the analysis of Baswana et al.} Our proof of Lemma \ref{lem:prob-matching} consists of two main parts. The first part, which includes Sections \ref{subsec:hierarchy-invariants} and \ref{subsec:higher-and-lower}, establishes that higher levels of the hierarchy are independent from random choices made on lower levels. This part is very similar to an analogous proof of independence in Baswana et al. \cite{BaswanaGS18} (see Lemma 4.13, Theorem 4.2, Lemma 4.17), although we use different notation which is more amenable to the second part of the proof.

In the second part of the proof, which includes Sections \ref{subsec:proof-prob-matching} and \ref{subsec:proof-stronger}, we show that this independence allows us to prove Lemma \ref{lem:prob-matching}. This part is entirely new to our paper, because the claim does \emph{not} hold for the original algorithm of Baswana et al \cite{BaswanaGS18}. We show how our modifications of \cite{BaswanaGS18}, and especially our introduction of the $\ResetMatching$ procedure, allows us to replace the fundamentally amortized guarantees of \cite{BaswanaGS18} with the universal upper bound on the probability in Lemma \ref{lem:prob-matching}.

\subsubsection{Hierarchy Changes and Invariants During the Processing}
\label{subsec:hierarchy-invariants}
So far we have only concerned ourselves with the state of the hierarchy after it completes processing some adversarial update. But the algorithm can make many changes to the hierarchy while processing only a single adversarial update. In this section we will need notation to analyze the hierarchy in the middle of processing.

\begin{definition}
\label{dfn:changes}
We refer to a \emph{change} in the hierarchy as any of the following operations. All line numbers relate to Algorithm \ref{alg:matching 1} or \ref{alg:matching 2}.
\begin{enumerate}
	\item Removing an edge $(x,y)$ from the matching in line \ref{step:reset-delete} or \ref{step:rise-mate-delete} 
	\item Adding an edge $(x,y)$ to the matching in line \ref{step:settle-insert} 
	\item Moving a vertex from some level $i$ to some level $j>i$, and the associated bookkeeping changes in line \ref{step:rise-change} in \Rise($\cdot,i,j)$
	\item Moving a vertex from some level $i$ to level $i-1$, and the associated bookkeeping changes in line \ref{step:fall-change} in \Fall($\cdot,i$)
	\item Performing an adversarial insertion of edge $(x,y)$ in line \ref{step:insert-change}
	\item Performing an adversarial deletion of edge $(x,y)$ in line \ref{step:delete-change}
\end{enumerate}
Given any execution of our dynamic algorithm let $\sigma_1, \sigma_2, \ldots $ be the sequence of all hierarchy-changes made by the algorithm. 
\end{definition}

In Lemma~\ref{lem:invariants} we showed that right after the algorithm has completed the processing of some adversarial insertion/deletion, the hierarchy satisfies Invariants 2--4. Note, however, that if we look at the hierarchy right after some change $\sigma_i$, then Invariant 2 might be violated, since Algorithm \ref{alg:matching 1} might not yet have terminated, so there might still be free vertices that need fixing. 
Also Invariant~3 needs to be (slightly) relaxed to Invariant~3':
\begin{itemize}
\item \textbf{Invariant~3':} For each vertex $v$ and for all
$j > \LL(v)$, $\phi_v(j) \le 4^{j}$ holds true.
\end{itemize}
Thus, we now show that Invariants 3' and 4 are true in \emph{every} instantiation of the hierarchy, i.e., at any point in the algorithm. 

\begin{lemma}[Generalized Invariants] 
\label{lem:invariants-stronger}
Let $\hh$ be the hierarchy right after some change $\sigma_i$. Then, $\hh$ satisfies Invariants 3' and 4 above. (Note that Matching Property* certainly continues to hold because it corresponds to the behavior of the algorithm itself, not to any hierarchy invariant.)
\end{lemma}

\begin{proof}
For Invariant 4, the only time we insert an edge $(v,w)$ into the matching is in line \ref{step:settle-insert} of \GenericRandomSettle, and line \ref{step:settle-rise} ensures that when we do so we have $\LL(v) = \LL(w)$. Whenever a vertex falls in level it is a free vertex, so Invariant 4 is trivially preserved. Finally, a vertex only rises in level inside the \Rise operation, and line \ref{step:rise-mate-delete} ensures that we only perform the actual rise on a free vertex.

The proof for Invariant 3' is much more involved, though it is conceptually straightforward. Recall that Invariant 3' states that $\phi_v(i) \leq 4^i$ for all vertices $v$ and level $i$; see the beginning of Section \ref{subsec:modification} for the definition of $\phi_v(i)$. 

Define a hierarchy change to be \emph{risky} if it increases $\phi_v(i)$ for some pair $v,i$. Since $\phi_v(i)$ can only increase when $|\below{i}(v)|$ increases or when $v$ drops from level $i$ to $i-1$, it is easy to check that our algorithm only performs two types of risky hierarchy changes: (A) an adversarial insertion of edge $(u,v)$ and (B) the falling of a vertex from some level $j$ to level $j-1$. These changes to the hierarchy are made in Line \ref{step:insert-change} of Algorithm \ref{alg:matching 1} and Line \ref{step:fall-change} of Algorithm \ref{alg:matching 2} respectively.

Given a vertex $v$ and a level $i$ we say that pair $v,i$ is \emph{violating} if $\phi_v(i) \ge 4^i$. Note that only a risky operation can cause a pair to become violating. We now prove Invariant 3 via induction on the number of risky hierarchy changes. 

\emph{Induction Hypothesis:} Consider any risky hierarchy change $\sigma$. Then, right before $\sigma$, all pairs $v,i$ are non-violating; that is, we have $\phi_v(i) < 4^i$ for every vertex $v$ and level $i$. 

\emph{Induction Basis:} The proof of the base case is trivial: when the algorithm begins we have $\phi_v(i) = 0$ for every pair $v,i$, and this continues to hold before the first risky change because non-risky changes cannot increase any $\phi_v(i)$. 

\emph{Induction Step:} Assume that the induction hypothesis holds for some risky change $\sigma$. We now show that it also holds for the next risky change $\sigma'$. We consider two cases:

{\bf Case 1: $\sigma$ corresponds to Line \ref{step:fall-change} in Procedure \Fall($v,i$)}. Note that the algorithm performs no risky hierarchy changes between executing lines \ref{step:fall-change} and \ref{step:fall-end} (including all the sub-routines called in between). Thus, it is sufficient to show that by the time the algorithms finishes Procedure \Fall($v,i$) in Line \ref{step:fall-end} there are no violating pairs; this will then continue to hold until the next risky operation $\sigma'$ because the non-risky operations in between cannot create new violating pairs
or cause already fixed pair to become violating again. Note that because all the hierarchy changes made during \Fall($v,i$)
(e.g., the rising of vertices) are non-risky, it is in fact enough to show that every pair $w,j$ is or becomes non-violating at some point between lines \ref{step:fall-change} and \ref{step:fall-end}.

The effect of the hierarchy change $\sigma$ which causes $v$ to fall from level $i$ to level $i-1$ is to increase $\phi_w(i) = |\below{i}(w)|$ by $1$ for every neighbor $w$ that is in $\below{i}(v)$. By the induction hypothesis, we then have $\phi_w(i) \leq 4^i$ for all such pairs, while all other pairs remain non-violating. Now, for every $w \in \below{i}(v)$ the \Fall($v,i$) operation executes \IncrementPhi($w,i$) (Line \ref{step:fall-increment}). When the algorithm executes \IncrementPhi($w,i$), if $\phi_w(i) < 4^i$, then we are done, because as argued above it is enough to show that $w,i$ is non-violating at some point during the execution of \Fall($v,i$). If $\phi_w(i) = 4^i$, then the algorithm raises $w$ to level $i$ (threshold rise in \IncrementPhi($w,i$)), so after the rise we have $\LL(w) = i$ and $\phi_w(i) = 0 < 4^i$, as desired.

{\bf Case 2: $\sigma$ corresponds to Line \ref{step:insert-change} Procedure \Insert($u,v$)}. Note that the algorithm performs no risky hierarchy changes between Line \ref{step:insert-change} and Line \ref{step:insert-reset} (including subroutines called by lines in between). Thus, analogously to the previous case, it is enough to show that every pair $w,j$ is or becomes non-violating at some point between the execution of Line \ref{step:insert-change} and Line \ref{step:insert-reset}. The effect of the hierarchy change $\sigma$ which inserts edge $(u,v)$ is to increase $\phi_u(j) = |\below{j}(u)|$ and  $\phi_v(j) = |\below{j}(v)|$ for every $j > \max \{ \LL(u),\LL(v) \}$. We focus on the pairs $u,j$, since the pairs $v,j$ are analogous. For every such pair $u,j$ the algorithm executes \IncrementPhi($u,j$) (Line \ref{step:insert-rise}). As in Case 1, if $\phi_u(j) < 4^j$ when the algorithm executes \IncrementPhi($u,j$), then pair $u,j$ is non-violating and we are done. If $\phi_u(j) = 4^j$ then again as in Case 1, the algorithm raises $u$ to level $j$, which leads to $\phi_u(j) = 0 < 4^j$, so $j,u$ becomes non-violating.

\end{proof}

\subsubsection{Upper and Lower Hierarchy and Hierarchical Independence}
\label{subsec:higher-and-lower}

We now formally separate changes to the upper and lower parts of the hierarchy and then show the independence of the upper hierarchy from the lower hierarchy.

\begin{definition}
\label{dfn:changes-above}
We say that a hierarchy change $\sigma$ is \emph{above} level $\ell$ if one of the following holds:
\begin{enumerate}
	\item The change removes/adds an edge $(u,v)$ from/to the matching for which $\LL(u) > \ell$ (Recall that $\LL(u) = \LL(v)$ by Invariant 4.)
	\item \label{change-above:rise} The change raises a vertex from level $i$ to level $j>i$ with $j > \ell$ (it does not matter if $i > \ell$.)
	\item The change moves a vertex from level $i$ to level $i-1$ with $i > \ell$.
	\item The change is an adversarial insertion/deletion of edge $(x,y)$ (regardless of $\LL(x)$ and $\LL(y)$).
\end{enumerate}

For any execution of the dynamic algorithm, let $S = \sigma_1, \sigma_2, \ldots $ be the sequence of changes made to the hierarchy. Let $S^\ell = \sigma^\ell_1, \sigma^\ell_2, \ldots$ be the subsequence of $S$ consisting of all changes at level $> \ell$: that is $S^\ell$ contains all $\sigma_i$ above level $\ell$, in the same order as in $S$.
\end{definition}

To prove independence of the hierarchy above level $\ell$, we will think of the algorithm as using two different random bit-streams.

\paragraph{Defining higher and lower random bits:}
Let $\aa$ be the sequence of updates made by the adversary: since the adversary is oblivious, we can fix this sequence in advance. Let $B$ be the entire sequence of random bits used by the algorithm. If the algorithm is given all of $(\aa, B)$ as input, it will always produce the same hierarchy.
	
Now, let us conceptualize the same algorithm a bit differently. Let $\ell$ be the fixed level in the statement of Lemma \ref{lem:prob-matching}. We will have two sequences of random bits: $B_{>\ell}$ and $B_{\leq \ell}$. Whenever the algorithm runs \GenericRandomSettle{$v, k$} for any $v \in V$, if $k>\ell$ then it chooses the new random mate for $v$ using the bits in $B_{>\ell}$; otherwise, it uses the bits in $B_{\leq \ell}$. Similarly, whenever the algorithm executes \Rise($v,\LL(v),k$) with probability $\prise_k$, the random bits for $\prise_k$ are taken from $B_{>\ell}$ if $k > \ell$, and from $B_{\leq \ell}$ otherwise. Finally, whenever the algorithm executes \ResetMatching{$v$} with probability $\preset_{\LL(v)}$, the random bits for $\preset_{\LL(v)}$ are taken from  $B_{>\ell}$ if $\LL(v) > \ell$ and from $B_{\leq \ell}$ otherwise.
	
Note that the execution of the algorithm is completely determined by the triplet $(\aa, B_{>\ell},B_{\leq \ell})$. Previously, we only fixed the update sequence $\aa$, and assumed the bits in $B_{>\ell},B_{\leq \ell}$ were chosen randomly. 
We now modify this as follows.
Given any fixed sequence of bits $B^+$, we say that \emph{the algorithm run with $B_{>\ell}$ set to $B^+$} if the bits from $B_{>\ell}$ are always taken from $B^+$, but the bits from $B_{\leq \ell}$ are chosen randomly. Then, when we speak of probabilities in such an execution, the probability is only over the bits in $B_{\leq \ell}$.

The lemma and its corollary below formally states the independence of the hierarchy above level $\ell$. Intuitively, the lemma says the following. Fix a sequence of adversarial updates $\aa$ and a sequence of bits $B^+$, and say that we run our dynamic matching algorithm with $B_{>\ell}$ set to $B^+$. Then, the sequence $S^\ell$ of changes above $\ell$ is \emph{deterministically} determined by $B^+$; in other words, the sequence will always be the same regardless of the random bits in $B_{\leq \ell}$. There is one caveat: when a vertex $v$ rises from level $i$ to $j > \ell$ (change type \ref{change-above:rise} in Definition \ref{dfn:changes-above}), although $v$ and $j$ are always deterministically determined by $B^+$, $i$ may depend on $B_{\leq \ell}$ if $i \leq \ell$.

\begin{lemma} [Hierarchical Independence]
	\label{lem:above-independence}
Let $B_1, B_2$ be any two bit sequences for $B_{\leq \ell}$. Let $\sigma^\ell_1, \ldots $ be the sequence of changes above $\ell$ performed by the execution that uses $\aa, B^+, B_1$ and let $\tau^\ell_1, \ldots $ be the sequence of changes above $\ell$ performed by the execution that uses $\aa, B^+, B_2$. Then, every $\sigma^\ell_k$ is equivalent to $\tau^\ell_k$ in the following sense:
\begin{itemize}
	\item If $\sigma^\ell_k$ moves vertex $v$ from level $i$ to level $j > i$ (with $j > \ell$), then $\tau^\ell_k$ moves the same vertex $v$ from level $i'$ to the same level $j > i'$; moreover, $i \neq i'$ is only possible if $i \leq \ell$ and $i' \leq \ell$.
	\item If $\sigma^\ell_k$ is any other type of change above $\ell$, then $\sigma^\ell_k$ is identical to $\tau^\ell_k$.
\end{itemize}
\end{lemma}

\begin{proof}
Although it is somewhat involved, conceptually speaking, the proof is quite simple: we go through the possible operations of the algorithm and show by induction that because both algorithms use the same bits $B^+$ for $B_{>\ell}$, the two executions always have the same above-$\ell$ hierarchy.

Intuitively, the induction hypothesis is that all times, both executions have the same above-$\ell$ hierarchy as well as the same queue $Q$ of free vertices above level $\ell$ (see Algorithm \ref{alg:matching 1}). The issue is that when we say ``at all times", this does not refer to execution time, since one algorithm may be ahead of the other. Instead, as in the lemma statement, we formalize the intuition by talking about the sequence of above-$\ell$ changes and of changes to $Q$.

\paragraph{Defining Relevant Operations:}
Let $A_i$ be the execution of the algorithm running on $\aa, B^+, B_i$ for $i=1,2$. 
Consider the following sequence  of operations $\gamma_1, \gamma_2, \ldots$  consisting of
(i) above-$\ell$ changes  and (ii) changes  (addition/removal/move-to-end) to the queue formed by this execution. Whenever the execution makes an above-$\ell$ hierarchy-change, we add this change to the end of the sequence. Similarly, whenever the execution adds or removes a vertex $v$ to queue $Q$ or moves $v$ to end of $Q$ with $\LL(v) > \ell$, we add this change to the end of the sequence. 
We refer to the $\gamma_i$ as \emph{relevant} operations.
Let $A_2$ be the execution running on $\aa, B^+, B_2$ and define relevant operations $\delta_1, \delta_2, \ldots$ analogously for this execution. We say that $\gamma_i = \delta_i$ if one of the following holds:
\begin{itemize}
	\item Both $\gamma_i$ and $\delta_i$ add/remove/move-to-end the same vertex $v$ in $Q$ and $v$ has the same level when $\gamma_i$ is performed in execution $A_1$ as when $\delta_i$ is performed in execution $A_2$.
	\item  $\gamma_i$ and $\delta_i$ correspond to the same non-rise hierarchy change.
	\item Both $\gamma_i$ and $\delta_i$ raise the same vertex $v$ to the same level above $\ell$.
\end{itemize}

\begin{claim}[Main Claim] For every $i$ we have $\gamma_i = \delta_i$.
\end{claim}

It is easy to verify that the main claim proves Lemma \ref{lem:above-independence}, as it implies that the above-$\ell$ hierarchy and the above-$\ell$ free vertices evolve in the same way, which is strictly stronger than the lemma statement, which only concerns the hierarchy changes. We prove the claim by induction. 

\emph{Induction Hypothesis:} For any $i$, for all $j \le i$ it holds that $\gamma_j = \delta_j$ and that both executions $A_1$ and $A_2$ have used the same number of bits from $B^+$.

\emph{Induction Bases:}
The base case is trivially true since the graph starts empty, so $\gamma_1$ and $\delta_1$ both correspond to the same adversarial insertion. 

\emph{Induction Step:}
To show the inductive step we will consider somewhat larger chunks of the algorithm. 
We note that all relevant operations performed by the algorithm -- whether hierarchy changes or changes to $Q$ -- occur in one of three scopes in Algorithm \ref{alg:matching 1}.

\begin{definition}
Define the delete-body of Algorithm \ref{alg:matching 1} to consist of the execution of all lines in Procedure \Delete{$u,v$} \emph{except} \ProcessQueue{} in Line \ref{step:delete-process-queue}, including all subroutines called during the execution of these lines. Define the insert-body to consist of the execution of all of Procedure \Insert{$u,v$} except \ProcessQueue{} in Line \ref{step:insert-process-queue}, including all subroutines called during the execution of these lines. Finally, define the fix-body to contain lines \ref{step:pop}-\ref{step:fix}, including all subroutines called during the execution of these lines. We say that the fix-body is above level $\ell$ if the vertex $v$ popped from $Q$ has $\LL(v) > \ell$.
\end{definition}

\begin{definition}
We say that a relevant operation $\gamma_i$ or $\delta_i$ is \emph{primary} if it is an adversarial insertion, an adversarial deletion, or the removal of a vertex $v$ from $Q$ with $\LL(v) > \ell$ (this last change always occurs in Line \ref{step:pop}). (Note that only a removal from $Q$ counts as a primary change: not an insertion or a move in $Q$.)
\end{definition}

\begin{observation} All relevant operations performed by the algorithm occur within either an insert-body, a delete-body, or a fix-body above level $\ell$. Moreover, each of these three bodies always begins with a primary change $\gamma_i$, and each primary change initiates the corresponding body of the algorithm.
\end{observation}
	
By the observation above, the proof of the Main Claim consists of three possible cases summarized in the following claim.

\begin{claim}
	\label{clm:delete/insert}
Assume that $\gamma_k = \delta_k$ is an adversarial deletion or insertion of some edge $(u,v)$, that the main claim holds for every $i \leq k$,
and that both executions $A_1$ and $A_2$ have used the same number of bits from $B^+$ up to operation $\gamma_k$, resp. $\delta_k$. Let $\gamma_{q}$ be the next primary change performed by execution $A_1$. Then $\gamma_r = \delta_r$ for each $r \in [k,q]$; that is, the main claim holds for every $i \leq q$,
and that and that both executions $A_1$ and $A_2$ have use the same number of bits from $B^+$ between operation up to operation $\gamma_q$, resp.~$\delta_q$.

The same holds if $\gamma_k = \delta_k$ pops some vertex $v$ from $Q$ with $\LL(v) > \ell$.
\end{claim}

It is easy to check that the claim above proves the Main Claim. We first start with a definition and a couple observations.

\begin{definition}
	For any $j$, let $\hh^1_j$ and $Q^1_j$ be the above-$\ell$ hierarchy of execution $A_1$ and the (ordered) queue of above-$\ell$ vertices after $\gamma_1, \ldots, \gamma_j$ have been performed. Define $\hh^2_j$ and $Q^2_j$ analogously.
\end{definition}

\begin{observation}
	\label{obs:identical} Say that the main claim holds for all $i \leq k$. Then, $\hh^1_{k} = \hh^2_{k}$ and $Q^1_{k} = Q^2_{k}$. Moreover, if a vertex $v$ has $\LL(v) > \ell$ in $\hh^1_{k} = \hh^2_{k}$, then the bit $\responsible(v)$ is the same in both hierarchies. 
\end{observation}

\begin{observation}
	\label{obs:fix}
Consider any fix-body that is \emph{not} above level $\ell$. Then, during the execution of these lines, no relevant changes can occur. This observation can be verified by looking at the flow of our algorithm, and observing that fixing a vertex at level $\leq \ell$ can only make hierarchy changes at level $\leq \ell$ and add/move vertices to $Q$ with level $\leq \ell$.
\end{observation}
	
\begin{proof}[Proof of Claim \ref{clm:delete/insert}]
(A) We first show the claim for a deletion as
this is the simplest case. Note that by Observation \ref{obs:identical}, whether or not $(u,v) \in \mathcal{M}$ will be the same in both executions. Thus, either both executions add $u$ and $v$ to the end of $Q$, or they make no relevant changes, so both executions make the same relevant changes in the delete-body, and so continue to have the same above-$\ell$ hierarchy and queue. 

Now, if the above-$\ell$ queue is empty at the end of this delete-body, then by Observation \ref{obs:fix}, the next primary change in both executions will be an adversarial insertion/deletion from $\aa$, which is clearly the same in both executions. Else, since $Q$ is the same in both executions after the delete-body, if $v$ is the first vertex in $Q$ with $\LL(v) > \ell$, then the popping of $v$ will be the next primary operation in both executions. Either way, the next primary operation $\gamma_q = \delta_q$ is the same in both executions  and neither uses
any bits of
$B^+$.

(B) We next show the claim for the case of an insertion.
In Line \ref{step:insert-begin-rise} the algorithm computes $j = \max \{\LL(u),\LL(v)\}$. Note that since the above-$\ell$ hierarchies are the same in both executions when the insert-body begins (Observation \ref{obs:identical}), if $j > \ell$ then $j$ will be the same in both executions. On the other hand, if $j \leq \ell$, then no relevant operations will occur in lines \ref{step:insert-begin-rise}-\ref{step:insert-end-rise} and we only have to analyze the \ResetMatching operations in Line \ref{step:addition/removal/move-to-end} and Line \ref{set:addition/removal/move-to-end2} (see below). Thus, we can assume that $j > \ell$ and that both executions now execute lines \IncrementPhi{$v,j$} and \IncrementPhi{$u,j$}. We now focus on \IncrementPhi{$v,j$}; the argument for $u$ is identical.
	
The first line of \IncrementPhi{$v,j$} checks if $\below{j}(v) \geq 4^j$: since the above-$\ell$ hierarchies of $A_1$ and $A_2$ are the same and since $j > \ell$, it is follows that $\below{j}(v)$ is the same in both executions. Thus, either both executions will perform the same threshold rise or neither will. The next line of \IncrementPhi{$v,j$} performs a rise with probability $\prise_i$. Since $j > \ell$, the bits used to determine this probabilistic rise come from $B^+$, and so are the same in both executions. Note that both executions access the same bit from $B^+$ as by the induction hypothesis both executions have looked at the same bits in $B^+$ so far. Thus, again, either both operations perform the same rise and the corresponding relevant changes or neither do. They also consume the same number of bits from $B^+$.

Afterwards they will use the same bits in $B^+$ to decide whether to execute a \ResetMatching{$u$} and/or \ResetMatching{$v$}.  Within a \ResetMatching{$w$} exactly the same operations will be executed as the same vertices are responsible for an edge (Observation \ref{obs:identical}). 

Thus the above-$\ell$ hierarchy and queue continue to be the same in both executions throughout the insert-body and the number of bits of
$B^+$ that are used is identical. By the same argument as in~(A), the next primary operation will also be the same.

(C) Finally we show the claim if $\delta_k$ pops some vertex $v$ from $Q$ with $\LL(v) > \ell$.
In this case, the algorithm executes \FixFreeVertex{$v$} in both executions. 
Recall that we assume in the lemma statement that $i = \LL(v) > \ell$. Thus, since the above-$\ell$ hierarchies are the same at the beginning of the fix-body (Observation \ref{obs:identical}), we know that when we compute sets $\below{i}(v)$ and $\below{i+1}(v)$ in Line \ref{step:fix-compute-sets} of Algorithm \ref{alg:matching 2}, each set will be the same in both executions. Thus, either they will both execute the If statement of Line \ref{step:fix-if} or they will both execute the Else statement of Line \ref{step:fix-else}.

If they both execute the If statement, then they both execute \GenericRandomSettle{$v,i$}. Note that the random mate $w$ picked by this \GenericRandomSettle will be the same in both executions because $i > \ell$, so $w$ is picked according to bits in $B^+$, which are the same in both executions. Thus, in the case of the If statement, it is easy to check that all relevant operations performed by the fix-body will be the same in both executions and the number of consumed bits of $B^+$ is identical.
	
Now, say that the algorithm instead executes \Fall{$v,i$} from the Else statement. Then the algorithm performs \IncrementPhi{$w,i$} for every $w \in N_{<i}(v)$. Because $N_{<i}(v)$ are the same in both executions, the same operations \IncrementPhi{$w,i$} are performed. By the same argument as in~(B), these \IncrementPhi{$w,i$} then lead to the same relevant operations $\gamma_r = \delta_r$ in both executions and use the same number of bits from $B^+$. The algorithm then performs \ResetMatching{$w$} for each $w \in N_{=i-1}(v)$. Once again, if $i-1 \leq \ell$ then none of these operations are relevant, so the claim trivially holds. Otherwise $i-1 > \ell$ and since the above-$\ell$ hierarchy is the same in both executions up to this point, we have that $N_{=i-1}(v)$ is the same in both executions and they will both use the same bits and the same number of bits from $B^+$ to decide whether to perform a \ResetMatching{$w$}. Within a \ResetMatching{$w$} exactly the same operations will be executed as the same vertices are responsible for an edge. Thus, they will have identical results in both executions.

We have thus shown that throughout the fix-body, both executions perform the same relevant operations, have the same above-$\ell$ hierarchy and queue, and use the same number of bits from $B^+$. By the same argument as in (A), the next primary operation will also be the same. 

\end{proof}

We have thus proved the main claim and completed the proof of Lemma \ref{lem:above-independence}.

\end{proof}

Throughout our analysis, we will rely on the corollary below, which is an extension of Lemma \ref{lem:above-independence}. 
We start with an informal description.

\paragraph{Informal Description of  Corollary \ref{cor:above-independence}:} Say that we run our algorithm on a fixed update sequence~$\aa$, and that at some point during the algorithm we call procedure $\GenericRandomSettle(x,\ell)$. Let $N_{<\ell}(x) = \{y_1, \ldots y_k\}$ when we call the procedure. Let $Y_i$ be the event that $\GenericRandomSettle(x,\ell)$ picks $y_i$ as the mate of $x$; since the mate of $x$ is chosen uniformly at random from $N_{<\ell}(x)$, we know that $\PP[Y_i] = 1/k$. This statement remains true no matter what happened before this execution of $\GenericRandomSettle(x,\ell)$, since $\GenericRandomSettle$ uses fresh randomness. However, say that $\EE$ is some event which depends on the \emph{entire} sequence of hierarchy changes made by the algorithm, including those after the call to $\GenericRandomSettle$. In this case, we might have  $\PP[Y_i \cond \EE] \neq \PP[Y_i]$; for example, since $\EE$ depends on the entire sequence, it might be the case that $\EE$ can be true only if $Y_i$ is true. The crux of our corollary is that if we consider an event $\EE_{>\ell}$ that depends on all hierarchy changes before the call to $\GenericRandomSettle(x,\ell)$ and also on future above-$\ell$ changes made by the algorithm, then we can indeed state that $\PP[Y_i \cond \EE_{>\ell}]  = \PP[Y_i] = 1/k$. The same is true of a call to $\ResetMatching(x)$, as long as $\LL(x) \leq \ell$ when the call is made.

\begin{corollary}
	\label{cor:above-independence}
	Fix some adversarial update sequence $\aa$, and consider the execution of the algorithm on $\aa$.
	The following two properties hold:
	
	\begin{itemize}
		\item Consider some call to $\GenericRandomSettle(x,\ell)$, let $N_{<\ell}(x) = \{y_1, \ldots y_k\}$ when the call is made and let $Y_i$ be the event that $y_i$ is chosen as $\mate(x)$ as a result of the call.
Let $\spast$ be the sequence of \emph{all} hierarchy changes before this call to $\GenericRandomSettle(x,\ell)$ and let $S^\ell$ be the sequence of all above-$\ell$ hierarchy changes after the call.  Let $\EE_{>\ell}$ be any event that depends only on $\spast$ and $S^\ell$; that is, whether $\EE_{>\ell}$ is true or false is uniquely determined by these sequences. Then, for all $1 \leq i \leq k$, $\PP[Y_i \cond \EE_{>\ell}, \spast] = \PP[Y_i \cond \spast] = 1/k$. 
		
		\item Assume that at some point we reach a line that executes $\ResetMatching(u)$ with some probability $p$ (line \ref{step:addition/removal/move-to-end} of Algorithm \ref{alg:matching 1} or Line \ref{step:fall-reset} of Algorithm \ref{alg:matching 2}) and that at this point, $\LL(u) \leq \ell$. As above, let $\spast$ be the sequence of all hierarchy changes before this line, let $S^\ell$ be the sequence of all above-$\ell$ hierarchy changes after the call and let $\EE_{>\ell}$ be any event that depends only on $\spast$ and $S^\ell$. Let $Y$ be the event that the $\ResetMatching(u)$ is in fact executed. Then $\PP[Y \cond \EE_{>\ell},\spast] = \PP[Y \cond \spast] = p$. 
	\end{itemize}
\end{corollary}

\begin{proof} 
	The proof follows easily from Lemma \ref{lem:above-independence}. We will only prove the first property about $\GenericRandomSettle(x,\ell)$; the property about $\ResetMatching(x)$ can be proved in the same fashion.
	
	Since $\GenericRandomSettle(x,\ell)$ picks a random mate using fresh randomness that is independent from all previous choices made by the algorithm, we have $\PP[Y_i \cond \spast] = 1/k$. 
	By Bayes' law, we have $$\PP[Y_i \cond \EE_{>\ell},\spast] = \PP[Y_i \cond \spast] \frac{\PP[\EE_{>\ell} \cond Y_i,\spast]}{\PP[\EE_{>\ell} \cond \spast]}.$$
	
	We now show that $\PP[\EE_{>\ell} \cond Y_i,\spast] = \PP[\EE_{>\ell} \cond \spast]$, which will complete the proof. As in the setup of Lemma \ref{lem:above-independence}, we can think of the algorithm as running on bit streams $B_{>\ell}$ and $B_{\leq \ell}$. Since $\EE_{>\ell}$ depends only on $\spast$, the adversarial updates $\aa$ and above-$\ell$ changes, we know from Lemma \ref{lem:above-independence} that whether or not $\EE_{>\ell}$ is true depends only on $\spast, \aa$ and the bits in $B_{>\ell}$. 
	We will next show that $Y_i$ only depends on bits of $B_{\leq \ell}$ that are independent from $\spast$. All the bits in $B_{\leq \ell}$ are by definition independent from $B_{>\ell}$, and, by the obliviousness of the adversary, from $\aa$. Thus, $Y_i$ is an event that depends only on random bits that are independent from the random bits/events that $\EE_{>\ell}$ depends on, so we have $\PP[\EE_{>\ell} \cond Y_i,\spast] = \PP[\EE_{>\ell} \cond \spast]$, as desired.
	
	It remains to show that $Y_i$ only depends on bits of $B_{\leq \ell}$ that are independent from $\spast$. This follows from the fact that $Y_i$ is determined by the call to $\GenericRandomSettle(x,\ell)$ made after all the changes in $\spast$ have already occurred. Because the call is at level $\ell$, the bits come from $B_{\leq \ell}$; because the call uses fresh randomness, these bits are independent from $\spast$.  
\end{proof}

\subsubsection{Proof of Lemma \ref{lem:prob-matching}}
\label{subsec:proof-prob-matching}
With the hierarchical independence in place, we are now ready to begin the proof of Lemma \ref{lem:prob-matching}. We will actually prove a slightly stronger statement, which shows that only the bits in $B_{\leq \ell}$ need to be random for the lemma to hold; the bits in $B^+$ can be chosen adversarially.

\begin{lemma} [Stronger Version of Lemma \ref{lem:prob-matching}]
\label{lem:prob-matching-stronger}
Let $0\leq \ell \leq \floor{\log_4(n)}$ be any level in the hierarchy. 
Let $\aa$ be the sequence of adversarial updates, and fix \emph{any} bit stream $B^+$. Consider running our dynamic matching algorithm with $B_{>\ell}$ set to $B^+$.
Let $(x,y)$ be any edge at any time $\tstar$ during the update sequence.
Then: $\PP[$at time $\tstar$, $(x,y)$ is a matching edge at level $\ell$ and $\responsible(x)$ is True$]$ $= O(\log^3(n)/4^\ell)$, where the probability is over all random bits in $B_{\leq \ell}$.
\end{lemma}

We will proceed as follows: We will define a special type of hierarchy change, called pivotal change, and will first show  (Lemma~\ref{lem:pivotal-good}) that if a pivotal change exists before time $\tstar$, then the desired statement of Lemma~\ref{lem:prob-matching-stronger}
holds. Next (Lemma~\ref{lem:pivotal-exists}) we show that a pivotal change exists before time $\tstar$ with high probability.
To do we focus our attention on relevant hierarchy changes, which are formalized in the definitions of $(\ell,v)$-critical changes and $(\ell,v)$-reset-opportunities below.

For the rest of this section, we set $\alpha_\ell = 4000 \cdot \log(n) \cdot 4^\ell$ and $\beta_\ell = \alpha/4 = 1000 \cdot \log(n) \cdot 4^\ell$.
	
\begin{definition}
Let $x,y,\tstar,\ell,B^+$ be the variables from the statement of Lemma \ref{lem:prob-matching-stronger}. Consider some execution of the algorithm with $B_{>\ell}$ set to $B^+$. Let $\elemma$ refer to  the property that $(x,y)$ is a matching edge at level $\ell$ and that $\responsible(x)$ is true and let $\elemma_{\tstar}$ be the event that $\elemma$ holds at time~$\tstar$. Note that Lemma \ref{lem:prob-matching-stronger} is equivalent to the statement that $\PP[\elemma_{\tstar}] = O(\log^3(n)/4^\ell)$, where the probability is over all random bits in $B_{\leq \ell}$.
\end{definition}

To characterize which changes in the hierarchy can lead to changes in the matching we introduce the following definition.
\begin{definition}
	\label{dfn:critical}
For any vertex $v$ and any level $\ell$, we say that a hierarchy change $\sigma$ is an $(\ell,v)$-critical change if $\sigma$ is a change above $\ell$ \emph{and} one of the following holds
\begin{enumerate}
	\item $\sigma$ changes $\LL(v)$. \label{critical:level}
	\item For some neighbor $w$ of $v$, $\sigma$ changes $\LL(w)$ from $\ell+1$ to $\ell$. \label{critical:fall}
	\item For some neighbor $w$ of $v$, $\sigma$ raises $w$ from a level $\leq \ell$ to a level $> \ell$. \label{critical:rise}
	\item $\sigma$ is an adversarial insertion/deletion of some edge $(v,w)$ incident to $v$ (regardless of level). \label{critical:adversary}
\end{enumerate}
\end{definition}

We need this definition for the following reason: only a $(\LL(v),v)$-critical change can make the matched edge $(v,u)$ with responsible $v$ unmatched, as shown in the next lemma.
\begin{lemma}\label{lem:reasons for becoming unmatched}
A matched edge $(u,v)$ with responsible $v$ becomes unmatched only after a $(\LL(v),v)$-critical change.
\end{lemma}
\begin{proof}
A matched edge $(u,v)$ with responsible $v$ becomes unmatched only if (a) it is deleted, 
(b) one of its endpoints $u$ or $v$ moves to a higher level, or (c) a $\ResetMatching(v)$
was executed. Note that a $\ResetMatching(u)$ would have no effect as $u$ is not responsible for the matched edge.

Furthermore $\ResetMatching(v)$ is only called if a neighbor of $v$ falls from level 
$\LL(v)+1$ to level $\LL(v)$ or an edge incident to $v$ is inserted.

We can summarize this as follows:
A matched edge $(u,v)$ with responsible $v$ becomes unmatched only if (a) an edge incident to $v$ is inserted or if the matched edge incident to
$v$ is deleted, 
(b) either $u$ or $v$ moves to a higher level, or (c) a neighbor of $v$ falls from level 
$\LL(v)+1$ to level $\LL(v)$.
Note that all of these changes are $(\LL(v),v)$-critical.
Thus a matched edge $(u,v)$ with responsible $v$ becomes unmatched only after a $(\LL(v),v)$-critical change.
\end{proof}

We also need to characterize after what type of hierarchy changes a $\ResetMatching$-operation can be executed. This is the reason for the following definition.
\begin{definition}\label{dfn:reset opportunity}
For any vertex $v$ and level $\ell$, we say that a hierarchy change $\sigma$ is an $(\ell,v)$-reset-opportunity if one of the following holds:
\begin{enumerate}
	\item For some neighbor $w$ of $v$, $\sigma$ changes $\LL(w)$ from $\ell+1$ to $\ell$. \label{reset:neighbor}
	\item $\sigma$ is an adversarial insertion of some edge $(v,w)$ incident to $v$ (regardless of level). \label{reset:adversary}
\end{enumerate}
\end{definition}

\begin{lemma}
A $\ResetMatching(v)$ operation is executed only after a $(\LL(v),v)$-reset opportunity.
\end{lemma}
\begin{proof}
The operation $\ResetMatching(v)$ is called either after an adversarial insertion of an edge
incident to $v$ or if a neighbor of $v$ drops from level $\LL(v)+1$ to $\LL(v)$.
\end{proof}

\begin{observation} \label{obs:reset}
If $v$ is at level $\ell$ with $\responsible(v)$ set to True, and the algorithm performs a change $\sigma$ which is a $(\ell,v)$-reset opportunity, then with probability $\preset_\ell = 1/4^{\ell+3}$ the algorithm does \ResetMatching{$v$}. (See line \ref{step:fall-reset} of Algorithm \ref{alg:matching 2} and line \ref{step:insert-reset} of Algorithm \ref{alg:matching 1}.)
\end{observation}

\subsubsection{Intuition for the Proof of Lemma \ref{lem:prob-matching-stronger}} 
\label{subsec:proof-stronger}
With all our definitions in place, we now give a brief intuition for the proof of Lemma \ref{lem:prob-matching-stronger}. Let $\sigma_1, \ldots, \sigma_{\qstar}$ be all the hierarchy changes performed by the algorithm up to time $\tstar$. Recall that we want to bound the probability that $\elemma$ holds after change $\sigma_{\qstar}$. Let's focus on any change $\sigma_i$, and consider two cases. 

{\bf Case 1:} there are many $(\ell,x)$-critical changes between $\sigma_i$ and $\sigma_{\qstar}$. 
Note that whether or not we fall in Case 1 is independent of any random choices made after $\sigma_i$ (recall that only bits from $B_{\leq \ell}$ are random), because these $(\ell,x)$-critical changes are by definition changes above~$\ell$, so we can apply Corollary \ref{cor:above-independence}. 
We will show that in Case 1, either one of these critical changes causes $x$ to change level (and hence to pick a new mate), or they will cause so many $(\ell,x)$ reset-opportunities that by Observation \ref{obs:reset} there is a high probability that the algorithm will perform a \ResetMatching{$x$} before change $\sigma_{\qstar}$. Either way, $x$ will pick a new random mate at some point between $\sigma_i,\ldots, \sigma_{\qstar}$, so the matching at change $\sigma_i$ bears no relevance to the matching at change $\sigma_{\qstar}$. Any $\sigma_i$ in Case 1 can thus be effectively ignored.

{\bf Case 2:} there are few $(\ell,x)$-critical changes between $\sigma_i$ and $\sigma_{\qstar}$. For simplicity, let us assume that none of these  $(\ell,x)$-critical changes change the level of $x$, and when $\sigma_i$ is performed, $\elemma$ is false. For $\elemma$ to become true, some hierarchy change between $\sigma_i$ and $\sigma_{\qstar}$ must choose $(x,y)$ as the matching. By Matching Property*, we know that any one particular \GenericRandomSettle{$x,\ell$} has only a small chance of picking $(x,y)$; to complete the proof we will show that because we are in Case 2, there are (with high probability) few executions of \GenericRandomSettle{$x,\ell$} between $\sigma_i$ and $\sigma_{\qstar}$. The reason there are few executions is that if \GenericRandomSettle{$x,\ell$} picks some matching edge $(x,z)$, then the only way the algorithm calls a new \GenericRandomSettle{$x,\ell$} is if the edge $(x,z)$ is removed from the matching due to an adversarial deletion or a hierarchy change. We will show that this can only occur due to a $(\ell,x)$-critical change, and that moreover, any $(\ell,x)$-critical change is unlikely to affect $(x,z)$ because $z$ was chosen at random from among many choices, and the $(\ell,x)$-critical changes are independent of this random choice (Corollary \ref{cor:above-independence}). Since there are few $(\ell,x)$-critical changes remaining (because we are in Case 2), they are unlikely to cause many executions of \GenericRandomSettle{$x,\ell$}.

\paragraph{Formalizing the Proof}

We now define the notion of a pivotal change, which corresponds to a hierarchy change that satisfies the assumptions of Case 2 in the above intuition.

\begin{definition} 
	\label{dfn:pivotal}
Define $x,y,\ell,B^+,\tstar$ as in the statement of Lemma \ref{lem:prob-matching-stronger}. Let $\sigma_1, \ldots, \sigma_{\qstar}$ be the sequence of hierarchy changes up to time $\tstar$. Consider some execution of the dynamic matching algorithm with $B_{>\ell}$ set to $B^+$. We say that some  change $\sigma$ performed by the algorithm is \emph{pivotal for time $\tstar$} if it satisfies \emph{all} of the following properties. 
\begin{enumerate}
	\item \label{pivotal:number} The number of $(\ell,x)$-critical changes between $\sigma$ and $\sigma_{\qstar}$ is at most $\alpha_\ell$. (Recall that $\alpha_\ell = 4000 \cdot \log n \cdot 4^{\ell}$.)
	\item \label{pivotal:level} There are no $(\ell,x)$-critical changes between $\sigma$ and $\sigma_{\qstar}$ that alter the level of~$x$. (Note that $x$ may still move levels due to hierarchy changes that are not above~$\ell$.)
	\item \label{pivotal:false} $\elemma$ is false right after change $\sigma$.
\end{enumerate}
Technical note: the lemmas and definitions are made cleaner if at the very beginning of the algorithm (when the graph is still empty) we insert a dummy update $\sigmadummy$ that does nothing; this is solely to allow for the possibility that the so-to-speak 0th update $\sigmadummy$ is itself pivotal.
\end{definition}

The crucial property of a change $\sigma$ that is pivotal for time $\tstar$ is  as follows: whether $\sigma$ is pivotal for $\tstar$ or not only depends on (i) the hierarchy at 
the time of change $\sigma$, (ii) the above-$\ell$ changes between $\sigma$ and time $\tstar$, and (iii) adversarial updates.
\begin{lemma}\label{lem:pivotal-prop}
Consider a change $\sigma= \sigma_i$ with $i < \tstar$. Whether $\sigma$ is pivotal for time $\tstar$ is uniquely determined by
(i) the hierarchy at 
the time of change $\sigma$ including the $\responsible$ bits, (ii) the above-$\ell$ changes between $\sigma$ and time $\tstar$, and (iii) adversarial updates.
\end{lemma}
\begin{proof}
We will show that each of the properties of a pivotal change  depend only on the hierarchy at the time of change $\sigma$, the above-$\ell$ changes between $\sigma$ and time $\tstar$, and the adversarial updates.

For Property (1) note that whether a change is $(\ell,x)$-critical only depends on whether the change is above $\ell$ and whose node's level it changes (if any), and whether it is an adversarial update. All this can be determined for a change up to time $\tstar$ if the information in (i)--(iii) is known. Thus, with information (i)--(iii) it is uniquely determined whether there are at most $\alpha_{\ell}$ $(\ell,x)$-critical changes between $\sigma$ and $\sigma_{q^*}$, i.e., whether Property (1) holds.

Property (2) guarantees that any $(\ell, x)$-critical changes that occur after $\sigma$ 
and up to time $\tstar$
\emph{must} be of type (2)--(4) of the definition of a critical change. As for Property (1) whether a change is $(\ell,x)$-critical and whether it changes the level of $x$ is uniquely determined by information (i)--(iii).

Property (3) of a pivotal change, i.e., whether $\elemma$ is false right after $\sigma$ 
is uniquely determined by  the hierarchy right before change $\sigma$ including the $\responsible$ bits as well as on $\sigma$.
\end{proof}
Note that it follows from the lemma that whether a change $\sigma_i$ with $i < \tstar$ is pivotal depends only on information (i)---(iii) from the lemma. 
We now proceed as follows. Lemma \ref{lem:pivotal-good} will show that Lemma \ref{lem:prob-matching-stronger} holds if we condition on the existence of a pivotal change for $\tstar$; this corresponds to Case 2 in the intuition section above. Lemma \ref{lem:pivotal-exists} will then show that a pivotal change exists with high probability: this corresponds to the intuition above that Case 1 can be effectively ignored, and only Case 2 is relevant.

\begin{lemma}
	\label{lem:pivotal-good}
	Define $x,y,\ell,B^+,\tstar$ as in the statement of Lemma \ref{lem:prob-matching-stronger}.  Consider some execution of the dynamic matching algorithm with $B_{>\ell}$ set to $B^+$, and condition on the fact that during the execution of the algorithm up to time $\tstar$ the algorithm encounters a change $\sigmapivot$ that is pivotal for time $\tstar$. Then, $\PP[$at time $\tstar$, $(x,y)$ is a matching edge that was chosen by some \GenericRandomSettle{$x,\ell$}$]$ $= O(\log^3(n)/4^\ell)$, where the probability is over all random bits in $B_{\leq \ell}$.
\end{lemma}

\begin{proof}
	Let $\sigma'_1, \ldots, \sigma'_\alpha$ be the sequence of $(\ell,x)$-critical changes between change $\sigmapivot$ and time $\tstar$. By definition of a pivotal change for $\tstar$, we have that $\alpha \leq \alpha_\ell$, and that none of the $\sigma'_i$ change the level of $x$. 
	
Recall that by definition of a pivotal change, $\elemma$ is false after change $\sigmapivot$. Thus, in order for $\elemma$ to become true, at some point between $\sigmapivot$ and $\tstar$ the algorithm must call \GenericRandomSettle{$x,\ell$}, and this call must choose the particular edge $(x,y)$ as the new matching edge. Let $\xsettle$ be the random variable which is equal to the number of times we call \GenericRandomSettle{$x,\ell$} between $\sigmapivot$ and $\tstar$ and let $\esettle$ be the event that $\xsettle \leq C' \log^2(n)$ for some large constant $C'$. The crux of the proof is to show that 
	
\begin{equation}
\label{eq:esettle2}
\PP[\esettle] \geq 1 - 1/n^5
\end{equation} 
	
	Before proving this fact, let us show why it allows us to prove the lemma. We have that 
	
	\begin{equation}
	\PP[\elemma_{\tstar}] = \PP[\elemma_{\tstar} \land \esettle] + \PP[\elemma_{\tstar} \land \lnot \esettle] \leq \PP[\elemma_{\tstar} \land \esettle] + \frac{1}{n^5}.
	\end{equation}
	
	We now bound $\PP[\elemma_{\tstar} \land \esettle]$. By Matching Property*, an execution of  \GenericRandomSettle{$x,\ell$} has probability at most $O(\log(n)/4^{\ell})$ of picking edge $(x,y)$. 
	However, we need a bound on 
	the probability of $(x,y)$ being matched \emph{at time $\tstar$}.
	This is where we rely on the independence proved in Corollary \ref{cor:above-independence}: 
	any call to
	\GenericRandomSettle{$x,\ell$} between $\sigmapivot$ and $\tstar$  (a) uses only  bits from $B_{\leq \ell}$ and (b) depends only on which neighbors belong to $\below{\ell}(x)$
	at the time of the call. Note that
	(a) the used bits are fresh and (b) which neighbors belong to $\below{\ell}(x)$ depends only  on $\spast$, i.e., the sequence of hierarchy changes before this call to
	\GenericRandomSettle{$x,\ell$}. 
	{As shown in Lemma~\ref{lem:pivotal-prop} whether or not $\sigmapivot$ is pivotal 
	for time $\tstar$ depends only on the hierarchy at change $\sigmapivot$ (i.e. $\spast$), future above-$\ell$ changes, and $\aa$.} Thus, it is an event that fulfills the
	requirements in Corollary \ref{cor:above-independence}, which 
	shows that $\mate(x)$ is chosen uniformly at random from $\below{\ell}(x)$. 
	
	Now, by definition of $\esettle$ there are $O(\log^2(n))$ executions of \GenericRandomSettle{$x,\ell$} between $\sigmapivot$ and $\tstar$. Each of these picks $(x,y)$ with probability $O(\log(n)/4^{\ell})$. Thus, by a union bound,
	
	\begin{equation}
	\PP[\elemma_{\tstar} \land \esettle] \leq \PP[\elemma_{\tstar} \cond \esettle] = O \left( \log^2(n) \cdot \frac{\log(n)}{4^\ell} \right) = O \left( \frac{\log^3(n)}{4^\ell} \right)
	\end{equation}
	
	Combining the three equations above completes the proof of the lemma. Thus, all that is left to do is to prove Equation \ref{eq:esettle2}.
	
	\paragraph{Proof of Equation \ref{eq:esettle2}}
	
	Recall that $\esettle$ only considers the time period between $\sigmapivot$ and $\tstar$, and that $\sigma'_1, \ldots,  \sigma'_\alpha$ is the sequence of $(\ell,x)$-critical changes in this time period, with $\alpha \leq \alpha_\ell$.
	Consider some call to \GenericRandomSettle{$x,\ell$} during this time period, which results in some edge~$e$ being chosen as the matching edge and $\responsible(x)$ being set to True.
	The only way that \GenericRandomSettle{$x,\ell$} can be called again after this point is that edge $e$ leaves the matching.
	By Lemma~\ref{lem:reasons for becoming unmatched} this can only happen as the consequence of an $(\ell,x)$-critical change.
	Since by the definition of $\sigmapivot$, none of the $(\ell,x)$-critical changes $\sigma'_i$ alter the level of $x$, only $(\ell,x)$-critical changes of types \ref{critical:fall}, \ref{critical:rise}, and \ref{critical:adversary} from Definition~\ref{dfn:critical} are possible in the sequence $\sigma'_1, \ldots,  \sigma'_\alpha$, i.e., the following types of changes:
\begin{itemize}
	\item For some neighbor $w$ of $x$, $\sigma$ changes $\LL(w)$ from $\ell+1$ to $\ell$.
	\item For some neighbor $w$ of $x$, $\sigma$ raises $w$ from a level $\leq \ell$ to a level $> \ell$.
	\item $\sigma$ is an adversarial insertion/deletion of some edge $(x,w)$ incident to $x$ (regardless of level).
\end{itemize}
	We say the corresponding neighbor $ w $ in such a change is the node \emph{affected} by the change.

	In the following we distinguish between calls to \GenericRandomSettle{$x,\ell$} that are the consequence of an $(\ell,x)$-reset opportunity and those that are not and in the latter case we consider affected node to the call to \GenericRandomSettle{$x,\ell$}.

	\emph{Bounding $\xsettle$.} Recall that $\xsettle$ is the random variable that is equal to the number of times the algorithm calls \GenericRandomSettle{$x,\ell$} between $\sigmapivot$ and $\tstar$.
	Consider the sequence of the algorithm's calls to \GenericRandomSettle{$x,\ell$} and \ResetMatching{$x$} between $\sigmapivot$ and $\tstar$.
	Let $\xreset$ be the random variable that is equal to the number of calls to \GenericRandomSettle{$x,\ell$} that are directly preceded by a call to \ResetMatching{$x$} in this sequence and let $ \xcgrs $ be the random variable that is equal to the number of calls to \GenericRandomSettle{$x,\ell$} that are directly preceded by a call to \GenericRandomSettle{$x,\ell$} in this sequence.
	Observe that trivially
	\begin{equation}
		\xsettle \leq \xreset + \xcgrs + 1 \, .\label{eq:bound on xsettle}
	\end{equation}
	(The plus one is necessary because the sequence of calls to \GenericRandomSettle{$x,\ell$} and \ResetMatching{$x$} might start with a call to \GenericRandomSettle{$x,\ell$} for which then no preceding call exists.)

	\emph{Bounding $\xreset$.} We will first bound $\xreset$ by bounding the corresponding number of calls to \ResetMatching{$x$}.
	For each call to \ResetMatching{$x$} preceding a call to \GenericRandomSettle{$x,\ell$} we necessarily have $\LL(x) = \ell$ and therefore there must be some $(\ell,x)$-reset opportunity $\sigma'_i$ causing the call to \ResetMatching{$x$}.
	Each $(\ell,x)$-reset opportunity has a probability of at most $\preset_{\ell} = \resetprob{\ell}$ of causing a \ResetMatching{$x$}.
	By definition of $\sigmapivot$, there are at most $\alpha_\ell = 4000 \cdot \log(n) \cdot 4^\ell$ $(\ell,x)$-critical changes in the sequence $\sigma'_1, \ldots, \sigma'_\alpha$.
	Thus, applying Chernoff bound and a suitable choice of constants, with probability at least $1 - 1/(2n^5)$ we have that $\xreset = O(\log(n))$.

	\emph{Bounding $\xcgrs$.} We will now bound $\xcgrs$.
	In particular, we will show that with high probability $\xcgrs < k := 1024000 C \log^2 (n) $.
	Note that $ \xcgrs $ is equal to the number of pairs of consecutive calls to \GenericRandomSettle{$x,\ell$} between $\sigmapivot$ and $\tstar$ that are not interleaved with calls to \ResetMatching{$x$}
	Consider a pair of consecutive calls to \GenericRandomSettle{$x,\ell$} that are not interleaved with calls to \ResetMatching{$x$}. By Matching Property* we have $ |N_{<\ell} (x)| \geq \frac{4^\ell}{32 C \log(n)}$ during both these calls to \GenericRandomSettle{$x,\ell$}.
	Define $\gamma := \frac{4^\ell}{32 C \log(n)}$ to be this lower bound on the number of choices for the mate of $ x $.
 	Let  $y_1, y_2, \dots $ be the sequence of nodes from $ N_{<\ell} (x) $ appearing as affected nodes in those $(\ell,x)$-critical changes that are not reset opportunities after the first call to \GenericRandomSettle{$x,\ell$} in the order of first appearance.
      We say that the call to \GenericRandomSettle{$x,\ell$} has \emph{large span} if $\mate(x)$ does not occur within the first (up to) $ \tfrac{\gamma}{2} $ elements of this sequence $ y_1, y_2, \dots$. As none of the reset opportunities after the first call is successful (as there is no \ResetMatching{$x$} before the second \GenericRandomSettle{$x,\ell$}), the fact that the first call has a large span implies that there are at least $\gamma/2$  $(\ell,x)$-critical changes that are not reset opportunities between the two calls.

\begin{claim}\label{claim:large span probability lower bound}
Each call to \GenericRandomSettle{$x,\ell$} has large span with probability at least $ \tfrac{1}{2} $ and this bounds holds independently of which earlier calls had large span.
\end{claim}
     \begin{proof}
	Consider a call to \GenericRandomSettle{$x,\ell$} and let $N:= N_{<\ell} (x)$ when this call happens.
	As a result of this call, the mate $\mate(x)$ of $ x $ is chosen uniformly at random from $ N$ with $|N| \ge \gamma$.
	Let $ y_1, \ldots, y_{\gamma'} $ be the set of nodes from $ N_{<\ell} (x) $ appearing as affected nodes in those $(\ell,x)$-critical changes that are not reset opportunities after the call to \GenericRandomSettle{$x,\ell$} in the order of first appearance.

	Note that $ N$ only depends on the sequence of hierarchy changes before the call to \GenericRandomSettle{$x,\ell$} and the sequence $ y_1, \ldots, y_{\gamma'} $ only depends on the sequence of hierarchy changes before the call and the above-$ \ell $ hierarchy changes after the call.
	Therefore, we may apply Corollary~\ref{cor:above-independence} by which the probability that a specific $ y_i $ is equal to $ \mate(x) $ is at most $ \tfrac{1}{\gamma} $.
          Note that this holds for any possible $\spast$, and, thus, independent of $\spast$. Hence it holds in particular no matter which previous calls to \GenericRandomSettle{$x,\ell$} had large span.
	It follows that $ \mate(x) $ occurs within the first (up to) $ \tfrac{\gamma}{2} $ elements of the sequence $ y_1, \ldots, y_{\gamma'} $ with probability at most $ \tfrac{\gamma}{2} \cdot \tfrac{1}{\gamma} = \tfrac{1}{2} $ .
	This means that each call to \GenericRandomSettle{$x,\ell$} has large span with probability at least $ \tfrac{1}{2} $.

	Note that this reasoning also shows that whether a call has large span is independent of whether a \emph{previous} call had large span.
\end{proof}
	
We next bound how often calls with large span can happen for	consecutive calls to \GenericRandomSettle{$x,\ell$}. 
\begin{claim}\label{claim:number of pairs of consecutive calls}
There are at most $ \tfrac{1}{4} k $ pairs of consecutive calls to \GenericRandomSettle{$x,\ell$} between $\sigmapivot$ and $\tstar$ that are not interleaved with any call to \ResetMatching{$x$} such that the first of these calls has large span.
\end{claim}
\begin{proof}
 Consider any pair of consecutive calls to \GenericRandomSettle{$x,\ell$} between $\sigmapivot$ and~$\tstar$ that are not interleaved with any call to \ResetMatching{$x$}.
	If the first of these two calls has large span, then there are at least $ \tfrac{\gamma}{2} = \tfrac{4^\ell}{64 C \log(n)} $ $(\ell,x)$-critical changes that are not reset opportunities between these two calls, as argued above.
	Since the total number of $(\ell,x)$-critical changes that are not reset opportunities between $\sigmapivot$ and~$\tstar$ is at most $\alpha_\ell = 4000 \cdot \log(n) \cdot 4^\ell$, it must be the case that the situation above occurs at most $ \tfrac{\alpha_\ell}{\gamma/2} = 256000 C \log^2(n) = \tfrac{1}{4} k $ times.
\end{proof}

Recall that $\xcgrs$ is the total number of pairs of consecutive calls to \GenericRandomSettle{$x,\ell$} between $\sigmapivot$ and $\tstar$ that are not interleaved with any call to \ResetMatching{$x$}.
	The claim together with the fact that each call to \GenericRandomSettle{$x,\ell$} has large span with probability at least $ \tfrac{1}{2} $ gives us a bound on $\xcgrs$  as follows:
	For each $ i \geq 1 $, define $ Z_i $ as the binary random variable that (1) if $ i \leq \xcgrs $ and in the $i$-th pair of consecutive calls to \GenericRandomSettle{$x,\ell$} the first call to \GenericRandomSettle{$x,\ell$} has large span is $ 1 $, (2) if $ i \leq \xcgrs $ and in the $i$-th pair of consecutive calls to \GenericRandomSettle{$x,\ell$} the first call to \GenericRandomSettle{$x,\ell$} does not have large span is $ 0 $, and (3) if $ i > \xcgrs $ is $ 1 $ with probability $ \tfrac{1}{2} $ and $ 0 $ otherwise.
	Furthermore, define, for each $ i \geq 1 $, $ Z_i^* $ as the binary random variable that is $ 1 $ with probability $ \tfrac{1}{2} $ and $ 0 $ otherwise.
	
	We will now show that $ \Pr [\xcgrs \geq k] \leq \Pr [\sum_{i=1}^{k} Z_i \leq \tfrac{1}{4} k] $ by arguing that	the event $ \xcgrs \geq k $ implies the event $ \sum_{i=1}^k Z_i \leq \tfrac{1}{4} k $.
	Observe that this implication is equivalent to the statement $ \Pr [\sum_{i=1}^k Z_i \leq \tfrac{1}{4} k \cond \xcgrs \geq k] = 1 $.
	Given that $ \xcgrs \geq k $, it follows from the definition above that for each $ i \leq k $, $ Z_i $ is $ 1 $ if in the $i$-th pair of consecutive calls to \GenericRandomSettle{$x,\ell$} the first call to \GenericRandomSettle{$x,\ell$} had large span.
	In other words, conditioned on the event $ \xcgrs \geq k $, $ \sum_{i=1}^k Z_i $ is precisely the number of pairs of consecutive calls to \GenericRandomSettle{$x,\ell$} between $\sigmapivot$ and~$\tstar$ that are not interleaved with any call to \ResetMatching{$x$} and in which the first call to \GenericRandomSettle{$x,\ell$} has large span.
	By Claim~\ref{claim:number of pairs of consecutive calls} we have that this number is at most $ \tfrac{1}{4} k $, i.e., conditioned on $ \xcgrs \geq k $ we have $ \sum_{i=1}^k Z_i \leq \tfrac{1}{4} k $ as desired.
	
	Now by Claim~\ref{claim:large span probability lower bound} each $ Z_i $ with $ i \leq \xcgrs $ is $ 1 $ with probability at least $ \tfrac{1}{2} $ \emph{independent} of the outcomes of the random variables with smaller index, and for $ i > \xcgrs $ this property obviously holds as well.
	More formally, the following holds for all $z_1, \dots z_{i-1} \in \{0,1\} $:
	\begin{equation*}
		\PP [Z_i = 1 \cond Z_1 = z_1, \dots, Z_{i-1} = z_{i-1}] \geq \frac{1}{2} = \PP [Z_i^* = 1] \, .
	\end{equation*}
	Under this precondition, the sum of the $ Z_i $'s stochastically dominates the sum of the $ Z_i^* $'s (see Lemma 1.8.7 in~\cite{doerr}), i.e., $ \PP [\sum_{i=1}^k Z_i \leq \lambda] \leq \PP [\sum_{i=1}^k Z_i^* \leq \lambda] $ for all $ \lambda $.
	Using the shorthand $ \mu := \EX [\sum_{i=1}^k Z_i^*] = \tfrac{1}{2} k $ we now apply a standard Chernoff bound to get the following estimation:
	\begin{align*}
		\PP [\xcgrs \geq k] &\leq \PP \left[ \sum_{i=1}^{k} Z_i \leq \frac{1}{4} k \right] = \PP \left[ \sum_{i=1}^{k} Z_i \leq \frac{1}{2} \mu \right] \\ &\leq \PP \left[ \sum_{i=1}^k Z_i^* \leq \frac{1}{2} \mu \right] \leq \exp \left( - \frac{1}{8} \mu \right) = \exp \left( - \frac{1}{16} k \right) \leq \frac{1}{2 n^5} \, .
	\end{align*}
	Overall, we have thus argued that with probability at least $ 1 - \tfrac{1}{2 n^5} $ both $\xreset$ and $\xcgrs$, and thus $\xsettle$ by~\eqref{eq:bound on xsettle}, are at most $ O (\log^2 n) $.
\end{proof}

\begin{lemma}
	\label{lem:pivotal-exists}
Define $x,y,\ell,B^+,\tstar$ as in the statement of Lemma \ref{lem:prob-matching-stronger}.  Consider some execution of the dynamic matching algorithm with $B^+ = B_{>\ell}$. Then, $\PP[$at some point during the execution the algorithm encounters a pivotal change $\sigmapivot$ for time $\tstar] \geq 1 - 1/n^{10}$, where the probability is over all random bits in $B_{\leq \ell}$.
\end{lemma}

\begin{proof}
Let $\sigma^\ell_1, \ldots \sigma^\ell_{q}$ be all the $(\ell,x)$-critical changes made by the algorithm up to time $\tstar$.
We next define two $(\ell,x)$-critical changes and argue about their relative order.
(a)  If $q \leq \alpha_\ell$, we set $\sigmagap = \sigmadummy$, where $\sigmadummy$ is the empty update defined in Definition \ref{dfn:pivotal}. 
If $q> \alpha_\ell$, 
let $\sigmagap$ be the $(\ell,x)$-critical change $\sigma^\ell_{q- \alpha_\ell}$; note that this is chosen to satisfy Property \ref{pivotal:number} of the definition of a pivotal change (Definition \ref{dfn:pivotal}).
(b) Let $\sigmalevel$ be the last $(\ell,x)$-critical change that changes $\LL(x)$ and $\sigmalevel = \sigmadummy$ if no such change exists. It is clear that $\sigmalevel$ satisfies property \ref{pivotal:level} of the definition of a pivotal change for $\tstar$. Observe that $\sigmalevel$ also satisfies property \ref{pivotal:false} because our algorithm only alters the level of a \emph{free} vertex, so $x$ is free right after $\sigmalevel$, which implies that $\elemma$ is false at that time. (This last argument might feel like cheating, since soon after $\sigmalevel$ the algorithm will assign a new matching edge to $x$; but these future matching edges were already handled in Lemma \ref{lem:pivotal-good}, where we bounded the probability that $\elemma$ becomes true at some point before $\tstar$.) 

We now consider two cases. The simple case is that $\sigmalevel$ comes after, or is the same as, $\sigmagap$. In this case, $\sigmalevel$ satisfies all the properties of a pivotal change for $\tstar$, thus proving the lemma.

The second case is that $\sigmagap$ comes after $\sigmalevel$. In this case $\sigmagap$ satisfies Properties \ref{pivotal:number} and \ref{pivotal:level}, but may fail to satisfy property \ref{pivotal:false}. If $\sigmagap  = \sigmadummy$, then since $\elemma$ is clearly false after $\sigmadummy$ (the graph is still empty), $\sigmadummy$ is itself pivotal for $\tstar$. Also note that $\elemma$ is false before and after $\sigmagap$ if $\LL(x) \neq \ell$ at time $\sigmagap$.

{\bf Assumption:} We can thus assume for the rest of the proof that $\sigmagap \neq \sigmadummy$, that $\LL(x) = \ell$ right before and right after $\sigmagap$ and $x$ remains on this level until $\tstar$.

Note that if there exists \emph{any} change, let's call it $\sigmapivot$, between $\sigmagap$ and $\tstar$ such that $\elemma$ is false after change $\sigmapivot$, then $\sigmapivot$ satisfies all the properties of a pivotal change for $\tstar$, as all changes after change $\sigmagap$
fulfill properties \ref{pivotal:number} and \ref{pivotal:level}
of a pivotal change for $\tstar$. 
Let $\epivot$ be the event that such a $\sigmapivot$ exists; we now show that $\epivot$ is true with high probability. 
The crux of our argument is to show that there are many 
$(\ell,x)$-reset-opportunities between $\sigmagap$ and $\tstar$: each of them performs a \ResetMatching{$x$} only with a small probability, but if there are enough of them at least one of them will indeed perform a \ResetMatching($x$), which will imply that $x$ becomes free, i.e. Property \ref{pivotal:false} holds after this change, i.e., it is a
pivotal change for $\tstar$.
Thus we first need to show the following claim. Recall that $\beta_{\ell} = \alpha_{\ell}/4$. Recall that we are in the case $\sigmagap > \sigmalevel$, which implies that there are exactly $\alpha_\ell$ $(\ell,x)$-critical changes between $\sigmagap$ and $\tstar$. 
\begin{claim}
If $\sigmagap > \sigmalevel$  at least $\beta_\ell$ of the $\alpha_{\ell}$ $(\ell,x)$-critical changes between $\sigmagap$ and $\tstar$ are also $(\ell,x)$-reset-opportunities. 
\end{claim}

\begin{proof}[Proof of Claim] Because $\sigmagap > \sigmalevel$, we know that none of the $(\ell,x)$-critical changes between $\sigmagap$ and $\tstar$ change the level of $x$. Thus, each of these critical changes either: {\bf 1)} adds a vertex $w$ to $N_{= \ell}(x)$ by moving $w$ from level $\ell+1$ to level $\ell$ (item \ref{critical:fall} of Definition \ref{dfn:critical}), or {\bf 2)} removes a vertex $w$ from $N_{\leq \ell}(x)$ (item \ref{critical:rise}), or {\bf 3)} it makes an adversarial update incident to $x$ (item \ref{critical:adversary}). 
Let $t_i$ be the number of changes of each type. Note that $t_1 + t_2 + t_3 = \alpha_\ell$.
Types 1 and 3 are $(\ell,x)$ reset opportunities by definition. Now, note that at the start of $\sigmagap$, since $\LL(x) = \ell$ (see assumption above), Invariant 3 guarantees that $|N_{\leq \ell}(x)|  = \below{\ell+1}(x)\leq  4^{\ell + 1}$.  Moreover, $N_{\leq \ell}$ can only grow as a result of changes of type 1 and 3, and only by 1 after each such change. Thus,  $ t_2 \leq$ $4^{\ell + 1} + t_1 + t_3$. 
Thus, it follows that [\# reset opportunities] =  $ t_1 + t_3 \geq (\alpha_\ell - 4^{\ell + 1})/2 \geq \alpha_\ell / 4 = \beta_\ell$. 
\end{proof}

\paragraph{Back to Proof of Lemma \ref{lem:pivotal-exists}}
Let $\sigma^*_1, \ldots, \sigma^*_{\beta_\ell}$ be $(\ell,x)$-critical changes between $\sigmagap$ and $\tstar$ that are also $(\ell,x)$-reset opportunities; these changes exist by the claim above (there may be more than $\beta_\ell$ such changes, in which case pick any $\beta_\ell$). 
Our goal is to show that $\elemma$ is false after one of these changes whp, which implies that $\epivot$ holds whp.

Let $\ebad_i$ be the event that all of the following hold right before change $\sigma^*_i$: $\responsible(x)$ is True and $\LL(x) = \ell$ and the change $\sigma^*_i$ does not lead to \ResetMatching{$x$}. 
Recall that by our assumption we know that $\LL(x) = \ell$  for each of the $\sigma^*_i$ that we are considering.
We need to show that with high probability  after at least one of the changes $\sigma^*_i$ 
$\elemma$ is false, i.e.  $\responsible(x) =$ False or 
\ResetMatching($x$) is executed in the change.
Event $\epivot$ is false if no such change exists. Thus $\epivot$ can only be false if \emph{all} the $\ebad_i$ are true. We thus have

$$\PP[\lnot \epivot] \leq \PP[\ebad_1 \land \ldots \land \ebad_{\beta_\ell}] \leq \ebad_1 \cdot \prod_{i=2}^{\beta_\ell} \PP[\ebad_i \cond \ebad_1 \land \ldots \land \ebad_{i-1}]$$

We now observe that $\PP[\ebad_1] \leq (1 - \preset_\ell) = (1-1/4^{\ell + 3})$ and $\PP[\ebad_i \cond \ebad_1 \land \ldots \land \ebad_{i-1}] \leq (1 - \preset_\ell) = (1-1/4^{\ell + 3})$. The reason is simply that each $\sigma^*_i$ is a $(\ell,x)$ reset-opportunity, so either $\responsible(x)$ is False at change $\sigma^*_i$, in which case $\ebad_i$ is false by definition, or otherwise by Observation \ref{obs:reset} the algorithm performs \ResetMatching$(x)$ with probability $\preset_\ell$. Moreover, the probability of \ResetMatching{$x$} occurring is using a fresh random bit and, thus, is clearly independent of everything that came before. 

Putting everything together, we have that 
$$\PP[\lnot \epivot] \leq (1 - \frac{1}{4^{\ell + 3}})^{\beta_\ell} = (1 - \frac{1}{64 \cdot 4^{\ell}})^{1000\cdot\log(n)\cdot4^\ell} \leq \frac{1}{n^{10}}.$$
Note that this crucially relies on Corollary \ref{cor:above-independence} to ensure that the probability of \ResetMatching{$x$} is not correlated with any of the future $(\ell,x)$-critical changes.	
\end{proof}

\begin{proof}[Proof of Lemma \ref{lem:prob-matching-stronger}]
The proof follows immediately from the combination of Lemmas \ref{lem:pivotal-exists} and \ref{lem:pivotal-good}
\end{proof}

\subsubsection{Putting Everything Together}

We are now ready to prove that the algorithm processes \emph{every} adversarial update in expected time $O(\log^4(n))$.

\begin{proof}[Proof of Theorem~\ref{thm:matching-expected}]
Let us first consider the insertion of a new edge $(u,v)$. The algorithm has to update some subset of $\OO_u, \OO_v, \EE_u^{\LL(v)}, \EE_v^{\LL(u)}$ in $O(1)$ time,
and then it has to perform $O(\log(n))$ calls to \IncrementPhi{$u,i$} and \IncrementPhi{$v,i$} (line \ref{step:insert-rise} or Algorithm \ref{alg:matching 1}). Each \IncrementPhi{$v,i$} has a 
$\prise = \Theta(\log(n)/4^i)$ chance of leading to a probability-rise, and a negligible probability of leading to a threshold-rise 
(Lemma~\ref{lem:threshold-rise}), so plugging in the cost of a call to \Rise from Lemma~\ref{lem:rise}, 
the expected total time to process the rises is 
$O(\sum_{i=0}^{\floor{\log_4(n)}} (\log(n)/4^i) \cdot 4^i = O(\log^2(n))$.
\Insert($x,y$) also causes \ResetMatching{$u$} and \ResetMatching{$v$} with Probabilities $\preset_{\LL(u)} = \resetprob{\LL(u)}$ and $\preset_{\LL(v)} = \resetprob{\LL(v)}$. By lemma \ref{lem:emax} the expected time to process the resets is $O(4^{\LL(u)})$ and $O(4^{\LL(v)})$; multiplying by the reset probabilities yields an expected time of $O(1)$.

Let us now consider the deletion of an edge $(u,v)$. If $(u,v)$ is a non-matching edge, then as in the case of insertion, the algorithm
only needs to perform $O(1)$ bookkeeping work and $O(\log(n))$ calls to \DecrementPhi{$u,i$}; calls to \DecrementPhi do not
lead to changes in the hierarchy, so the algorithm stops there. Thus the only case left to consider is the deletion of a matching edge $(u,v)$.
By Invariant~4 $\LL(u) = \LL(v)$, so let us say they are both equal to $\ell$. The deletion of $(u,v)$ requires the algorithm to execute 
\FixFreeVertex{$u$} and \FixFreeVertex{$v$}, which by Corollary~\ref{cor:emax} requires time $O(4^\ell)$. By Corollary~\ref{cor:prob-matching},
the total expected update time is thus 
$O(\sum_{\ell=0}^{\floor{\log_4(n)}} 4^\ell \cdot (\log^3(n)/4^\ell)) = O(\log^4(n))$.
\end{proof}

\paragraph{Explicitly Maintaining a List of the Edges in the Matching.}

Both the amortized expected algorithm of Baswana et al. \cite{BaswanaGS18}, as well as our worst-case expected modification in 
Theorem~\ref{thm:matching-expected}, store the matching in the simplest possible data structure $D$: they are both able to
maintain a single list containing all the edges of a maximal matching. By Remark~\ref{remark:main}, the high-probability worst-case result in~\ref{thm:matching-converted} stores the matching in a slightly different data structure: it
stores $O(\log(n))$ lists $D_i$, along with a pointer to some $D_j$ such that $D_j$ is guaranteed to contain the edges of a maximal matching. Dynamic algorithms are typically judged by update and query time, and from this perspective 
our data structure is equivalently powerful, since we can use the correct $D_j$ to answer queries about the matching.

However, in some applications, it is desirable to maintain the matching as a single list. The reason is that this way one
ensures ``continuity'' between the updates: for example, the $O(\log(n))$ update time of Baswana \etal\ guarantees
that every update only changes the underlying maximal matching by $O(\log(n))$ edges (amortized).
This is no longer true of our high-probability worst-case algorithm in Theorem~\ref{thm:matching-converted}, because
a single update might cause the algorithm to switch the pointer from some $D_i$ to some $D_j$: this still results in a fast
update time, but the underlying maximal matching can change by $\Theta(n)$ edges. 

As discussed in Remark~\ref{remark:main}, if we insist on maintaining a single list of edges in the matching, we can do so with almost the same high-probability worst-case update time as stated in Theorem~\ref{thm:matching-converted}, but the resulting matching is only $(2+\eps)$-approximate, and no longer maximal.
This $(2+\eps)$-approximation is achieved as follows.
The algorithm of Theorem~\ref{thm:matching-converted} stores $O(\log(n))$ lists $D_i$, one of which is guaranteed to be a maximal matching.
In particular, this algorithm maintains a fully dynamic data structure with query access to a $2$-approximate matching that can output $ \ell $ arbitrary edges of the matching in time $ O (\ell) $.
The very recent black-box reduction in \cite{SolomonS18} takes such a ``discontinuous'' algorithm for dynamic maximum matching and turns it into a ``continuous'' one at the cost of an extra $(1+\eps)$ factor in the approximation.
By applying this reduction with $ \epsilon' = \epsilon/2 $ we obtain a fully dynamic algorithm for maintaining a matching with an approximation factor of $ 2 (1+\eps/2) = (2+\eps)$ and a high-probability worst-case update time of $O(\log^6(n) + 1/\epsilon)$.
The reduction of~\cite{SolomonS18} also applies to the dynamic $ (2 + \epsilon) $-approximate matching algorithms of Arar et al.~\cite{ArarCCSW18} and Charikar and Solomon~\cite{CharikarS18}, whose update times we can beat for certain regimes of $ \epsilon $.

\section{Conclusion}

In this article, we have provided a meta-algorithm that converts dynamic algorithms with a bound on the worst-case expected update time
 into ones with a high-probability bound on the worst-case time at the expense of logarithmic factors in the update time.
We have then applied this reduction to two graph problems: dynamic spanner and dynamic maximal matching.
Our main observation for these two problems was that certain deterministic amortization techniques in the known algorithms can be replaced by randomized ones to obtain a worst-case expected bound instead of an amortized one.
We conjecture that this approach also works for other graph problems.
Additionally it would be interesting to have more meta-theorems for dynamic graph algorithms in addition to sparsification~\cite{EppsteinGIN97} and the expected to high-probability conversion presented here.


\printbibliography[heading=bibintoc] 

\end{document}